\def \IsDraft{} 
\definecolor{internallinkcolor}{rgb}{0,.5,0}
\providecommand{\remove}[1]{}
    \newcommand{\authnote}[2]{{\bf [{\color{red} #1's Note:} {\color{blue} #2}]}}
    \newcommand{\authnote}[2]{}
\titleclass{\subsubsubsection}{straight}[\subsection]
\newcounter{subsubsubsection}[subsubsection]
\renewcommand\thesubsubsubsection{\thesubsubsection.\arabic{subsubsubsection}}
\renewcommand\paragraph{\@startsection{paragraph}{5}{\z@}%
	{3.25ex \@plus1ex \@minus.2ex}%
	{-1em}%
	{\normalfont\normalsize\bfseries}}
\renewcommand\subparagraph{\@startsection{subparagraph}{6}{\parindent}%
	{3.25ex \@plus1ex \@minus .2ex}%
	{-1em}%
	{\normalfont\normalsize\bfseries}}
\def\toclevel@subsubsubsection{4}
\def\toclevel@paragraph{5}
\def\toclevel@paragraph{6}
\def\l@subsubsubsection{\@dottedtocline{4}{7em}{4em}}
\def\l@paragraph{\@dottedtocline{5}{10em}{5em}}
\def\l@subparagraph{\@dottedtocline{6}{14em}{6em}}
\newenvironment{mybox}{\begin{center}\begin{tabular}{|p{0.97\linewidth}|c|}   \hline} {  \\ \hline \end{tabular} \end{center}}
\let\originalleft\left
\let\originalright\right
\renewcommand{\left}{\mathopen{}\mathclose\bgroup\originalleft}
\renewcommand{\right}{\aftergroup\egroup\originalright}
\newcommand{\ie}  {i.e.,\xspace}
\newcommand{\eg}  {e.g.,\xspace}
\newcommand{\wrt} {with respect to\xspace}
\newcommand{\wlg} {without loss of generality\xspace}
\newcommand{\cf}{cf.,\xspace}
\newcommand{\vecc}[1]{\left\lVert #1 \right\rVert}
\newcommand{\set}[1]{\left\{#1\right\}}
\newcommand{\eqdef}{:=}
\newcommand{\N}{{\mathbb{N}}}
\newcommand{\zo}{\set{0,1}}
\newcommand{\zn}{{\zo^n}}
\newcommand{\zs}{{\zo^\ast}}
\newcommand{\zl}{{\zo^\ell}}
\newcommand{\condition}{\;\ifnum\currentgrouptype=16 \middle\fi|\;}
\newcommand{\getsr}{\mathbin{\stackrel{\mbox{\tiny R}}{\gets}}}
\newcommand{\eps}{\varepsilon}
\newcommand{\la}{\gets}
\newcommand{\Exp}{\operatorname*{E}}
\newcommand{\Hall}{\operatorname{H}}
\newcommand{\Hmin}{\operatorname{H_{\infty}}}
\newcommand{\negl}{\operatorname{neg}}
\newcommand{\Supp}{\operatorname{Supp}}
\newcommand{\MathFam}[1]{\mathcal{#1}}
\newcommand{\Dom}{\MathFam{D}}
\newcommand{\MathAlg}[1]{\mathsf{#1}}
\newcommand{\Ensuremath}[1]{\ensuremath{#1}\xspace}
\newcommand{\MathAlgX}[1]{\Ensuremath{\MathAlg{#1}}}
\renewcommand{\cref}{\Cref}
\newtheorem{theorem}{Theorem}[section]
\newaliascnt{lemma}{theorem}
\newtheorem{lemma}[lemma]{Lemma}
\crefname{lemma}{Lemma}{Lemmas}
\newaliascnt{claima}{theorem}
\newtheorem{claima}[claima]{Claim}
\crefname{claima}{Claim}{Claims}
\newaliascnt{constructiona}{theorem}
\newtheorem{constructiona}[constructiona]{Construction}
\crefname{constructiona}{Construction}{Constructions}
\newaliascnt{claim}{theorem}
\crefname{claim}{Claim}{Claims}
\newaliascnt{corollary}{theorem}
\crefname{corollary}{Corollary}{Corollaries}
\newaliascnt{construction}{theorem}
\crefname{construction}{Construction}{Constructions}
\newaliascnt{fact}{theorem}
\crefname{fact}{Fact}{Facts}
\newaliascnt{proposition}{theorem}
\newtheorem{proposition}[proposition]{Proposition}
\crefname{proposition}{Proposition}{Propositions}
\newaliascnt{conjecture}{theorem}
\crefname{conjecture}{Conjecture}{Conjectures}
\newaliascnt{definition}{theorem}
\newtheorem{definition}[definition]{Definition}
\crefname{definition}{Definition}{Definitions}
\newaliascnt{notation}{theorem}
\crefname{notation}{Notation}{Notation}
\newaliascnt{assertion}{theorem}
\crefname{assertion}{Assertion}{Assertion}
\newaliascnt{assumption}{theorem}
\crefname{assumption}{Assumption}{Assumption}
\newaliascnt{remark}{theorem}
\crefname{remark}{Remark}{Remarks}
\newaliascnt{question}{theorem}
\crefname{question}{Question}{Questions}
\newaliascnt{example}{theorem}
\crefname{exmaple}{Example}{Examples}
\crefname{equation}{Equation}{Equations}
\newaliascnt{proto}{theorem}
\newtheorem{proto}[proto]{Protocol}
\crefname{proto}{protocol}{protocols}
\newaliascnt{algo}{theorem}
\newtheorem{algo}[algo]{Algorithm}
\crefname{algo}{algorithm}{algorithms}
\newaliascnt{expr}{theorem}
\newtheorem{expr}[expr]{Experiment}
\crefname{experiment}{experiment}{experiments}
\def\FullBox{$\Box$}
\def\qed{\ifmmode\qquad\FullBox\else{\unskip\nobreak\hfil
\penalty50\hskip1em\null\nobreak\hfil\FullBox
\parfillskip=0pt\finalhyphendemerits=0\endgraf}\fi}
\def\qedsketch{\ifmmode\Box\else{\unskip\nobreak\hfil
\penalty50\hskip1em\null\nobreak\hfil$\Box$
\parfillskip=0pt\finalhyphendemerits=0\endgraf}\fi}
\newcommand{\ex}[1]{\Ex\left[#1\right]}
\newcommand{\Ex}{{\mathrm E}}
\renewcommand{\Pr}{{\mathrm {Pr}}}
\newcommand{\pr}[1]{\Pr\left[#1\right]}
\newcommand{\ppr}[2]{\Pr_{#1}\left[#2\right]}
\newcommand{\Ac}{\MathAlgX{A}}
\newcommand{\Bc}{\MathAlgX{B}}
\newcommand{\Fc}{\MathAlgX{F}}
\newcommand{\Dc}{\MathAlgX{D}}
\newcommand{\size}[1]{\left|#1\right|}
\newcommand{\Uni}{{\mathord{\mathcal{U}}}}
\newcommand{\prob}[1]{\mathsf{\textsc{#1}}}
\newcommand{\SD}{\prob{SD}}
\providecommand{\cL}{{\cal{L}}}
\newcommand{\ppt}{{\sc ppt}\xspace}
\def\cD{{\cal D}}
\def\cE{{\cal E}}
\def\cL{{\cal L}}
\def\cR{{\cal R}}
\def\cS{{\cal S}}
\def\cU{{\cal U}}
\newcommand{\Tableofcontents}{
\thispagestyle{empty}
\pagenumbering{gobble}
\clearpage
\tableofcontents
\thispagestyle{empty}
\clearpage
\pagenumbering{arabic}
}
\DeclareMathOperator{\Good}{good}
\let\xx@thm\@thm
\DeclareMathSymbol{\shortminus}{\mathbin}{AMSa}{"39}
\newcommand{\g}{\mathcal{G}}
\newcommand{\Gc}{G}
\newcommand{\tO}{\widetilde{O}}
\newcommand{\calF}{\mathcal{F}}
\newcommand{\Inv}{\MathAlg{Inv}}
\newcommand{\MyAtop}[2]{\genfrac{}{}{0pt}{}{#1}{#2}}
\def\state{{\sf state}}
\newcommand{\cLv}{\overline{\cL}}
\newcommand{\Hsam}{\operatorname{H}}
\newcommand{\Hmax}{\operatorname{H_0}}
\newcommand{\Jt}{\widetilde{J}}
\newcommand{\XT}{\widetilde{X}}
\newcommand{\At}{\widetilde{\Ac}}
\newcommand{\ExtendOnex}{\MathAlg{ExtendOne}} 
\newcommand{\Lbar}{\overline{\cL}}
\newcommand{\RNum}[1]{\uppercase\expandafter{\romannumeral #1\relax}}
\newcommand{\SearchHeurBPP}{\class{SearchHeurBPP}}
\newcommand{\polySamp}{\class{PSamp}}
\newcommand{\random}{r}
\newcommand{\Random}{R}
\newcommand{\RelR}{{\mathcal{R}}}
\newcommand{\RelQ}{{\mathcal{Q}}}
\newcommand{\RelV}{{\mathcal{V}}}
\newcommand{\ioS}{i.o.\ }
\newcommand{\YB}{\mathcal{Y}}
\newcommand{\ys}{{y}}
\newcommand{\Canon}{\operatorname{Canon}}
\newcommand{\Hreal}{\Hall^{\operatorname{Real}}}
\newcommand{\Lang}{\mathsf{L}}
\newcommand{\rDelta}{\delta}
\newcommand{\qDelta}{\varepsilon}
\newcommand{\vDelta}{\delta}
\newcommand{\secParR}{n}
\newcommand{\secParQ}{m}
\newcommand{\secParV}{n}
\newcommand{\HSh}{\operatorname{H}}
\newcommand{\mapstocat}{@}
\newcommand{\mapstocdot}{.}
\def\nolimits{} 
\newcommand{\CF}{\MathAlg{CF}}
\newcommand{\eee}{e}
\newcommand{\Image}{\operatorname{Im}}
\newcommand{\kreal}{{k_{\textsc{real}}}}
\title{Inaccessible Entropy II:\\
	IE Functions and Universal One-Way Hashing\thanks{This is the final draft of this paper. The full version was published in the Theory of Computing \cite{HaitnerHolReVaWe20}. An extended abstract of this work appeared  appeared as ``Universal One-Way Hash Functions via Inaccessible Entropy'' in Eurocrypt 2010 \cite{HaitnerHolReVaWe10}.}}
\author{
	Iftach Haitner\thanks{Tel-Aviv University, Tel Aviv, Israel. E-mail:\texttt{iftachh@tau.ac.il}. Initial research and conference version was done while at Microsoft Research New England. Preparation of this journal version supported by ERC starting grant 638121 and Israel Science Foundation grant   666/19.}	
	\and Thomas Holenstein\thanks{Google, Zurich, Switzerland.  E-mail:\texttt{tholenst\mapstocat{}google\mapstocdot{}com}.}
	\and Omer Reingold\thanks{Stanford University, Stanford, California. E-mail:\texttt{reingold\mapstocat{}stanford\mapstocdot{}edu}. Initial research and conference version supported by US-Israel BSF grant 2006060. Preparation of this journal version supported by NSF grant  CCF-1763311.}
\and Salil Vadhan\thanks{Harvard University, Cambridge, Massachusetts. E-mail:\texttt{salil$\_$vadhan\mapstocat{}harvard\mapstocdot{}edu}. Initial research and conference version supported by NSF grants CNS-0831289 and US-Israel BSF grants 2006060 and 2010196. Preparation of this journal version supported by NSF grant CCF-1763299 and a Simons Investigator Award.}
\and Hoeteck Wee\thanks{ENS, Paris, France E-mail:\texttt{wee\mapstocat{}di\mapstocdot{}ens\mapstocdot{}fr}. Most of
	the 
	work was done while at Queens College, CUNY. Supported in part by PSC-CUNY Award \#6014939~40 and NSF CAREER Award CNS-0953626.}
}
\begin{document}
\sloppy
\maketitle

\begin{abstract}
This paper uses a variant of  the notion
of   \emph{inaccessible entropy} (Haitner, Reingold, Vadhan and Wee, STOC 2009), to give an alternative construction and proof for the fundamental result, first proved by Rompel (STOC 1990), that   \emph{Universal One-Way Hash Functions (UOWHFs)} can be based on any one-way functions. We observe that a small tweak of any one-way function $f$ is already a weak form of a UOWHF: consider
the function $F(x,i)$ that returns the   $i$-bit-long   prefix of $f(x)$. If $F$ were a UOWHF then given a random $x$ and $i$ it would be hard to come up with $x'\neq x$ such that $F(x,i)=F(x',i)$. While this may not be the case, we show (rather easily) that it is hard to sample $x'$ with almost full entropy among all the possible such values of $x'$. The rest of our construction simply amplifies and exploits this basic property.Combined with other recent work, the construction of three fundamental cryptographic primitives (Pseudorandom Generators, Statistically Hiding Commitments and UOWHFs) out of one-way functions is now  to a large extent unified. In particular, all three constructions rely on and manipulate computational notions of entropy in similar ways. Pseudorandom Generators rely on the well-established notion of pseudoentropy, whereas Statistically Hiding Commitments and UOWHFs rely on the newer notion of inaccessible entropy. In an additional result we reprove  the seminal result of Impagliazzo and Levin (FOCS 1989): a reduction from ``uniform distribution'' average-case   complexity problems to ones with arbitrary  (polynomial-time samplable)  distributions. We do that using  techniques similar to those   we use to construct UOWHFs from one-way functions,  where the source of this similarity is the use of a notion similar to    inaccessible entropy. This draws an interesting connection between two seemingly separate lines of research: average-case   complexity and universal one-way hash-functions. 
\end{abstract}

\Tableofcontents

\section{Introduction}
Unlike the more common notions of computational entropy, \eg
\emph{pseudoentropy}  \cite{HastadImLeLu99}, that are only useful
as a \emph{lower} bound on
the ``computational entropy'' of a distribution, \emph{accessible entropy}
is an \emph{upper} bound on computational entropy.  
In particular, it measures the entropy
(Shannon, or other types)    
of the distribution
computed  
by a
resource-bounded machine.   

The inaccessible entropy of the
distribution     
is  the gap between its (real)
entropy  and its accessible entropy.
Inaccessible entropy was introduced by \citet*{HaitnerReVaWe09} 
as a means  
to give a simpler construction and proof of statistically hiding commitment from one-way functions (reproving the result
of \cite{HaitnerNgOnReVa09}),  
and  to construct constant-round statistically hiding commitment    from constant-round zero-knowledge proof  for \NP. In this
article  
introduce simpler variant of their  notion to give an alternative construction and proof for the fundamental result,  first proved by \citet{Rompel90}, that  
\emph{Universal One-Way Hash Functions (UOWHFs)}  
can be based on  one-way functions.  In an additional result, we reprove the seminal result of  \citet{ImpLev90}: a reduction from ``uniform distribution'' average-case  
complexity problems to ones with arbitrary (though polynomial samplable one) distributions. The latter is proved  using  similar techniques to the ones we use  to construct UOWHFs from  one-way functions, where the source of this similarity is the use of a similar notion of inaccessible entropy. This draws an interesting connection between two seemingly separate lines of research: 
average-case  
complexity and universal one-way hash-functions. 

We start by discussing our construction of universal one-way hash functions, where the result about 
average-case  
complexity is described in \cref{subsec:averagecase}. 
\emph{Universal one-way hash functions} (UOWHFs), as introduced by 
\citet{NaorYu89}, are a weaker form of 
\emph{collision-resistant hash functions};  a function family $\calF$ is
collision resistant     
if given a randomly chosen function
$f\in \calF$,
it is infeasible to find any pair of distinct inputs $x,x'$ such that $f(x)=f(x')$. UOWHFs only require
\emph{target collision resistance}, where the adversary must specify one
of the inputs $x$ before seeing the description of the function $f$.
We give a formal definition.

Let $(x,\state)\getsr \Ac(1^k)$

\begin{definition}\label{def:UOWHF}
	A family of functions
	$\calF_k = \set{ \Fc_z \colon \zo^{n(k)} \mapsto \zo^{m(k)}}_{z\in \zo^k}$, 
	for $n(k),m(k)\in\poly(k)$, is a family of 
	{\sf universal one-way hash functions (UOWHFs)}
	if it satisfies:

	\begin{description}
		\item[Efficiency:] given $z\in \zo^k$ and $x\in \zo^{n(k)}$, $\Fc_z(x)$ can be evaluated in time $\poly(k)$.
		\item[Shrinking:] $m(k)<n(k)$.
		
		\item[Target Collision Resistance:] the probability  that a \ppt (probabilistic polynomial-time) adversary $\Ac$  succeeds
		in the following game is negligible in $k$:
		
				\begin{enumerate}
					\item Let $(x,\state)\getsr \Ac(1^k)$.
					\item[-] Abort if $(x,\state) \notin \zo^{n(k)}\times\zo^\ast$.
					\item Let $z\getsr \zo^k$.
					\item Let $x' \getsr \Ac(\state,z)$.
					\item[-] Abort if $x'  \notin \zo^{n(k)}$.
					\item $\Ac$ \emph{succeeds} if $x\neq x'$ and $\Fc_z(x)=\Fc_z(x')$. \footnote{The $\leftarrow$ notation is explained in \cref{subsec:notation}.
						Namely, on security parameter $1^k$, algorithm $\Ac$ first samples an element $x$ in the function family input domain.  Then given (the description of)  a function  $\Fc_z$ uniformly drawn from the family, algorithm $\Ac$ has to find a collision with $x$: an element $x'\ne x$ that   $\Fc_z$ maps to the same output value. To avoid  discussing \emph{stateful} algorithms,   we do not allow  $\Ac$ to keep a ``state'' between the game stages. Rather, we  enable it to transfer information between the stages using the auxiliary string $\state$.}
			\end{enumerate}
	\end{description}~
\end{definition}

It turns out that this weaker security property suffices for many applications. The most immediate application given in \cite{NaorYu89} is 
\emph{secure fingerprinting}, whereby the pair $(f,f(x))$ can  
be  
taken as a compact ``fingerprint'' of a large file $x$, such that it is infeasible
for an adversary, seeing the fingerprint, to change the file $x$ to $x'$ without being detected.  More dramatically, \cite{NaorYu89} also showed that UOWHFs can be used to construct secure digital signature schemes, whereas all previous constructions (with proofs of security in the standard model) 
were based on trapdoor functions (as might have been expected to be
necessary due to the public-key nature of signature schemes).
More recently, UOWHFs have been used in the Cramer-Shoup encryption
scheme \cite{CramerSh03} and in the 
construction, from one-way functions,  of statistically hiding commitment schemes \cite{HaitnerNgOnReVa09,HaitnerReVaWe09}.


\citet{NaorYu89} gave a simple and elegant construction of UOWHFs 
from any one-way \emph{permutation}. 
\cite{SanYun90a} generalized the construction of \cite{NaorYu89} to get UOWHFs from \emph{regular} one-way 
functions.  
\citet{Rompel90} gave a 
more
involved construction to prove that UOWHFs can be constructed from an
arbitrary one-way function, thereby resolving the complexity
of UOWHFs (as one-way functions are the minimal complexity assumption
for complexity-based cryptography, and are easily implied by
UOWHFs); this remains the state of the art for arbitrary one-way functions.\footnote{More details of \cite{Rompel90}'s proof are worked out, with some corrections, in \cite{Rompel90-thesis,KatzKo05}.}
While complications may be expected for constructions from arbitrary one-way functions (due to their lack of structure), Rompel's analysis
also feels quite ad hoc. In contrast, the construction of pseudorandom generators from one-way functions
in    
\cite{HastadImLeLu99}, while
also somewhat complex, involves natural abstractions (\eg pseudoentropy) that allow for modularity and measure for what is being achieved at each stage of the construction.

In this paper, we give simpler constructions of UOWHFs from one-way functions,
based on (a variant of) the recently introduced notion of
\emph{inaccessible entropy} \citet{HaitnerReVaWe09}. In addition, one of the
constructions obtains slightly better efficiency and security than Rompel's
original construction.

\subsection{Inaccessible entropy}\label{sec:intr:IAE}
For describing our construction, it will be cleaner to work with a
variant of UOWHFs where there is a \emph{single} shrinking function
$\Fc : \zn\mapsto \zo^m$ (for each setting of the security parameter $k$)
such that it is infeasible to find collisions with \emph{random inputs}.
So in our model  
an adversary $\Ac$ is given a uniformly random $x\getsr \zn$, 
outputs an $x'$ such that $\Fc(x')=\Fc(x)$, and
succeeds\footnote{It   
	is easy to convert any such
	function $\Fc$ into a standard UOWHF family by defining $\Fc_z(x)=\Fc(z+x)$.}
if $x'\neq x$.  
Note that we can assume without loss of generality that $x'=\Ac(x)$ is 
always a preimage of $\Fc(x)$ ($\Ac$ has the option of 
returning   
$x$ in case it does not find a different preimage); we refer to an
algorithm $\Ac$ with this property as
an \emph{$\Fc$-collision finder}.

Our construction is based on an
information-theoretic   
view of UOWHFs. The fact that $\Fc$ is shrinking implies that
there are many preimages $x'$ of $F(x)$ available to $\Ac$. Indeed, if we consider an (inefficient) adversary $\Ac(x)$ that outputs a uniformly random preimage $x'\getsr \Fc^{-1}(\Fc(x))$ and let $X$ be a random variable uniformly distributed on $\zn$, then
$$\HSh(\Ac(X)\mid X) = \HSh(X\mid \Fc(X)) \geq n-m,$$ where $\HSh(\cdot\mid \cdot)$ denotes conditional Shannon entropy. (See \cref{sec:prelim} for more definitional details.)
We refer to the quantity $\HSh(X\mid \Fc(X))$ as the 
\emph{real entropy of $\Fc^{-1}$.}

On the other hand, 
\emph{target collision resistance} 
means that effectively only one of the preimages is accessible to
$\Ac$. That is for every probabilistic polynomial-time $\Fc$-collision finder $\Ac$, we have $\Pr[\Ac(X)\neq X]=\negl(n)$, which is equivalent to requiring that:
$$\HSh(\Ac(X)\mid X) = \negl(n)$$
for all probabilistic polynomial-time $\Fc$-collision finders $\Ac$.  (If $\Ac$ can find a collision $X'$ with non-negligible probability, then it can achieve non-negligible conditional entropy by returning $X'$ with probability 1/2 and returning $X$ with probability $1/2$.)
We refer to the maximum of $\HSh(\Ac(X)\mid X)$ over all efficient $\Fc$-collision finders as the \emph{accessible entropy of $\Fc^{-1}$.}
We emphasize that accessible entropy refers to an \emph{upper bound} on a form
of computational entropy, in contrast to the \citet{HastadImLeLu99} notion of
\emph{pseudoentropy}.

Thus, a natural weakening of the UOWHF property is to simply require a
noticeable gap between the real and
accessible entropies of $\Fc^{-1}$.
That is, for every probabilistic polynomial-time $\Fc$-collision 
finder $\Ac$, we have
$\HSh(\Ac(X)\mid X) < \HSh(X\mid \Fc(X))-\Delta$, for some 
noticeable $\Delta$, which we refer to as the
\emph{inaccessible entropy of $\Fc$}.

\subsection{Our UOWHF constructions}     
\label{sec:intr:constructions}
Our constructions of UOWHFs have two parts. First, we show how to obtain a function with noticeable inaccessible
entropy from any one-way function. Second, we show how to build a UOWHF from any function with inaccessible entropy.

\paragraph{OWFs $\implies$ inaccessible entropy.} Given a one-way function $f \colon \zn\mapsto \zo^m$, we show that a random truncation of $f$ has inaccessible entropy. Specifically, we define $\Fc(x,i)$ to be the first $i$ bits
of $f(x)$.

To see that this works, suppose for contradiction that $\Fc$ does not have noticeable inaccessible entropy. That is, we have an efficient adversary $\Ac$ that on input $(x,i)$ can sample from the set
$S(x,i) = \set{x' \colon f(x')_{1\ldots i} = f(x)_{1\ldots i}}$ with almost-maximal entropy, which is equivalent to sampling
according to a distribution that is statistically close to the uniform distribution on $S(x,i)$. We can now use $\Ac$
to construct an inverter $\Inv$ for $f$ that works as follows on input $y$: choose $x_0\getsr \zn$, and then for $i=1,\ldots,n$ generate a random
$x_i\getsr A(x_{i-1},i-1)$ subject to the constraint that $f(x_i)_{1,\cdots,i}=y_{1,\cdots,i}$. The latter step is feasible, since we are guaranteed that $f(x_i)_{1,\ldots,i-1} = y_{1,\cdots,i-1}$ by the fact that $\Ac$ is
an $\Fc$-collision finder, and the expected number of trials needed get agreement with $y_i$ is at most 2 (since $y_i\in \zo$, and $y$ and $f(x_i)$ are statistically close). It is not difficult to show that when run on a random output $Y$ of $f$, $\Inv$ produces an almost-uniform
preimage of $Y$. This contradicts the one-wayness of $f$. Indeed, we only need $f$ to be a \emph{distributional} one-way function \cite{ImpagliazzoLu89}, whereby it is infeasible to generate almost-uniform preimages under $f$.

\paragraph{Inaccessible entropy $\implies$ UOWHFs.}
Once we have a non-negligible amount of inaccessible entropy, we can construct a UOWHF via a series of standard
transformations.
\begin{enumerate}
	\item Repetition: By evaluating $\Fc$ on many inputs, we can increase the amount of inaccessible entropy from $1/\poly(n)$ to $\poly(n)$. Specifically, we take $\Fc^t(x_1,\ldots,x_t)=(\Fc(x_1),\ldots,\Fc(x_t))$ where
	$t=\poly(n)$. This transformation also has the useful effect of converting the real entropy of $\Fc^{-1}$ to  
	\emph{real min-entropy}: with very high probability  $x\getsr X^t$,  $\Fc^t(x)$ has large  number of pre-images.
	
	\item Hashing Inputs: By hashing the input to $\Fc$ (namely taking $\Fc'(x,g)=(\Fc(x),g(x))$ for a universal hash function $g$), we can reduce both the
	real (min-)entropy and the accessible entropy so that $(\Fc')^{-1}$ still has a significant amount of real entropy,
	but has (weak) target collision resistance (on random inputs). \label{itm:hash-input}
	
	\item Hashing Outputs: By hashing the output to $\Fc$ (namely taking $\Fc'(x,g)=g(\Fc(x))$), we can reduce the
	output length of $\Fc$ to obtain a shrinking function that still has (weak)
	target collision resistance.
\end{enumerate}

There are two technicalities that occur in the above steps. First, hashing the inputs only yields \emph{weak} target collision resistance; this is due to the fact that accessible Shannon entropy is an average-case measure and thus allows for the possibility that
the adversary can achieve high accessible entropy most of the time. Fortunately, this weak form of target collision resistance can be amplified
to full target collision resistance using another application of repetition and hashing (similar to
\cite{CRSTVW07}).

Second, the hashing steps require having a fairly accurate estimate of the real entropy. This can
be handled similarly to \cite{HastadImLeLu99,Rompel90}, by trying all (polynomially many) possibilities and concatenating
the resulting UOWHFs, at least one of which will be target collision resistant.

\paragraph{A more efficient construction.} We obtain a more efficient construction of UOWHFs by hashing
the output of the one-way function $f$ before truncating. That is, we define
$\Fc(x,g,i)=(g,g(f(x))_{1\cdots i})$. This function is in the spirit of the function that \citet{Rompel90} uses as a first step, but our function uses three-wise independent hash function instead of $n$-wise independent one, and enjoys  a
simpler  structure.\footnote{\cite{Rompel90} started with the function $f'(z,g_1,g_2) \eqdef (g_2(f_0(g_1(z))),g_1,g_2)$, where $g_1$ and $g_2$ are $n$-wise independent hash-functions, and $f_0$ is defined as $f_0(x,y,i) = (f(x) , y^{n-i} , 0^i)$.} 
Our analysis of this function is  simpler than
Rompel's 
and can be viewed as providing a clean abstraction of what it achieves (namely, inaccessible entropy) that makes the subsequent transformation to a UOWHF 
easier.

We obtain improved UOWHF parameters over our first construction for two reasons.  First, we obtain a larger amount of inaccessible
entropy: $(\log n)/n$ bits instead of roughly $1/n^4$ bits. Second, we obtain a bound on a stronger form of accessible entropy, which enables us to get full target collision resistance when we hash the inputs, avoiding the second
amplification step.

The resulting overall construction yields better parameters than
Rompel's original construction.  A one-way function
of input length $n$ yields a UOWHF with output length $\tO(n^7)$,
slightly 
improving Rompel's bound of $\tO(n^{8})$.
Additionally, we are able to reduce the key length needed: Rompel's
original construction uses a key of length $\tO(n^{12})$, whereas our
construction only needs a key of length $\tO(n^7)$.
If we allow the construction to utilize some nonuniform information (namely an estimate of the real entropy
of $\Fc^{-1}$), then we obtain output length $\tO(n^5)$, improving Rompel's bound of $\tO(n^6)$.  For the key length, the improvement in this case is from $\tO(n^7)$ to $\tO(n^5)$.
Of course, these bounds are still far from practical, but they illustrate the utility of inaccessible
entropy in reasoning about UOWHFs, which may prove useful in future constructions (whether based on one-way functions
or other building blocks).

\subsection{Connection to average-case complexity}\label{subsec:averagecase}

We use the notion of inaccessible entropy 
to reprove  
the following theorem by \citet{ImpLev90}, 
given in the realm of
\emph{average-case complexity}.  

\begin{theorem}[\cite{ImpLev90}, informal]\label{thm:intro:impagliazzolevin}
	Assume there exists an  \NP language  that is hard on some (efficiently) samplable distribution for  {\sf heuristics}:  every efficient algorithm fails to decide the language correctly on a noticeable part of the distribution. Then  there exists a language in \NP that is   hard against heuristics on the uniform distribution.
\end{theorem}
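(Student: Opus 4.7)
The plan is to mirror the strategy used for UOWHFs in this paper: we view the given hard-on-average pair $(L,D)$ as supplying a function whose real preimage entropy is noticeable, define an $\NP$ language $L'$ encoding ``$L$-membership together with a witness under the sampler,'' and argue that heuristic ease of $L'$ under the uniform distribution yields a preimage-sampler that violates the assumed heuristic hardness of $(L,D)$.

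Concretely, let $S_n \colon \zo^{m(n)}\to\zo^n$ be a polynomial-time sampler for $D_n$, and let $\mathcal{H}_k$ be a family of pairwise independent hash functions mapping $\zo^{m(n)}\to\zo^k$. I would define
\[
L' \eqdef \set{(1^n,h,\alpha) \suchthat h\in\mathcal{H}_{k(n)},\, \alpha\in\zo^{k(n)},\, \exists\, r\in\zo^{m(n)} \text{ with } h(r)=\alpha \text{ and } S_n(r)\in L},
\]
which lies in $\NP$ since $L\in\NP$ and $S_n$ is polynomial-time computable; an $\NP$ witness for $L'$ consists of $r$ together with an $L$-witness for $S_n(r)$. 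The parameter $k=k(n)$ is to be chosen roughly to match the unknown real entropy of $S_n(\zo^{m(n)})$, and as in the UOWHF construction we would handle not knowing it a priori by taking a polynomial disjoint union over all candidate values of $k$.

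Given a heuristic algorithm $\mathsf{Heur}$ that decides $L'$ correctly on a $1-\delta$ fraction of uniform inputs, the reduction decides $(L,D)$ on input $x=S_n(r)$ drawn according to $D_n$ by sampling $h\getsr\mathcal{H}_{k(n)}$, setting $\alpha=h(r)$, and returning $\mathsf{Heur}(1^n,h,\alpha)$. Correctness reduces to two ingredients. First, the Leftover Hash Lemma guarantees that at the correct $k(n)$ the pair $(h,h(r))$ induced by uniform $r$ is statistically close to uniform on $\mathcal{H}_{k(n)}\times\zo^{k(n)}$, so $\mathsf{Heur}$ retains its error guarantee on such inputs. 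Second, for this $k$, with high probability over $h$ the intersection $h^{-1}(\alpha)\cap S_n^{-1}(x)$ collapses to essentially the single preimage $r$, so the existence of an $\NP$-witness for $(1^n,h,\alpha)\in L'$ actually certifies $x\in L$, rather than $L$-membership of some \emph{other} sample $x'\ne x$. This second step is the ``inaccessible-entropy'' analog in the present setting: $\mathsf{Heur}$ correctly deciding $L'$ on uniform inputs amounts to efficiently sampling a witness $r'$ under the function $F(r,h)=(h,h(r))$ with entropy matching the real preimage entropy of $F$.

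The main obstacle is the second ingredient: calibrating $k$ so that $(h,h(r))$ nearly uniquely identifies $r$ among $S_n$-preimages of $x$, while simultaneously keeping the induced distribution on $(h,\alpha)$ statistically close to uniform. This is the same tension present in the UOWHF construction, where one must keep real entropy large while driving accessible entropy strictly below it by hashing, and I would resolve it in the same way---by trying all polynomially many values of $k$, analyzing correctness only at the ``right'' one, and combining the resulting $L'$s into a single $\NP$ language by disjoint union. The statistical slack from the Leftover Hash Lemma is absorbed into the heuristic-error budget, which is what lets the reduction go through under the mere heuristic (as opposed to worst-case) hardness assumption on $(L,D)$.
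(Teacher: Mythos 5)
There is a genuine gap here: the reduction you describe cannot be executed, and the decision language $L'$ you define is not sound. A heuristic for $(L,D)$ receives only the instance $x\getsr D_n$; it is not given coins $r$ with $S_n(r)=x$, so it cannot compute $\alpha=h(r)$ and hence cannot form the query $(1^n,h,\alpha)$. Recovering such an $r$ from $x$ is in general as hard as inverting the sampler (if it were always feasible, the theorem would follow immediately, as in the observation of Hub\'a\v{c}ek et al.\ discussed in the introduction). Moreover, even granting access to $r$, a correct answer on $(1^n,h,\alpha)$ says nothing about the given $x$: when $x\notin L$, the bucket $h^{-1}(\alpha)$ contains roughly $2^{m(n)-k}$ strings drawn from all of $\zo^{m(n)}$, not from $S_n^{-1}(x)$, so unless $k\approx m(n)$ it will typically contain some $r'$ with $S_n(r')\in L$ and the honest answer is ``yes'' regardless of whether $x\in L$. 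Your isolation claim controls $h^{-1}(\alpha)\cap S_n^{-1}(x)$, which is the wrong set for soundness of a \emph{decision} problem; and pushing $k$ up to $m(n)$ both destroys the leftover-hash step and collapses $L'$ to the language $\set{r \colon S_n(r)\in L}$ under the uniform distribution, which need not be hard at all (the sampler may leak the answer through its coins while hiding it in its output).

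The paper avoids both problems by never attempting a decision-to-decision reduction. It first passes to the associated \NP-search problem (trivially still hard), proves the key \cref{lem:impagliazzo-levin-main} entirely in the search setting, and only at the end returns to a decision problem via the search-to-decision result of \cref{thm:ben-david}. In the search setting the relation forces the solver to exhibit coins together with an $\RelR$-witness for the corresponding sample --- this witness requirement is precisely what defeats the ``sampler outputs its own coins'' counterexample --- and two separate hashes are needed: one applied to the sampler's \emph{output} (\cref{def:npRelationQ}), which flattens the instance distribution and yields the inaccessible-entropy bound of \cref{lem:impagliazzolevin-part1N}, and a second, Valiant--Vazirani-style hash applied to the \emph{coins} (\cref{def:rPrime}, \cref{lem:impagliazzolevin-part2}), which converts that entropy gap into genuine search hardness by isolating a unique admissible preimage. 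Your single hash on the coins cannot play all of these roles simultaneously, and the missing witness requirement and missing search-to-decision step are not details that can be patched locally.
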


Our proof follows to a large extent the footstep of  \cite{ImpLev90}, where the  main novelty is formulating  the proof  in the language of inaccessible entropy, and rephrasing  it to make it resembles our  proof of UOWHFs from  one-way functions. This draws an interesting connection between two seemingly separate lines of research: 
average-case  
complexity and universal one-way hash-functions.

As in \cite{ImpLev90}, we prove \cref{thm:intro:impagliazzolevin} by proving its search variant: hardness to find a witness for a samplable distribution implies hardness to find a witness on the uniform distribution. Let  $(\RelR,\cD)$ be an $\NP$-search problem (\ie $\RelR$ is an \NP relation and $\cD$ is a samplable distribution) that is hard to solve heuristically, and let $D$ be the algorithm sampling instances according to $\cD$. 
For a  family of pair-wise independent hash functions  $\g$, consider the following \NP relation:
$$\RelR' = \set{(x' = (g,i),w' =(x,w)) \colon g\in \g, (D(x),w)\in \RelR \land g(D(x))_{1,\ldots,i} = 0^i}\footnote{To keep the discussion simple, we ignore input length consideration. See \cref{sec:averagecase} for the formal definitions.} $$ 
Namely, $\RelR'_\cL$ (\ie the language of $\RelR'$) consists of those random strings $x$ for $D$ such that $D(x)$ is in $\RelR_\cL$  and $D(x)$ is mapped to $0^i$ by the first $i$ bits of $g$.  While $\RelR'_\cL$ might not be hard on the uniform distribution (interpreted a the uniform distribution over  random pairs $(g,i)$), it is not hard to prove that the distribution is ``somewhat hard''. In particular, it happens noticeably often that $i$ is the ``right one for $(\RelR,\cD)$''; meaning that for a fixed element in $y$ which might be in the language, exactly one $x$ with $D(x) = y$ satisfies $g(x) = 0^i$. Conditioned on this event, letting $\Ac(\cdot)_x$ being the $x$ part in the witness output by $\Ac$, it is not hard to show that $\Hall(\Ac(G,I)_x)$ is noticeably less its information theoretic maximum: for any efficient algorithm $\Ac$, it holds that
$$\Hall(\Ac(G,I)_x) < \Hall(\cD|_{G(\cD)_{1,\ldots,I} = 0^I}),$$
where $(G,I)$ is the parsing of a random string into a pair $(g\in \g,i)$, and  assuming for simplicity that $\Ac$ never fails to provide a correct witness. Namely, the (accessible) entropy of $\Ac$ is smaller than the (real) entropy of $\cD$. Using  similar means   to the ones used to amplify the initial UOWHFs constructions described in  \cref{sec:intr:constructions} , the above gap can be  amplified to induce  hardness over the uniform distribution.

\subsection{Perspective}
The idea of inaccessible entropy was introduced in \cite{HaitnerReVaWe09} for the purpose of constructing
statistically hiding commitment schemes from one-way functions and from zero-knowledge proofs. There, the nature of statistically hiding commitments necessitated more involved notions of inaccessible entropy than we present here --- inaccessible entropy was defined in \cite{HaitnerReVaWe09} for interactive protocols, where one considers adversaries that try to generate next-messages or next-blocks of high entropy. (See \cite{HaitnerReVaWe20} for a simpler notion of inaccessible entropy that suffices for the one-way functions based commitment part.)


Here, we are able to work with a much simpler form of inaccessible entropy (significantly simpler also from the  notion considered in \cite{HaitnerReVaWe20}). The simplicity comes from the non-interactive nature of UOWHFs (and of solving \NP problems) so we only need to measure the entropy of a single string output by the adversary. Thus, the definitions here can serve as a gentler introduction to the concept of inaccessible entropy.

On the other hand, the many-round notions from \cite{HaitnerReVaWe09,HaitnerReVaWe20} allow for a useful ``entropy equalization'' transformation that avoids the need to try all possible guesses for the entropy. We do not know an analogous transformation for constructing UOWHFs. We also note that our simple construction of a function with inaccessible entropy by randomly truncating a one-way function (and its analysis) is inspired by the construction of an ``inaccessible entropy generator'' from a one-way function in \cite{HaitnerReVaWe09}.

Finally, with our constructions, the proof that one-way functions imply UOWHFs now parallels those of
pseudorandom generators \cite{HastadImLeLu99,HaitnerReVa09} and statistically hiding
commitments \cite{HaitnerNgOnReVa09,HaitnerReVaWe09}, with UOWHFs and statistically hiding commitments using
dual notions of entropy (high real entropy, low accessible entropy) to pseudorandom generators (low real entropy,
high pseudoentropy).

\subsection{Related work}
\paragraph{UOWHFs.}
\citet{KatzKo05}  gave a complete write-up of the  \citet{Rompel90,Rompel90-thesis}, with some corrections. Prior to
our paper,   
Rompel's result
represented 
the state of the art for UOWHFs from arbitrary one-way functions. Since the initial publication of this work in 2010 \cite{HaitnerHolReVaWe10}, there have
been several improvements in the setting of regular one-way functions. \citet{AmesRV2012} presented an even  more efficient construction of UOWHFs from (unknown) regular one-way functions. \citet{BarhumM2012} gave even a more efficient construction assuming the regularity of the one-way function is \emph{known}, where   \citet{YuLW2015} improved the result of  \cite{BarhumM2012}  presenting an almost optimal construction (\wrt the known black-box impossibility results)  of UOWHFs from a known regular one-way functions.

\paragraph{Average-case complexity.}
The notion of average-case  
complexity was first introduced by 
\citet{Levin86}. We focus on the result by \citet{ImpLev90} who show that if it is possible use a
polynomial time sampler to pick average-case  
problems which are hard,
then there is a different problem which is hard on average for the
uniform distribution. We give a different perspective on that proof,
and in particular highlight the connections to inaccessible entropy. 
A good overview of average-case  
complexity was given by \citet{BogTre08}.
\medskip

Recently,
\citet{HubacekNY2017} made a different, and very elegant,
connection between constructing   UOWHFs from OWFs and 
average-case  
hardness on the uniform distribution,  showing that a solution to the first challenge implies a solution
to     
the second one. Their approach is surprisingly simple: if OWFs exist, then  UOWHFs also exist,
which  can be seen as a  problem that is hard on the uniform distribution (given a UOWHF key and an input, find a colliding input). On the other hand, 
assuming  OWFs do  not exist,
\wlg the sampler of
a hard-on-the-average   
problem can be assumed  to output its random coins (indeed, its coins can be sampled from its original output under the assumption that OWFs do not exist). So in both cases, 
a hard-on-the-average 
problem implies 
a hard-on-the-average 
problem \wrt the uniform distribution. 

\subsection*{Organization of the paper}  
Formal definitions are given in \cref{sec:prelim}, where  the notion of inaccessible entropy used through the paper is defined in \cref{sec:tiegdefs}. In \cref{sec:OwfToIA} we show how to use any one-way function to get a function with  inaccessible entropy, where in \cref{sec:OWHFsfromIE} we use any function with inaccessible entropy to construct UOWHF. Finally, our result for average-case complexity is described in \cref{sec:averagecase}.

\section{Preliminaries}\label{sec:prelim}
Most of the material in this section is taken almost verbatim from \cite{HaitnerReVaWe09}, and missing proofs can be found in that paper.

\subsection{Notation}\label{subsec:notation}
All logarithms considered here are in base two.  
For $t\in \N$, we let $[t] = \set{1,\dots,t}$. 
A function $\mu \colon \N \rightarrow [0,1]$ is \emph{negligible}, 
denoted $\mu(n) = \negl(n)$, if $\mu(n)=n^{-\omega(1)}$. 
We let $\poly$ denote the set of all polynomials, and 
let \ppt stand for probabilistic polynomial time. 
Given a distribution $D$, we write $d\getsr D$ to indicate 
that $d$ is selected according to $D$. Similarly, given a 
finite set $\cS$, we write $s\getsr \cS$ to indicate that $s$ 
is selected according to the uniform distribution on $\cS$.


\subsection{Random variables}
Let $X$ and $Y$ be random variables taking values in a discrete universe $\Uni$.
We adopt the convention that when the same random variable appears multiple times in an expression, all occurrences refer to the same instantiation. For example, $\Pr[X=X]$ is 1. For an event $E$, we write $X|_E$ to denote the random variable $X$ conditioned on $E$. The \emph{support} of a random variable $X$ is $\Supp(X) \eqdef \set{ x \colon \Pr[X=x]> 0}$. $X$ is \emph{flat} if it is uniform on its support. For an event $E$, we write $I(E)$ for the corresponding 
indicator    
random variable, \ie $I(E)$ is $1$ when $E$ occurs and is $0$ otherwise.

We write $\vecc{X-Y}$ to denote the \emph{statistical difference} 
(also known as  
variation distance) between $X$ and $Y$, \ie
$$\vecc{X-Y} = \max_{T\subseteq \Uni} \size{\Pr[X\in T]-\Pr[Y\in T]}$$
We say that $X$ and $Y$ are \emph{$\eps$-close} if
$\vecc{X-Y}\leq \eps$ and \emph{$\eps$-far} otherwise.    

\subsection{Entropy measures}\label{sec:measures}
In this article we shall refer to several measures of entropy.  
The relation and motivation of these measures is best understood by considering a notion that we will refer to as the \emph{sample-entropy}: For a random variable $X$ and $x\in \Supp(X)$, we define the sample-entropy of $x$ with respect to $X$ to be the quantity $$\Hsam_X(x) \eqdef \log(1/\Pr[X=x]).$$ The sample-entropy measures the amount of ``randomness'' or ``surprise'' in the specific sample $x$, assuming that $x$ has been generated according to $X$. Using this notion, we can define the \emph{Shannon entropy} $\HSh(X)$ and \emph{min-entropy} $\Hmin(X)$ as follows:
\begin{eqnarray*}
	\HSh(X) &\eqdef & \Exp_{x\getsr X}[\Hsam_X(x)]\\
	\Hmin(X) &\eqdef & \min_{x\in \Supp(X)} \Hsam_X(x)
\end{eqnarray*}
We will also discuss the \emph{max-entropy} $\Hmax(X) \eqdef  \log(\size{\Supp(X)})$. The term ``max-entropy'' and its relation to the sample-entropy will be made apparent below.

It can be shown that $\Hmin(X) \leq \HSh(X) \leq \Hmax(X)$ with equality if and only if $X$ is flat. Thus, saying $\Hmin(X)\geq k$ is a strong way of saying that $X$ has ``high entropy'' and $\Hmax(X)\leq k$ a strong way of saying that $X$ as ``low entropy''.

\paragraph{Smoothed entropies.}
Shannon entropy is robust in that it is insensitive to small statistical differences. Specifically, if $X$ and $Y$ are $\eps$-close then $\size{\HSh(X)-\HSh(Y)}\leq \eps\cdot \log \size{\Uni}$. For example, if $\Uni=\zn$ and $\eps=\eps(n)$ is a negligible function of $n$ (\ie $\eps=n^{-\omega(1)}$), then the difference in Shannon entropies is vanishingly small (indeed, negligible). In contrast, min-entropy and max-entropy are brittle and can change dramatically with a small statistical difference. Thus, it is common to work with ``smoothed'' versions of these measures, whereby 
we consider a random variable $X$ to have \emph{high entropy}
if $X$ is $\eps$-close to some $X'$ with $\Hmin(X)\geq k$
and to have \emph{low entropy}
if $X$ is $\eps$-close to some $X'$ with $\Hmax(X)\leq k$,  
for some parameter $k$ and a negligible $\eps$.\protect\footnote{The term ``smoothed entropy'' was coined by  \protect\cite{RenWol04a}, but the notion of smoothed min-entropy has commonly been used (without a name) in the literature on randomness extractors \protect\cite{NisanZu96}.}

These smoothed versions of min-entropy and max-entropy can be captured quite
closely (and more concretely) by requiring that the sample-entropy
be   
large or small, resp.,  
with high probability:

\begin{lemma}
	\begin{enumerate}
		\item Suppose that with probability at least $1-\eps$ over $x\getsr X$, we have
		$\Hsam_X(x) \geq k$. Then $X$ is $\eps$-close to a random variable $X'$ such that
		$\Hmin(X')\geq k$.
		\item Suppose that $X$ is $\eps$-close to a random variable $X'$ such that
		$\Hmin(X')\geq k$. Then with probability at least $1-2\eps$ over $x\getsr X$, we have $\Hsam_X(x)\geq k-\log(1/\eps)$.
	\end{enumerate}
\end{lemma}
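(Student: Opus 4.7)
The plan is to prove the two parts independently, both by directly manipulating probability mass functions. Throughout, I will write $p(x) = \Pr[X = x]$ and $q(x) = \Pr[X' = x]$, and use the standard identity $\vecc{X-X'} = \tfrac{1}{2}\sum_x |p(x)-q(x)| = \sum_{x : p(x) > q(x)} (p(x)-q(x))$.

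For Part 1, I will construct $X'$ by truncating $p$ at the threshold $2^{-k}$. Define $p'(x) = \min(p(x),2^{-k})$ and the deficit $\delta = 1 - \sum_x p'(x) = \sum_{x : p(x) > 2^{-k}} (p(x) - 2^{-k})$. The hypothesis $\Pr[\Hsam_X(x) \ge k] \ge 1-\eps$ says exactly that $\Pr[X \in B] \le \eps$ for $B = \set{x : p(x) > 2^{-k}}$, so bounding each summand by $p(x)$ yields $\delta \le \eps$. I will then define $q = p'$ on $\Supp(X)$ and place the deficit $\delta$ on fresh auxiliary elements outside $\Supp(X)$ (extending the universe to at least $2^k$ elements if necessary to accommodate min-entropy $k$), giving each receiver mass at most $2^{-k}$. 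By construction $\Hmin(X') \ge k$, and a one-line computation gives $\vecc{X-X'} = \delta \le \eps$.

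For Part 2, the argument is even more direct. Let $B = \set{x : \Hsam_X(x) < k - \log(1/\eps)} = \set{x : p(x) > 2^{-k}/\eps}$. Since $\Hmin(X') \ge k$ forces $q(x) \le 2^{-k}$ for all $x$, each $x \in B$ satisfies $q(x) \le 2^{-k} < \eps \cdot p(x)$, so $p(x) - q(x) \ge (1-\eps)\,p(x)$. Summing over $B$ and bounding by total variation gives $(1-\eps)\Pr[X \in B] \le \sum_{x : p(x) > q(x)}(p(x)-q(x)) = \vecc{X-X'} \le \eps$, hence $\Pr[X \in B] \le \eps/(1-\eps) \le 2\eps$ for $\eps \le 1/2$ (the statement is vacuous otherwise).

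I do not anticipate a real obstacle. The only subtle point is the universe issue in Part 1: if the original universe of $X$ has fewer than $2^k$ elements then no $X'$ with $\Hmin(X') \ge k$ can exist on it, so one must implicitly allow $X'$ to live on a slightly enlarged universe (this is harmless since statistical distance is preserved after identifying the missing elements with probability-zero events). The factor of $2$ appearing in Part 2 is not tight and arises purely from bounding $1/(1-\eps) \le 2$ in the regime $\eps \le 1/2$.
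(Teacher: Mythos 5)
Your proof is correct. The paper itself does not prove this lemma---it is stated in the preliminaries with the remark that missing proofs appear in \cite{HaitnerReVaWe09}---so there is no in-paper argument to compare against; your mass-truncation construction for Part 1 and the pointwise comparison for Part 2 are the standard route, and your handling of the small-universe issue in Part 1 is the right (and commonly glossed-over) caveat. One cosmetic remark on Part 2: you can avoid the $\eps\le 1/2$ case split entirely by a counting argument. The bad set $B=\set{x \colon p(x) > 2^{-k}/\eps}$ has $\size{B} < \eps\cdot 2^k$ since its elements each carry mass exceeding $2^{-k}/\eps$, hence $\Pr[X'\in B]\le \size{B}\cdot 2^{-k} < \eps$, and $\Pr[X\in B]\le \Pr[X'\in B]+\vecc{X-X'} < 2\eps$ directly. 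Either way the factor $2$ is an artifact of the proof rather than anything tight.
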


\begin{lemma} \label{lem:MaxEntvsSets}
	\begin{enumerate}
		\item Suppose that with probability at least $1-\eps$ over $x\getsr X$, we have
		$\Hsam_X(x) \leq k$. Then $X$ is $\eps$-close to a random variable $X'$ such that
		$\Hmax(X')\leq k$.
		\item Suppose that $X$ is $\eps$-close to a random variable $X'$ such that
		$\Hmax(X')\leq k$. Then with probability at least $1-2\eps$ over $x\getsr X$, we have $\Hsam_X(x)\leq k+\log(1/\eps)$.
	\end{enumerate}
\end{lemma}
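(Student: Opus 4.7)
The plan is to mirror the proof of the analogous lemma for min-entropy but with the inequalities on $\Pr[X=x]$ reversed. The key observation is that sample-entropy at most $k$ means $\Pr[X=x] \geq 2^{-k}$, and since probabilities sum to at most $1$, the set of such ``heavy'' points has size at most $2^k$, which is exactly the condition $\Hmax \leq k$.

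For part~(1), I would define the heavy set $G = \set{x \in \Supp(X) \suchthat \Hsam_X(x) \leq k} = \set{x \suchthat \Pr[X=x] \geq 2^{-k}}$. The summation $\sum_{x \in G}\Pr[X=x] \leq 1$ together with the pointwise lower bound $2^{-k}$ gives $\size{G} \leq 2^k$. By hypothesis $\Pr[X \notin G] \leq \eps$. I would then construct $X'$ by leaving the probabilities on $G$ unchanged and redistributing the missing $\Pr[X \notin G] \leq \eps$ mass arbitrarily onto $G$ (for concreteness, assign it all to a single element of $G$). Then $\Supp(X') \subseteq G$, so $\Hmax(X') \leq \log \size{G} \leq k$, and by construction the variation distance $\vecc{X-X'}$ equals half the $\ell_1$ difference, which is at most $\Pr[X \notin G] \leq \eps$.

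For part~(2), let $S = \Supp(X')$, so $\size{S} \leq 2^k$, and let $L = \set{x \suchthat \Hsam_X(x) > k + \log(1/\eps)} = \set{x \suchthat \Pr[X=x] < \eps \cdot 2^{-k}}$ be the ``light'' set. I would split
\[
\Pr[X \in L] \;\leq\; \Pr[X \in L \cap S] + \Pr[X \notin S].
\]
The first term is at most $\size{S} \cdot \eps \cdot 2^{-k} \leq \eps$ by a union bound using the pointwise upper bound from $L$. For the second, since $X'$ is supported on $S$ we have $\Pr[X' \notin S] = 0$, and the defining property of statistical distance (taking the test set $T = \Uni \setminus S$) gives $\Pr[X \notin S] \leq \vecc{X-X'} \leq \eps$. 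Adding yields $\Pr[X \in L] \leq 2\eps$, as required.

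There is no real obstacle here: both parts are essentially counting arguments once the correct ``heavy''/``light'' sets are identified, and the proof is completely symmetric to the one for the min-entropy version (\cref{lem:MaxEntvsSets} minus one, so to speak), with inequalities on pointwise probabilities flipped. The only thing to be careful about is ensuring that in part~(1) the redistribution keeps $X'$ supported inside $G$ so the max-entropy bound holds exactly, and that in part~(2) the $\eps$ slack from statistical closeness is counted only once on each side of the split.
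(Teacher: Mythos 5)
Your proof is correct and follows exactly the standard argument (the paper itself defers this proof to \cite{HaitnerReVaWe09}, where it is done the same way): in part (1) the heavy set $G$ has size at most $2^k$ by a counting argument and redistributing the at most $\eps$ residual mass into $G$ gives the required $X'$, and in part (2) splitting the light set across $\Supp(X')$ and its complement charges $\eps$ to each piece. The edge cases you flag (keeping $X'$ supported in $G$, counting the slack once per term) are handled properly, so there is nothing to add.
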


Think of $\eps$ as inverse polynomial or a 
slightly negligible 
function in $n=\log(\size{\Uni})$. The above lemmas show that up to negligible statistical difference and a slightly super-logarithmic number of entropy bits,
the min-entropy and the max-entropy are captured by
a lower and an upper bound on sample-entropy, respectively.   

\paragraph{Conditional entropies.}
We will also be interested in conditional versions of entropy. For jointly distributed random variables $(X,Y)$ and $(x,y)\in \Supp(X,Y)$, we define
the \emph{conditional sample-entropy} to be $\Hsam_{X\mid Y}(x\mid y) = \log(1/\Pr[X=x\mid Y=y])$.
Then the standard \emph{conditional Shannon entropy} can be written as:
$$\HSh(X\mid Y) = \Exp_{(x,y)\getsr (X,Y)} \left[\Hsam_{X\mid Y}(x\mid y)\right]
= \Exp_{y\getsr Y} \left[\HSh(X|_{Y=y})\right] = \HSh(X,Y)-\HSh(Y).$$
There is no standard definition of conditional min-entropy and max-entropy, or even their smoothed versions. For us, it will be most convenient to generalize the sample-entropy characterizations of smoothed min-entropy and max-entropy given above. Specifically we will think of $X$ as
having ``high min-entropy'' and ``low max-entropy'' given $Y$ if
with probability at least $1-\eps$ over $(x,y)\getsr (X,Y)$, we have
$\Hsam_{X\mid Y}(x\mid y) \geq k$ and $\Hsam_{X\mid Y}(x\mid y)\leq k$,
resp. 

\paragraph{Flattening Shannon entropy.}  
The \emph{asymptotic equipartition property} in information theory states
that for 
a random variable $X^t = (X_1,\ldots,X_t)$, whose marginals $X_i$
are independent,  
with high probability,  
the sample-entropy $\Hsam_{X^t}(X_1,\ldots,X_t)$ is close to its expectation.
In \cite{HastadImLeLu99} a quantitative bound on this was shown
by reducing it to the Hoeffding
bound.  (One
cannot directly apply the Hoeffding bound, because $\Hsam_{X}(X)$ does not
have an upper bound, but one can define a related random variable
which
does.)    
We use a different bound here, which was proven in \cite{HolRen11}.
The bound has the advantage that it is somewhat easier to state,
even though the proof is longer.  We remark that the bound
from \cite{HastadImLeLu99} would be sufficient for our 
purposes.   


\begin{lemma} \label{lem:flattening}~
	\begin{enumerate}
		\item Let $X$ be a random variable taking values in a universe $\Uni$, let $t\in \N$, and let $\eps>2^{-t}$. Then with probability at least $1-\eps$ over $x\getsr X^t$,
		$$\size{\Hsam_{X^t}(x) - t\cdot \HSh(X)} \leq O(\sqrt{t\cdot \log(1/\eps)} \cdot \log (|\Uni|))$$
		
		\item Let $X,Y$ be jointly distributed random variables where $X$ takes values in a universe $\Uni$, let $t\in \N$, and let $\eps>2^{-t}$. Then with probability at least $1-\eps$ over $(x,y)\getsr (X^t,Y^t)\eqdef (X,Y)^t$,
		$$\size{\Hsam_{X^t\mid Y^t}(x\mid y) - t\cdot \HSh(X\mid Y)} \leq O(\sqrt{t\cdot \log(1/\eps)} \cdot \log (\size{\Uni}))$$
	\end{enumerate}
\end{lemma}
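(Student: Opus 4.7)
The plan is to reduce both parts to concentration of a sum of i.i.d.\ random variables, using the crucial property that although the sample-entropy $\Hsam_X(X)$ has no uniform upper bound, its distribution decays exponentially past $\log|\Uni|$. By the product structure of $X^t$, the sample-entropy decomposes as
\[
\Hsam_{X^t}(x_1,\ldots,x_t) \;=\; \sum_{i=1}^{t} \Hsam_X(x_i),
\]
so setting $Z_i \eqdef \Hsam_X(X_i)$ gives i.i.d.\ copies of $Z \eqdef \Hsam_X(X)$ with $\Exp[Z_i] = \HSh(X)$. The goal becomes establishing $|\sum_i Z_i - t\HSh(X)| = O(\sqrt{t \log(1/\eps)}\cdot \log|\Uni|)$ except with probability $\eps$.

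The central technical fact I would establish is the tail bound
\[
\Pr\bigl[Z > \log|\Uni| + s\bigr] \;\leq\; 2^{-s}, \qquad s \geq 0,
\]
which holds because the event $Z > \log|\Uni|+s$ requires $\Pr[X=X] < 2^{-\log|\Uni|-s}$, and at most $|\Uni|$ elements can each contribute at most $2^{-\log|\Uni|-s}$ mass to this event. From this one derives $\Var(Z) = O(\log^2|\Uni|)$ by integrating the exponentially decaying tail. I would then truncate each $Z_i$ at a threshold $c = \log|\Uni| + O(\log(t/\eps))$, so that a union bound shows no $Z_i$ is truncated except with probability $\eps/4$ and the expected loss to truncation is negligible, and apply a Bernstein-type concentration inequality to the truncated $\tilde Z_i \in [0,c]$ with variance proxy $\sigma^2 = O(\log^2|\Uni|)$. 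In the regime $\eps \geq 2^{-t}$ (equivalently $\log(1/\eps) \leq t$), the Gaussian regime of Bernstein's bound dominates over the large-deviation term and yields the claimed deviation $O(\sqrt{t\log(1/\eps)}\cdot \log|\Uni|)$ with probability at least $1-\eps/2$. The conditional version follows by the same argument applied to $Z_i \eqdef \Hsam_{X|Y}(X_i|Y_i)$: the pairs $(X_i,Y_i)$ are i.i.d., the analogous decomposition $\Hsam_{X^t|Y^t}(x|y) = \sum_i \Hsam_{X|Y}(x_i|y_i)$ holds, and the tail bound applies verbatim to each conditional distribution $X|_{Y=y}$ (which still takes values in $\Uni$) and hence unconditionally on $Y$, with $\HSh(X)$ replaced by $\HSh(X|Y)$ throughout.

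The main obstacle is obtaining the clean $\log|\Uni|$ factor in the deviation, rather than something that grows with $\log(t/\eps)$. A direct application of Hoeffding's inequality to $\tilde Z_i \in [0,c]$ would contribute the bounded-range parameter $c$ itself to the final bound, which degrades once $\log(1/\eps)$ is comparable to $t$. Extracting a bound depending on $\log|\Uni|$ alone requires a Bernstein- or Bennett-type inequality together with the observation that $\Var(Z) = O(\log^2|\Uni|)$ regardless of $t$ and $\eps$; this is exactly the mild sharpening of \cite{HolRen11} relative to the analogous argument in \cite{HastadImLeLu99}, though as the paper notes, for the later uses in this article either version would suffice.
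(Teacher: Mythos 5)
The paper does not prove this lemma at all: it imports it verbatim from \cite[Thm 2]{HolRen11}, remarking only that the weaker bound of \cite{HastadImLeLu99} (truncate the sample-entropy and apply Hoeffding, paying the truncation level in the deviation) would also suffice for the applications in this article. So you are attempting a self-contained proof of a black-box ingredient, and your skeleton is sound: the tensorization $\Hsam_{X^t}(x)=\sum_i\Hsam_X(x_i)$, the tail bound $\Pr[\Hsam_X(X)>\log\size{\Uni}+s]\le 2^{-s}$, the consequence $\Var(\Hsam_X(X))=O(\log^2\size{\Uni})$, and the reduction of part 2 to part 1 via the conditional distributions $X|_{Y=y}$ are all correct.

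The gap is in the concentration step. Bernstein's inequality for variables bounded in $[0,c]$ gives a deviation of order $\sqrt{V\log(1/\eps)}+c\log(1/\eps)$: the linear term carries the almost-sure range $c$, not the standard deviation. Your union bound forces the truncation level $c=\log\size{\Uni}+\Theta(\log(t/\eps))$, so the linear term contains $\log(t/\eps)\cdot\log(1/\eps)$, and the assertion that ``the Gaussian regime dominates'' whenever $\log(1/\eps)\le t$ is false. Concretely, take $\size{\Uni}=2$ and $\eps=2^{-t/2}$: the target deviation is $O(t)$, while $c\log(1/\eps)=\Theta(t^2)$. (Bennett's inequality only improves this by a logarithmic factor, and for a generic $[0,c]$-bounded variable with variance $\sigma^2\ll c^2$ the $c$-dependent term cannot be removed, so no bounded-range inequality closes this.) What actually saves the argument is that $Z=\Hsam_X(X)$ is not a generic bounded variable but a sub-exponential one with \emph{scale} $O(\log\size{\Uni})$, not just variance $O(\log^2\size{\Uni})$: for instance $\Exp[2^{\theta Z}]=\sum_x\Pr[X=x]^{1-\theta}\le\size{\Uni}^{\theta}$ for $\theta\in[0,1]$, and more generally the exponential tail yields $\Exp\bigl[e^{\theta(Z-\Exp[Z])}\bigr]\le e^{O(\theta^2\log^2\size{\Uni})}$ for all $\abs{\theta}\le c_0/\log\size{\Uni}$. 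Chernoff then gives a Bernstein bound whose linear term is $O(\log\size{\Uni}\cdot\log(1/\eps))$, and \emph{that} term is absorbed by $\sqrt{t\log(1/\eps)}\cdot\log\size{\Uni}$ precisely because $\log(1/\eps)<t$ --- this is where the hypothesis $\eps>2^{-t}$ is really used. So the fix is to drop the truncation and bound the moment generating function directly from the exponential tail; with that substitution the rest of your argument, including the conditional case, goes through.
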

The statement follows directly from \cite[Thm 2]{HolRen11}.

\subsection{Hashing} \label{sec:hashing}
A family of functions $\Fc = \set{ f \colon \zn \mapsto \zo^m}$ is \emph{2-universal} if
for every $x\neq x'\in \zn$, when we choose $f\getsr \Fc$, we have $\Pr[f(x)=f(x')] \leq 1/\size{\zo^m}$.  $\Fc$ is \emph{$t$-wise independent} if for all distinct $x_1,\ldots,x_t\in \zn$, when we choose $f\getsr \Fc$, the random variables $f(x_1),\ldots,f(x_t)$ are independent and each
of them is  
uniformly distributed over $\zo^m$.

$\Fc$ is \emph{explicit} if given the description of a function $f\in \Fc$ and $x\in \zn$,
the value  
$f(x)$ can be computed in time $\poly(n,m)$.  $\Fc$ is \emph{constructible} if it is explicit and there is a probabilistic polynomial-time algorithm that given $x\in \zn$, and $y\in \zo^m$, outputs a random $f\getsr \Fc$ such that $f(x)=y$.

It is well-known that there are constructible families of 
$t$-wise independent functions in which choosing a function $f\getsr \Fc$ uses only $t\cdot \max \set{n,m}$ random bits.

Most of the material in this section is taken almost verbatim from \cite{HaitnerReVaWe09}, and missing proofs can be found in that paper.


A family of functions $\Fc = \set{ f \colon \zn \mapsto \zo^m}$ is \emph{2-universal} if
for every $x\neq x'\in \zn$, when we choose $f\getsr \Fc$, we have $\Pr[f(x)=f(x')] \leq 1/\size{\zo^m}$.  $\Fc$ is \emph{$t$-wise independent} if for all distinct $x_1,\ldots,x_t\in \zn$, when we choose $f\getsr \Fc$, the random variables $f(x_1),\ldots,f(x_t)$ are independent and each
of them is  
uniformly distributed over $\zo^m$.

$\Fc$ is \emph{explicit} if given the description of a function $f\in \Fc$ and $x\in \zn$,
the value  
$f(x)$ can be computed in time $\poly(n,m)$.  $\Fc$ is \emph{constructible} if it is explicit and there is a probabilistic polynomial-time algorithm that given $x\in \zn$, and $y\in \zo^m$, outputs a random $f\getsr \Fc$ such that $f(x)=y$.

It is well-known that there are constructible families of 
$t$-wise independent functions in which choosing a function $f\getsr \Fc$ uses only $t\cdot \max \set{n,m}$ random bits.

\section{Inaccessible entropy for inversion problems}\label{sec:tiegdefs}
In this section we define, following the infomercial description given in the introduction,  the real and accessible entropy of the inverse of a function. The inaccessible entropy of the inverse is define as the gap between the two.

\subsection{Real entropy}
For a function $\Fc$, we define the \emph{real entropy} of $\Fc^{-1}$
to be the amount of entropy left in the input after revealing the output. We measure the above entropy using  Shanon entropy (average case),  min-entropy  and  max-entropy.  

\begin{definition}[real entropy]\label{def:RealEntropy}
	Let $n$ be a security parameter, and $\Fc \colon \zn\mapsto \zo^m$ a function.
	We say that $\Fc^{-1}$ has {\sf real Shannon entropy} $k$ if
	$$\HSh(X\mid \Fc(X))= k,$$
	where $X$ is uniformly distributed on $\zn$.
	We say that $\Fc^{-1}$ has {\sf real min-entropy} at least $k$ if there is a negligible function $\eps=\eps(n)$ such that
	$$\Pr_{x\getsr X}\nolimits \left[\Hsam_{X\mid \Fc(X)}(x\mid \Fc(x))\geq k\right] \geq 1-\eps(n).$$
	We say that $\Fc^{-1}$ has 
	{\sf real max-entropy} 
	at most $k$ if there is a negligible function $\eps=\eps(n)$ such that
	$$\Pr_{x\getsr X}\nolimits \left[\Hsam_{X\mid \Fc(X)}(x\mid \Fc(x))\leq k\right] \geq 1-\eps(n).$$
\end{definition}
It is easy to verify that, ignoring negligible terms, the min-entropy of $F^{-1}$ is at most its Shannon-entropy,  which in turn is at most its max-entropy, where equality holds only if  $F$ is regular. We also note that more concrete formulas for the entropies above are:

\begin{align*}
	\Hsam_{X\mid \Fc(X)}(x\mid \Fc(x)) &= \log \size{\Fc^{-1}(\Fc(x))}\\
	\HSh(X\mid \Fc(X)) &= \Exp\left[\log \size{\Fc^{-1}(\Fc(X))}\right].
\end{align*}

As our goal is to construct UOWHFs that are shrinking, achieving 
high real entropy is a natural intermediate step. Indeed, the amount
by which $\Fc$ shrinks is a lower bound on the real entropy of $\Fc^{-1}$:

\begin{proposition}
	If $\Fc \colon \zn\mapsto\zo^m$, then the real Shannon entropy of $\Fc^{-1}$ is at least $n-m$, and the real min-entropy of $\Fc^{-1}$ is at least $n-m-s$ for any $s=\omega(\log n)$.
\end{proposition}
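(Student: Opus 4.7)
The plan is to prove both parts by direct information-theoretic arguments, using the explicit formula $\Hsam_{X\mid \Fc(X)}(x\mid \Fc(x)) = \log|\Fc^{-1}(\Fc(x))|$ noted just before the proposition, where $X$ is uniform on $\zn$.

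For the Shannon part, I would simply expand using the chain rule: $\HSh(X\mid \Fc(X)) = \HSh(X,\Fc(X)) - \HSh(\Fc(X)) = \HSh(X) - \HSh(\Fc(X))$, where the second equality holds because $\Fc(X)$ is determined by $X$. Since $X$ is uniform on $\zn$ we have $\HSh(X) = n$, and since $\Fc(X)$ takes values in $\zo^m$ we have $\HSh(\Fc(X)) \leq \log |\zo^m| = m$, giving $\HSh(X\mid \Fc(X)) \geq n-m$.

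For the min-entropy part, I would argue by a counting (``pigeonhole'') bound on light preimages. Let $B = \set{y \in \zo^m : |\Fc^{-1}(y)| < 2^{n-m-s}}$ be the set of outputs with small preimage sets. Then
\[
\Pr_{x\getsr X}\bigl[\Hsam_{X\mid \Fc(X)}(x\mid \Fc(x)) < n-m-s\bigr] = \Pr_{x\getsr X}[\Fc(x) \in B] = \frac{\sum_{y\in B}|\Fc^{-1}(y)|}{2^n} < \frac{|B|\cdot 2^{n-m-s}}{2^n} \leq \frac{2^m \cdot 2^{n-m-s}}{2^n} = 2^{-s}.
\]
Since $s = \omega(\log n)$, we have $2^{-s} = n^{-\omega(1)}$, which is negligible in $n$. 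Thus $\Pr[\Hsam_{X\mid \Fc(X)}(x\mid \Fc(x)) \geq n-m-s] \geq 1 - \eps(n)$ for some negligible $\eps$, matching the definition of real min-entropy at least $n-m-s$.

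Neither step is delicate: the Shannon bound is one line of the chain rule plus the trivial bound on the entropy of an $m$-bit random variable, and the min-entropy bound is a one-line counting argument. The only thing to be careful about is translating ``small preimage sets are rare'' (weighted by $1/2^n$) into the form of \cref{def:RealEntropy}, which is exactly what the computation above does.
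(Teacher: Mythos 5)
Your proposal is correct and follows essentially the same route as the paper: the Shannon bound via $\HSh(X\mid \Fc(X)) \geq \HSh(X)-\HSh(\Fc(X)) \geq n-m$, and the min-entropy bound by isolating the set of outputs of probability less than $2^{-m-s}$ (your set $B$ of outputs with fewer than $2^{n-m-s}$ preimages is exactly that set), showing it carries at most $2^{-s}$ of the mass, and invoking the sample-entropy formula on its complement. No gaps.
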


\begin{proof}
	For Shannon entropy, we have
	$$\HSh(X\mid \Fc(X)) \geq \HSh(X)-\HSh(\Fc(X)) \geq n-m.$$
	
	For min-entropy, let $S = \set{ y\in \zo^m \colon \Pr[f(X)=y] < 2^{-m-s}}$. Then $\Pr[f(X)\in S] \leq 2^m \cdot 2^{-m-s} = \negl(n)$, and for every $x$ such that $f(x)\notin S$, we
	have
	\begin{align*}
		\Hsam_{X\mid \Fc(X)}(x\mid \Fc(x)) &= \log\frac{1}{\Pr[X=x\mid F(X)=f(x)]}
		= \log\frac{\Pr[f(X)=f(x)]}{\Pr[X=x]}
		\geq \log\frac{2^{-m-s}}{2^{-n}}
		= n-m-s.
	\end{align*}
\end{proof}

\subsection{Accessible entropy}
We  define accessible entropy of $F^{-1}$ using the notion of ``collision-finding'' algorithm,  an algorithm that aims to find a second-pre-image of $F(X)$ with ``maximal entropy''.  The accessible entropy of $F$ will be defined as the entropy of the best \emph{efficient} collision-finding algorithm.


\begin{definition}[collision finding algorithm]\label{def:CollisionFinder}
	For a function $\Fc \colon \zn\mapsto \zo^m$, an {\sf $\Fc$-collision-finder} is a randomized algorithm $\Ac$ such that for every $x\in \zn$
	and coin tosses $r$ for $\Ac$, we have
	$\Ac(x;r)\in \Fc^{-1}(\Fc(x))$.
\end{definition}

Note that $\Ac$ is required to \emph{always} produce an input $x'\in \zn$ such that
$\Fc(x)=\Fc(x')$. This is a reasonable constraint because $\Ac$ has the option of outputting $x'=x$ if it does not find a true collision. We consider $\Ac$'s goal to be maximizing the entropy of its output $x'=A(x)$, given a random input $x$.

It is easy to see that If we let $\Ac$
be \emph{computationally unbounded}, then the optimum turns out to equal exactly the real entropy:

\begin{proposition}\label{prop:optCF}
	Let $\Fc \colon \zn\mapsto \zo^m$. Then the real Shannon entropy of $\Fc^{-1}$
	equals the maximum of $\HSh(\Ac(X;R)\mid X)$ over all (computationally unbounded) $\Fc$-collision finders $\Ac$, where the random variable $X$ is uniformly distributed in $\zn$ and $R$ is uniformly random coin tosses for $\Ac$. That is,
	$$\HSh(X\mid \Fc(X)) = \max_{\Ac} \HSh(\Ac(X;R)\mid X),$$
	where the maximum is taken over {\sf all} $\Fc$-collision finders $\Ac$.
\end{proposition}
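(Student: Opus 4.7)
\begin{proofidea}
The plan is to establish both directions of the claimed equality.

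For the upper bound, the key observation is that an $\Fc$-collision finder $\Ac$ is constrained: for every $x$ and every $r$, the output $\Ac(x;r)$ lies in $\Fc^{-1}(\Fc(x))$. Hence, conditioned on the event $X=x$, the random variable $\Ac(X;R)$ is supported on a set of size exactly $\size{\Fc^{-1}(\Fc(x))}$. By the standard fact that the Shannon entropy of a random variable is at most the logarithm of the size of its support, we get
\begin{equation*}
\HSh\bigl(\Ac(X;R) \mid X=x\bigr) \;\leq\; \log \size{\Fc^{-1}(\Fc(x))} \;=\; \Hsam_{X\mid \Fc(X)}(x \mid \Fc(x)).
\end{equation*}
Taking expectation over $x \getsr X$ and using the formula for conditional Shannon entropy given just after \cref{def:RealEntropy}, the right-hand side has expectation $\HSh(X \mid \Fc(X))$, which is precisely the real Shannon entropy of $\Fc^{-1}$. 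This yields $\HSh(\Ac(X;R)\mid X) \leq \HSh(X \mid \Fc(X))$ for every (even unbounded) $\Fc$-collision finder $\Ac$.

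For the matching lower bound, I would exhibit an unbounded $\Fc$-collision finder $\Ac^\ast$ that attains equality. Define $\Ac^\ast(x;r)$ to use $r$ to sample a uniformly random element of $\Fc^{-1}(\Fc(x))$; this is a valid $\Fc$-collision finder per \cref{def:CollisionFinder}. Then conditioned on $X=x$, the output is uniform on $\Fc^{-1}(\Fc(x))$, so
\begin{equation*}
\HSh\bigl(\Ac^\ast(X;R)\mid X=x\bigr) = \log \size{\Fc^{-1}(\Fc(x))},
\end{equation*}
and taking expectation again yields $\HSh(\Ac^\ast(X;R)\mid X) = \HSh(X \mid \Fc(X))$. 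Combining the two bounds proves the proposition.

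I do not expect any serious obstacle here: the statement is essentially the information-theoretic observation that the optimal ``collision sampler'' for $\Fc$ is the one that outputs a uniformly random preimage of $\Fc(x)$, and uniform distributions maximize Shannon entropy over fixed supports. The only mild care needed is in bookkeeping the randomness $R$ of $\Ac$ (which is independent of $X$), but that poses no difficulty since conditioning on $X=x$ leaves $R$ with its original distribution, and the support constraint on $\Ac(x;r)$ holds for every fixing of $r$.
\end{proofidea}
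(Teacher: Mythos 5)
Your proposal is correct and follows essentially the same route as the paper: the paper's (terser) proof simply exhibits the uniform-preimage sampler $\At$ as the optimal collision finder and computes $\HSh(\At(X;R)\mid X)=\Exp[\log\size{\Fc^{-1}(\Fc(X))}]=\HSh(X\mid\Fc(X))$, leaving the matching upper bound (uniform distributions maximize entropy on a fixed support) implicit. Your write-up just makes both directions explicit.
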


\begin{proof}
	The ``optimal'' $\Fc$-collision finder $\Ac$ that maximizes $\HSh(\Ac(X)\mid X)$ is the algorithm $\At$ that, on input $x$, outputs a uniformly random element of $f^{-1}(f(x))$. Then $$\HSh(\At(X;R)\mid X) = \Exp[\log \vecc{f^{-1}(f(X))}] = \HSh(X\mid \Fc(X)).$$
\end{proof}

The notion of \emph{accessible entropy} simply restricts the above to \ppt  algorithms. We consider both Shanon and max-entropy variants (since we aim to upper bound the accessible entropy, we care not about the min-entropy variant).

\begin{definition}[accessible entropy]\label{def:AccH}
	Let $n$ be a security parameter and $\Fc \colon \zn\mapsto \zo^m$ a function.
	We say that $\Fc^{-1}$ has {\sf accessible Shannon entropy} at most $k$ if for every \ppt $\Fc$-collision-finder $\Ac$,
	we have
	\begin{align*}\HSh(\Ac(X;R)\mid X)\leq k
	\end{align*}
	for all sufficiently large $n$,
	where the random variable $X$ is uniformly distributed on $\zn$ and $R$ is uniformly random coin tosses for $\Ac$.

	We say that $\Fc^{-1}$ has {\sf $p$-accessible max-entropy} at most $k$ if
	for every \ppt $\Fc$-collision-finder $\Ac$, there exists a family of
	sets $\set{\cL(x)}_{x \in \Supp(X)}$ each of size at most $2^k$, such that
	$x \in \cL(x)$ for all $x \in \Supp(X)$, and
	$$\Pr\left[\Ac(X;R) \in \cL(X)\right] \geq 1-p$$
	for all sufficiently large $n$,
	where the random variable $X$ is uniformly distributed on $\zn$ and $R$ is uniformly random coin tosses for $\Ac$.
	In addition, if $p = \eps(n)$ for some negligible function $\eps(\cdot)$, then we simply say that
	$\Fc^{-1}$ has {\sf accessible max-entropy} at most $k$.
\end{definition}
It is easy to verify that, ignoring negligible terms,  the accessible Shannon entropy of $F^{-1}$ is at most its accessible max-entropy, \ie if the  accessible max-entropy of $F^{-1}$ is at most $k$, then its accessible Shannon entropy is at most $k$.  (We will later, \cref{sec:AccAvgMaxEnt}, introduce an in-between variant of accessible entropy; larger than Shanon smaller than max)


\remove{
	\begin{definition}\label{def:AccMaxEnt}
		Let $n$ be a security parameter and $\Fc \colon \zn\mapsto \zo^m$ a function. For $p = p(n) \in [0,1]$,
		we say that $\Fc^{-1}$ has \emph{$p$-accessible max-entropy} at most $k$ if
		for every \ppt $\Fc$-collision-finder $\Ac$, there exists a family of
		sets $\set{\cL(x)}_{x \in \Supp(X)}$ each of size at most $2^k$ such that
		$x \in \cL(x)$ for all $x \in \Supp(X)$ and
		$$\Pr\left[\Ac(X;R) \in \cL(X)\right] \geq 1-p$$
		for all sufficiently large $n$,
		where the random variable $X$ is uniformly distributed on $\zn$ and $R$ is uniformly random coin tosses for $\Ac$.
		In addition, if $p = \eps(n)$ for some negligible function $\eps(\cdot)$, then we simply say that
		$\Fc^{-1}$ has \emph{accessible max-entropy} at most $k$.
	\end{definition}
}

The reason that having an upper bound on accessible entropy is useful as an intermediate step towards constructing UOWHFs, is that
accessible max-entropy 0 is equivalent to target collision resistance (on random inputs):

\begin{definition}[$q$-collision-resistant]\label{def:collision-resistant}
	Let $\Fc \colon \zn\mapsto \zo^m$ be a function and $q = q(n) \in [0,1]$.
	We say that $\Fc$ is {\sf
		$q$-collision-resistant on random inputs} if for every
	\ppt $\Fc$-collision-finder $\Ac$,
	$$\Pr[\Ac(X;R) = X] \geq q,$$
	for all sufficiently large $n$,
	where the random variable $X$ is uniformly distributed on $\zn$ and $R$ is uniformly random coin tosses for $\Ac$.
	In addition, if $q = 1-\eps(n)$ for
	some negligible function $\eps(\cdot)$, we say that $\Fc$ is
	collision-resistant on random inputs.
\end{definition}

\begin{lemma}\label{lem:0accmax2CR}
	Let $n$ be a security parameter and $\Fc \colon \zn \mapsto \zo^m$
	be a function. Then, for any $p = p(n) \in (0,1)$, the following
	statements are equivalent:
	\begin{enumerate}
		\item[\rm{(1)}] $\Fc^{-1}$ has $p$-accessible max-entropy $0$.
		\item[\rm{(2)}] $\Fc$ is $(1-p)$-collision-resistant on random inputs.
	\end{enumerate}
	In particular, $\Fc^{-1}$ has accessible max-entropy $0$ iff
	$\Fc$ is collision-resistant on random inputs.
\end{lemma}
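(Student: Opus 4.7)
The plan is to observe that both conditions unfold to exactly the same probabilistic statement, so the equivalence is essentially syntactic once the definitions are unpacked. The key observation is that $2^0 = 1$, so the size constraint $|\cL(x)| \leq 2^0$ combined with the requirement $x \in \cL(x)$ forces $\cL(x) = \{x\}$ for every $x \in \Supp(X)$.

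For the direction $(1) \Rightarrow (2)$, I would let $\Ac$ be an arbitrary \ppt $\Fc$-collision-finder and apply the $p$-accessible max-entropy-$0$ hypothesis to obtain sets $\cL(x)$ with $|\cL(x)| \leq 1$ and $x \in \cL(x)$. Since a set of size at most $1$ that contains $x$ must equal $\{x\}$, the event $\Ac(X;R) \in \cL(X)$ is identical to the event $\Ac(X;R) = X$, and the guaranteed lower bound $1-p$ on its probability yields exactly the $(1-p)$-collision-resistance of $\Fc$ on random inputs.

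For the converse $(2) \Rightarrow (1)$, I would again let $\Ac$ be an arbitrary \ppt $\Fc$-collision-finder, and simply define $\cL(x) \eqdef \{x\}$ for every $x \in \Supp(X)$. Then $|\cL(x)| = 1 = 2^0$, $x \in \cL(x)$ trivially, and by the $(1-p)$-collision-resistance hypothesis $\Pr[\Ac(X;R) \in \cL(X)] = \Pr[\Ac(X;R) = X] \geq 1-p$, establishing $p$-accessible max-entropy $0$. The ``in particular'' clause follows immediately by specializing $p$ to a negligible function and matching this against the definitions of accessible max-entropy and (plain) collision-resistance on random inputs.

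I do not expect any obstacle here; the lemma is really a sanity check that the accessible max-entropy definition reduces to the familiar notion of target collision resistance at the extreme parameter $k=0$, and the proof is a one-line unfolding in each direction. The only thing worth flagging in the write-up is the explicit use of $|\cL(x)| \leq 1$ together with $x \in \cL(x)$ to conclude $\cL(x) = \{x\}$, which is the single nontrivial step.
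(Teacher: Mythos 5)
Your proposal is correct and matches the paper's proof exactly: the paper notes that $(1)\Rightarrow(2)$ follows from the definitions and that $(2)\Rightarrow(1)$ is obtained by taking $\cL(x)=\set{x}$, which is precisely your argument (you merely spell out the observation that $\size{\cL(x)}\leq 2^0=1$ together with $x\in\cL(x)$ forces $\cL(x)=\set{x}$). No issues.
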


\begin{proof}
	Note that (1) implies (2) follows readily from the definition. To see
	that (2) implies (1), simply take $\cL(x) = \set{x}$.
\end{proof}

While bounding $p$-accessible max-entropy with negligible $p$ is our ultimate goal,
one of our constructions will work by first giving a bound on accessible Shannon entropy, and then deducing a bound on $p$-accessible max-entropy for a value of $p<1$ using the following lemma:
\begin{lemma}\label{lem:AccHtoAccMax}
	Let $n$ be a security parameter and $\Fc \colon \zn \mapsto \zo^m$
	be a function. If $\Fc^{-1}$ has accessible Shannon entropy at most $k$,
	then $\Fc^{-1}$ has $p$-accessible max-entropy at most $k/p+O(2^{-k/p})$ for any
	$p = p(n) \in (0,1)$.
\end{lemma}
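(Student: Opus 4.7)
The plan is to go through a Markov-style ``typical set'' argument on the sample-entropy of the collision finder's output. Fix any \ppt{} $\Fc$-collision finder $\Ac$, let $X$ be uniform on $\zn$, $R$ uniform random coins, and write $Y = \Ac(X;R)$. The hypothesis says $\HSh(Y\mid X)\leq k$, which, unwinding the definition of conditional Shannon entropy, means $\Exp_{(x,y)\sim (X,Y)}\bigl[\Hsam_{Y\mid X}(y\mid x)\bigr]\leq k$. Since sample-entropy is nonnegative, Markov's inequality gives
\[
\Pr\bigl[\Hsam_{Y\mid X}(Y\mid X) > k/p\bigr] \;\leq\; p.
\]

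Next, I would define, for every $x\in\Supp(X)$, the set
\[
\cL(x) \;=\; \set{y\in\Fc^{-1}(\Fc(x)) \;:\; \Pr_{R}[\Ac(x;R)=y]\geq 2^{-k/p}} \,\cup\, \set{x}.
\]
The ``$y$-part'' has size at most $2^{k/p}$, because the probabilities $\Pr_R[\Ac(x;R)=y]$ sum to at most $1$; throwing in the singleton $\{x\}$ (required by the definition of $p$-accessible max-entropy) gives $\size{\cL(x)}\leq 2^{k/p}+1$. Taking logs and using $\log(1+2^{-k/p})\leq 2^{-k/p}/\ln 2 = O(2^{-k/p})$ yields the desired bound
\[
\log\size{\cL(x)} \;\leq\; k/p + O(2^{-k/p}).
\]
Finally, by the Markov step above,
\[
\Pr[Y\in\cL(X)] \;\geq\; \Pr\bigl[\Hsam_{Y\mid X}(Y\mid X)\leq k/p\bigr] \;\geq\; 1-p,
\]
since any $y$ with $\Hsam_{Y\mid X}(y\mid x)\leq k/p$ satisfies $\Pr_R[\Ac(x;R)=y]\geq 2^{-k/p}$ and thus lies in $\cL(x)$. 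Because $\Ac$ was an arbitrary \ppt{} $\Fc$-collision finder, this establishes the claimed $p$-accessible max-entropy bound.

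\paragraph{Main obstacle.}
The proof is essentially a one-line Markov argument, so there is no serious obstacle. The only point that requires a little care is correctly handling the definitional requirement $x\in\cL(x)$: this forces the extra $+1$ in $\size{\cL(x)}\leq 2^{k/p}+1$ and is precisely the source of the $O(2^{-k/p})$ additive slack in the exponent. One also has to be a bit careful that $\cL(x)\subseteq\Fc^{-1}(\Fc(x))$ automatically (since $\Ac$ is a collision finder, all of its outputs are preimages), so no compatibility issue arises with the collision-finder constraint.
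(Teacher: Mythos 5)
Your proof is correct and is essentially identical to the paper's own argument: the same Markov bound on the conditional sample-entropy, the same sets $\cL(x)=\set{x}\cup\set{x'\colon \Pr_r[\Ac(x;r)=x']\geq 2^{-k/p}}$, and the same accounting of the $+1$ from forcing $x\in\cL(x)$ as the $O(2^{-k/p})$ slack. Nothing further is needed.
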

\begin{proof}
	Fix any \ppt $\Fc$-collision-finder $\Ac$. From the bound on accessible Shannon entropy,
	we have that $\HSh(\Ac(X;R)\mid X) \leq k$. Applying Markov's inequality, we have
	\begin{align*}
		\Pr_{x\getsr X,r\getsr R} \left[ \Hsam_{\Ac(X;R)\mid X}(\Ac(x;r)\mid x)\leq k/p \right] \geq 1-p
	\end{align*}
	Take $\cL(x)$ to be the set:
	$$\cL(x) = \set{ x } \cup \set{x' \colon \Hsam_{\Ac(X;R)\mid X}(x'\mid x)\leq k/p
	}$$ We may rewrite $\cL(x)$ as $\set{ x } \cup \set{ x'
		\colon\Pr_{r}[\Ac(x;r) = x'] \geq 2^{-k/p}}$. It is easy to see that $\size{\cL(x)}
	\leq 2^{k/p}+1$ and thus $\Fc^{-1}$ has $p$-accessible max-entropy at
	most $k/p+O(2^{-k/p})$.
\end{proof}
Once we have a bound on $p$-accessible max-entropy for some $p<1$, we need to apply several transformations to obtain a function with a good bound on $\negl(n)$-accessible max-entropy.

\subsubsection{Accessible average max-entropy}\label{sec:AccAvgMaxEnt}
Our second construction (which achieves better parameters), starts with a bound
on a different average-case form of accessible entropy, which is stronger
than bounding the accessible Shannon entropy. The benefit of this notion it that it can be converted
more efficiently to $\negl(n)$-accessible max-entropy, by simply taking repetitions.

To motivate the definition, recall that
a bound on accessible Shannon entropy means that the sample entropy $\Hsam_{\Ac(X;R)\mid X}(x'\mid x)$ is small on average over $x\getsr X$ and
$x'\getsr \Ac(x;R)$. This sample entropy may depend on both the input $x$ and the $x'$
output by the adversary (which in turn may depend on its coin tosses). A stronger
requirement is to say that we have upper bounds $k(x)$ on the sample entropy that depend \emph{only on $x$}. The following definition captures this idea, thinking of
$k(x)=\log\size{\cL(x)}$. (We work with sets rather than sample entropy to avoid paying the $\log(1/\eps)$ loss in \cref{lem:MaxEntvsSets}.)

\begin{definition}[accessible average max-entropy]\label{def:AccAvgMaxEnt}
	Let $n$ be a security parameter and $\Fc \colon \zn\mapsto \zo^m$ a function.
	We say that $\Fc^{-1}$ has {\sf accessible average max-entropy} at
	most $k$ if for every \ppt $\Fc$-collision-finder $\Ac$, there exists
	a family of sets $\set{\cL(x)}_{x \in \Supp(X)}$ and a negligible function
	$\eps=\eps(n)$ such that $x \in \cL(x)$ for all $x \in \Supp(X)$,
	$\Exp[\log \size{\cL(X)}] \leq k$ and
	$$\Pr\left[\Ac(X;R) \in \cL(X)\right] \geq 1-\eps(n),$$
	for all sufficiently large $n$,
	where the random variable $X$ is uniformly distributed on $\zn$ and $R$ is uniformly random coin tosses for $\Ac$.
\end{definition}
It is easy to verify that, ignoring negligible terms, the accessible average max-entropy of $F^{-1}$ is at least its accessible Shannon entropy and at most its  accessible max-entropy.

\section{Inaccessible entropy from one-way functions}\label{sec:OwfToIA}
We present two constructions of inaccessible entropy functions from one-way functions. The one in  \cref{sec:OwfToIASahannon} is extremely simple and merely trims the one-way function output. The one in \cref{sec:OwfToIAERomp} is somewhat more complicated (in the spirit of the first step of \citet{Rompel90}, thought still significantly simpler) that yields a more efficient overall construction.

\subsection{A direct construction}\label{sec:OwfToIASahannon}

The goal of this section is to prove the following theorem:
\begin{theorem}\label{thm:OwfToIASahannon}
	Let $f: \set{0,1}^n \mapsto \set{0,1}^n$ be a one-way function and
	define $\Fc$ over $\set{0,1}^n \times [n]$ as $\Fc(x,i) =
	f(x)_{1,\ldots,i-1}$.  Then, $\Fc^{-1}$ has accessible Shannon entropy
	at most $\HSh(Z\mid \Fc(Z)) - \frac{1}{64 n^2}$, where $Z=(X,I)$ is uniformly
	distributed over $\set{0,1}^n \times [n]$.
\end{theorem}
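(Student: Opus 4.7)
The plan is to argue by contradiction: assume a \ppt $\Fc$-collision finder $\Ac$ with $\HSh(\Ac(Z;R)\mid Z) > \HSh(Z\mid \Fc(Z)) - 1/(64n^2)$ and use $\Ac$ to distributionally invert $f$, contradicting one-wayness via the equivalence of \cite{ImpagliazzoLu89}. The first step is to rewrite the entropy gap as an averaged $\KL$ divergence. For each $(x,i)\in\zn\times[n]$, let $P_{x,i}$ be the distribution of the $x$-coordinate of $\Ac(x,i;R)$ and let $Q_{x,i}$ be uniform on the preimage set $\set{x'\in\zn : f(x')_{1,\ldots,i-1} = f(x)_{1,\ldots,i-1}}$. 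Because $\Ac$ is an $\Fc$-collision finder, $\Supp(P_{x,i}) \subseteq \Supp(Q_{x,i})$, and because $Q_{x,i}$ is flat, $\HSh(Q_{x,i}) - \HSh(P_{x,i}) = \KL(P_{x,i} \| Q_{x,i})$. Using that $I$ is determined by the length of $\Fc(Z)$, averaging over $(x,i)$ gives $\HSh(Z \mid \Fc(Z)) - \HSh(\Ac(Z;R) \mid Z) = \Exp_{(x,i)}[\KL(P_{x,i} \| Q_{x,i})]$, so the hypothesis yields $\Exp_{(x,i)}[\KL(P_{x,i} \| Q_{x,i})] < 1/(64n^2)$.

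Next I build a ``walker'' inverter $\Inv$: on input $y\in\zn$, $\Inv$ samples $x_0\getsr\zn$ and, for $j=1,\ldots,n$, repeatedly resamples $x_j\getsr\Ac(x_{j-1},j)$ until $f(x_j)_j = y_j$, aborting the run after $\poly(n)$ consecutive failures at some step. Since each call preserves the prefix $f(x_j)_{1,\ldots,j-1} = f(x_{j-1})_{1,\ldots,j-1}$, whenever $\Inv$ does not abort its output $x_n$ satisfies $f(x_n) = y$. If $P_{x,i} = Q_{x,i}$ everywhere, then on target $Y = f(X)$ with $X$ uniform, $\Inv(Y)$ would return a uniformly random preimage of $Y$ in expected polynomial time, because $\Pr[f(X)_j = Y_j \mid f(X)_{1,\ldots,j-1} = Y_{1,\ldots,j-1}]$ is bounded away from zero on average over $Y$.

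The main obstacle is showing that the small but nonzero KL bound still suffices. Using the chain rule for KL divergence, the divergence between the full walker trace under $P$ and the ideal trace under $Q$ (for the same target $Y$) decomposes into a telescoping sum of per-step divergences, which after unfolding the expectation over $j\in[n]$ is dominated by $n \cdot \Exp_{(x,i)}[\KL(P_{x,i} \| Q_{x,i})] < 1/(64n)$; Pinsker then turns this into vanishing statistical distance between $\Inv(f(X))$ and a uniform preimage of $f(X)$, providing the desired distributional inverter. Two technical points need care: (i) the rejection-sampling trial count must remain polynomial, handled by truncating loops after $\poly(n)$ failures and absorbing the truncation error into the statistical-distance budget; and (ii) invoking the per-step KL bound requires the walker's state $x_{j-1}$ to be close to its ideal marginal, which follows by a hybrid argument using the small cumulative divergence accumulated so far.
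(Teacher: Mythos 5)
Your overall architecture is the same as the paper's: reduce to (distributional) inversion of $f$, build the prefix-extending inverter that at step $j$ repeatedly calls the collision finder on $(x_{j-1},j)$ until the $j$-th bit of $f$ matches $y_j$, observe that the ideal walker's state stays uniform, and argue that a collision finder with near-maximal entropy makes the real walker track the ideal one. The genuine gap is the central quantitative claim that ``the divergence between the full walker trace under $P$ and the ideal trace under $Q$ \ldots is dominated by $n\cdot\Exp_{(x,i)}[\KL(P_{x,i}\,\|\,Q_{x,i})]$.'' The chain rule does give a telescoping sum, but each summand is the divergence of the \emph{rejection-sampled} step, $\KL(P_{x,j}|_{E_{y_j}} \,\|\, Q_{x,j}|_{E_{y_j}})$ with $E_b$ the event $f(\cdot)_j=b$, and this is not controlled by $\KL(P_{x,j}\,\|\,Q_{x,j})$. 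Decomposing $\KL(P_{x,j}\,\|\,Q_{x,j})$ over the bit $f(\cdot)_j$ yields $\sum_b P_{x,j}(E_b)\cdot\KL(P|_{E_b}\,\|\,Q|_{E_b})\leq\KL(P_{x,j}\,\|\,Q_{x,j})$, a bound weighted by the \emph{collision finder's} branch probabilities; but in the walker the branch $y_j=b$ occurs with the \emph{ideal} probability $p(b\mid y_{1,\ldots,j-1})=Q_{x,j}(E_b)$, which can exceed $P_{x,j}(E_b)$ by a large factor precisely on the branches that matter. This re-weighting is exactly the difficulty the paper singles out (``we cannot immediately deduce an upper bound on the number of calls \ldots''), and it is resolved there by the coupling argument of \cref{lem:boundingTheEvents}, where the $1/p(y_i\mid y_{1,\ldots,i-1})$ blow-up caused by conditioning is cancelled against the probability $p(y_i\mid y_{1,\ldots,i-1})$ of taking that branch, giving the clean per-step bound $4\eps(x,i)$. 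A KL version can be salvaged by a case analysis (if $Q(E_b)$ is not tiny then Pinsker forces $P(E_b)=\Omega(Q(E_b))$; if it is tiny the branch is rarely taken but its conditional divergence can be as large as $n$), but this costs roughly an extra factor of $n$ over your stated bound and requires re-tuning the constants; as written the inequality is asserted, not proved.

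Your fix (ii) has a second, related defect: the chain-rule expectation over prefixes is taken under the \emph{real} walker's distribution, and bounding its deviation from the ideal marginal ``using the small cumulative divergence accumulated so far'' is circular --- the per-step summands are only bounded by $\log$ of the universe size, so paying $n\cdot\mathrm{SD}(\text{prefixes})$ per step with $\mathrm{SD}$ obtained from the very cumulative bound you are trying to establish does not close. The paper avoids both problems simultaneously by coupling $\Ac$ to the optimal $\At$ on shared randomness so that they agree except with probability $\eps(x,i)$, running $\Inv^{\Ac}$ and $\Inv^{\At}$ in lockstep, and doing the entire analysis (including the expected number of oracle calls, hence the truncation) under the ideal execution, whose state is exactly uniform on $\zn$ at every step by induction; the real execution inherits the conclusion on the event that the two runs never diverge, which happens with probability at least $1/2$. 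I would recommend either adopting that coupling route or carrying out the $P$-versus-$Q$ re-weighting and the prefix-transfer step explicitly, since they are the actual content of the theorem.
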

We do not know whether the function $\Fc^{-1}$ has even less accessible
Shannon entropy, (say, with a gap of $\Omega(\frac{1}{n})$).
However, it seems that a significantly stronger bound would require much more
effort, and even improving the bound to $\Omega(\frac{1}{n})$
does not seem to yield an overall construction which is as efficient
as the one resulting from \cref{sec:OwfToIAERomp}.  Therefore
we aim to present a proof which is as simple as possible.

We begin with a high-level overview of our approach. Recall from
\cref{prop:optCF} the ``optimal'' $\Fc$-collision-finder $\At$
that computes $\Fc^{-1}(\Fc(\cdot))$. The proof basically
proceeds in three steps:
\begin{enumerate}
	\item First, we show that it is easy to invert $f$ using $\At$ (\cref{lem:nOfRepetitions}).
	\item Next, we show that if a $\Fc$-collision-finder $\Ac$ has high
	accessible Shannon entropy, then it must behave very similarly
	to $\At$ (\cref{lem:expectationOfEpsilon}).
	\item Finally, we show that if $\Ac$ behaves very similarly to $\At$,
	then it is also easy to invert $f$ using
	$\Ac$ (\cref{lem:boundingTheEvents}).
\end{enumerate}
We may then deduce that if $f$ is one-way, any $\Fc$-collision-finder $\Ac$
must have accessible Shannon entropy bounded away from $\HSh(Z\mid \Fc(Z))$.

\paragraph{Step 1.} Suppose we have an optimal collision finder $\At(x,i;r)$ that outputs a uniform random element from
$\Fc^{-1}(\Fc(x,i))$.
In order to invert an element $y$, we repeat the following process:
start with an arbitrary element $x^{(0)}$ and use $\At$ to
find an element $x^{(1)}$ such that $f(x^{(1)})$ has the same
first bit as $y$. In the $i$'th step find $x^{(i)}$ such that
the first $i$ bits of $f(x^{(i)})$ equal $y_{1,\ldots,i}$ (until $i=n$).

This is done more formally in the following algorithm for
an arbitrary oracle $\CF$ which we set to $\At$ in the
first lemma we prove.
The algorithm $\ExtendOnex$ does a single step.
Besides the new symbol $x'$ which we are interested in,
$\ExtendOnex$ also returns the number of calls which it did to the oracle.
This is completely uninteresting to the overall algorithm,
but we use it later in the analysis when we bound the number of
oracle queries made by $\ExtendOnex$.

\bigskip
\noindent\framebox{
	\begin{minipage}{16cm}
		\noindent \textbf{Algorithm $\ExtendOnex$}
		
		\medskip\hrule\medskip
		\textbf{Oracle:} An $\Fc$-collision finder $\CF$.\\
		\textbf{Input:} $x \in \set{0,1}^n$, $b \in \set{0,1}$, $i \in [n]$.
		
		\medskip\hrule\medskip
		
		$j \eqdef 0$\\
		\textbf{repeat}\\
		\mbox\qquad $x' \eqdef \CF(x,i)$\\
		\mbox\qquad $j := j+1$\\
		\textbf{until} $f(x')_i = b$\\
		\textbf{return} $(x',j)$
	\end{minipage}
}
\bigskip

\bigskip
\noindent\framebox{
	\begin{minipage}{16cm}
		\noindent \textbf{Inverter $\Inv$}
		\medskip
		
		\hrule
		
		\medskip
		
		\textbf{Oracle:} An $\Fc$-collision finder $\CF$.\\
		\textbf{Input:} $y \in \set{0,1}^n$
		
		\medskip\hrule\medskip
		
		$x^{(0)} \getsr U_n$\\
		\textbf{for} $i=1$ \textbf{to} $n$ \textbf{do}:\\
		\mbox\qquad $(x^{(i)},j) \eqdef \ExtendOnex^\CF(x^{(i-1)},y_i,i)$\\
		\textbf{done}\\
		\textbf{return} $x^{(n)}$
	\end{minipage}
}
\bigskip

We first show that with our optimal collision finder $\At$, the inverter
inverts with only $2n$ calls in expectation (even though it can
happen that it runs forever).
Towards proving that, we define $p(b\mid y_{1,\ldots,i-1})$ as the probability that the $i$'th bit of $f(x)$
equals $b$, conditioned on the event that $f(x)_{1,\ldots,i-1} =
y_{1,\ldots,i-1}$ (or $0$ if $f(x)_{1,\ldots,i-1} = y_{1,\ldots,i-1}$ is
impossible).

\begin{lemma}\label{lem:nOfRepetitions}
	The expected number of calls to $\At$ in a random execution of $\Inv^{\At}(y=f(x))$ with $x\getsr \zn$, is at most $2n$.
\end{lemma}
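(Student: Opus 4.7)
The plan is to analyze the expected cost iteration-by-iteration, and show that the expectation over $y=f(X)$ of the number of queries in each iteration is at most $2$, which gives the $2n$ bound by linearity.

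First I would establish the invariant, by induction on $i$, that whenever the loop begins iteration $i$ the current point $x^{(i-1)}$ satisfies $f(x^{(i-1)})_{1,\ldots,i-1}=y_{1,\ldots,i-1}$. For $i=1$ this is vacuous. For the inductive step, each call to $\At(x^{(i-1)},i)$ returns a point $x'$ with $\Fc(x',i)=\Fc(x^{(i-1)},i)$, i.e.\ $f(x')_{1,\ldots,i-1}=f(x^{(i-1)})_{1,\ldots,i-1}=y_{1,\ldots,i-1}$; the $\mathbf{repeat}$-loop exits only when also $f(x')_i=y_i$, so the invariant is preserved.

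Next, because $\At$ returns a \emph{uniformly random} element of $\Fc^{-1}(\Fc(x^{(i-1)},i))=\{x' : f(x')_{1,\ldots,i-1}=y_{1,\ldots,i-1}\}$ and successive calls are independent, the success probability of a single call in iteration $i$ is exactly $p(y_i\mid y_{1,\ldots,i-1})$ (using the definition given just above the lemma). Hence the number $J_i$ of calls made in iteration $i$ is geometric with this parameter, and
\[
\Exp[J_i \mid y] \;=\; \frac{1}{p(y_i\mid y_{1,\ldots,i-1})}.
\]

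The key step is then to take the expectation over $y=f(X)$ with $X\getsr U_n$. Writing $q(y_{1,\ldots,i}):=\Pr[f(X)_{1,\ldots,i}=y_{1,\ldots,i}]$ and noting $q(y_{1,\ldots,i})=q(y_{1,\ldots,i-1})\cdot p(y_i\mid y_{1,\ldots,i-1})$, we get
\[
\Exp_y\!\left[\frac{1}{p(y_i\mid y_{1,\ldots,i-1})}\right]
=\sum_{y_{1,\ldots,i}:\,q(y_{1,\ldots,i})>0}\frac{q(y_{1,\ldots,i})}{p(y_i\mid y_{1,\ldots,i-1})}
=\sum_{y_{1,\ldots,i-1}}q(y_{1,\ldots,i-1})\cdot|\{y_i\in\{0,1\}:p(y_i\mid y_{1,\ldots,i-1})>0\}|\le 2.
\]
Summing over $i=1,\ldots,n$ yields $\Exp[\sum_i J_i]\le 2n$, which is exactly the claim; I would note that the argument only uses that $y$ is distributed as $f(U_n)$, not the precise value of $x$.

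The only mildly delicate point is making sure we condition correctly on the random variable $y$ (as opposed to on $x$), so that the $x^{(0)}\getsr U_n$ sampled inside $\Inv$ is independent of $y$; once that is clear the expectations decouple cleanly and the telescoping $q(y_{1,\ldots,i})=q(y_{1,\ldots,i-1})\,p(y_i\mid y_{1,\ldots,i-1})$ does all the work. No obstacle beyond bookkeeping is expected.
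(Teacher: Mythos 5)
Your proposal is correct and follows essentially the same route as the paper's proof: condition on the prefix $y_{1,\ldots,i-1}$, observe that the conditional expected number of calls in iteration $i$ is $1/p(y_i\mid y_{1,\ldots,i-1})$, average over $y_i$ (your telescoping identity is exactly the paper's computation $\sum_b p(b\mid\cdot)\cdot\frac{1}{p(b\mid\cdot)}\le 2$), and conclude by linearity of expectation. Your write-up is merely more explicit about the loop invariant and the geometric distribution of the per-iteration count, which the paper leaves implicit.
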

\begin{proof}
	Fix some string $y_{1,\ldots,i-1}$ in the image of $\Fc$.  We want
	to study the expected number of calls to $\At(x^{(i-1)},i)$ in case
	$F(x^{(i-1)},i) = y_{1,\ldots,i-1}$.
	
	If we would know $y_{i}$, then this expected number of calls would
	be $\frac{1}{p(y_i\mid y_{1,\ldots,i-1})}$.  Since $y_i = 0$ with
	probability $p(0\mid y_{1,\ldots,i-1})$ we get that the expected number
	of calls is $1$ if either of the probabilities is $0$ and
	$p(0\mid y_{1,\ldots,i-1}) \cdot \frac{1}{p(0\mid y_{1,\ldots,i-1})} +
	p(1\mid y_{1,\ldots,i-1}) \cdot \frac{1}{1\mid p(y_{1,\ldots,i-1})} = 2$
	otherwise.  Using linearity of expectation we get the result.
\end{proof}

\paragraph{Step 2.} Given an
$\Fc$-collision finder $\Ac$,  we define $\epsilon(x,i)$ to be the
statistical distance of the distribution of $\Ac(x,i;r)$ and the
the output distribution of $\At(x,i;r)$ (which equals the
uniform distribution over $\Fc^{-1}(\Fc(x,i))$).

We want to show that if $\Ac$ has high accessible Shannon entropy,
then $\Ac$ behaves very similarly to $\At$.
The next lemma formalizes this by stating that $\eps(x,i)$ is small
on average (over the uniform random choice of $x \in \zo^n$
and $i \in [n]$).
The lemma follows
by applying Jensen's inequality on the well known relationship between
entropy gap and statistical distance.
\begin{lemma}\label{lem:expectationOfEpsilon}
	Assume $\HSh(\Ac(Z)) \geq \HSh(Z\mid \Fc(Z)) - \frac{1}{64 n^2}$, then
	$\Exp_{i\getsr [n],x\getsr \zn}[\eps(x,i)] \leq \frac{1}{8n}$.
\end{lemma}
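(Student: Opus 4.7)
My plan is to relate the Shannon entropy gap to expected statistical distance through KL divergence and Pinsker's inequality, exactly as the lemma's preamble suggests.

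First, fix any $z=(x,i) \in \zo^n \times [n]$, and let $S_z \eqdef \Fc^{-1}(\Fc(z))$. The optimal collision finder's output $\At(z;R)$ is by definition the uniform distribution $U_z$ on $S_z$, so it achieves the maximum Shannon entropy $\HSh(U_z) = \log \size{S_z}$. For any $\Fc$-collision finder $\Ac$, the output distribution $P_z \eqdef \Ac(z;R)$ is supported on $S_z$, and a direct calculation gives the standard identity
\[ \log\size{S_z} - \HSh(P_z) \;=\; \re{P_z}{U_z}, \]
where the right-hand side is the KL divergence (measured in bits).

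Averaging over $z \getsr Z$ and using the fact that conditioned on $\Fc(Z)=y$ the variable $Z$ is uniform on $\Fc^{-1}(y)$, we obtain $\HSh(Z\mid \Fc(Z)) = \Exp_z\bigl[\log\size{S_z}\bigr]$, and of course $\HSh(\Ac(Z)\mid Z)=\Exp_z[\HSh(P_z)]$. Hence the hypothesis of the lemma (read with the natural ``$\mid Z$'' conditioning) yields
\[ \Exp_{z\getsr Z}\bigl[\re{P_z}{U_z}\bigr] \;=\; \HSh(Z\mid \Fc(Z)) - \HSh(\Ac(Z)\mid Z) \;\leq\; \frac{1}{64 n^2}. \]

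Pinsker's inequality (with KL in bits) gives $\eps(z)^2 = \vecc{P_z - U_z}^2 \leq \tfrac{\ln 2}{2}\cdot \re{P_z}{U_z}$. Applying Jensen's inequality to the concave function $\sqrt{\cdot}$ and plugging in the bound above,
\[ \Exp_z[\eps(z)] \;\leq\; \sqrt{\Exp_z[\eps(z)^2]} \;\leq\; \sqrt{\frac{\ln 2}{2}\cdot \frac{1}{64 n^2}} \;=\; \frac{1}{n}\sqrt{\frac{\ln 2}{128}} \;<\; \frac{1}{8n}, \]
where the last inequality uses $\ln 2 < 1$ and $\sqrt{1/128} < 1/8$. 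This completes the argument. The only non-routine ingredient is recognizing the local entropy defect as a KL divergence so that Pinsker converts the Shannon gap into a bound on statistical distance; everything else is a clean application of Jensen.
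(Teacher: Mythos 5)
Your proposal is correct and follows essentially the same route as the paper: the paper likewise bounds $\eps(z)=\vecc{\At(z)-\Ac(z)}$ pointwise by $\sqrt{\HSh(\At(z))-\HSh(\Ac(z))}$ (citing the Cover--Thomas form of Pinsker's inequality, which is exactly your identification of the local entropy defect with $\re{P_z}{U_z}$), then applies Jensen and the identity $\HSh(\At(Z))=\HSh(Z\mid \Fc(Z))$ from \cref{prop:optCF}. Your version is marginally sharper because you keep the $\tfrac{\ln 2}{2}$ constant in Pinsker, but the decomposition and the two key steps are identical.
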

\begin{proof}
	\begin{align*}
		\vecc{(Z,\At(Z)) - (Z,\Ac(Z))}
		&=
		\Exp_{z\getsr Z}[\size{(z,\At(z)) - (z,\Ac(z))}]\\
		&\leq
		\Exp_{z\getsr Z}[\sqrt{\HSh(\At(z)) - \HSh(\Ac(z))}]\\
		&\leq
		\sqrt{\Exp_{z\getsr Z}[\HSh(\At(z)) - \HSh(\Ac(z))]}\\
		&\leq \frac{1}{8n}.
	\end{align*}
	The first inequality uses the fact that if $W$ is a random
	variable whose support is contained in a set $S$ and $U$ is the
	uniform distribution on $S$, then $\vecc{U-W}\leq
	\sqrt{\HSh(U)-\HSh(W)}$ (see \cite[Lemma 11.6.1]{CoverTh:InfoTheory}).
	The second inequality follows by Jensen's inequality.
	The final inequality uses $\HSh(\At(Z)) = \HSh(Z \mid \Fc(Z))$
	(\cref{prop:optCF}).
\end{proof}

\paragraph{Step 3.} We have seen now that $\Inv^{\At}$ inverts $f$ with $2n$
calls in expectation and that $\Ac$ behaves similarly to $\At$.  We
now want to show that $\Inv^\Ac$ also inverts $f$ efficiently. The
main technical difficulty is that even though
$\Inv^{\At}$ makes $2n$ calls to $\At$ in expectation and $\Ac$
and $\At$ are close in statistical distance, we cannot immediately deduce
an upper bound on the number of calls $\Inv^{\Ac}$ makes to $\Ac$.
Indeed, our analysis below exploits the fact that $\Inv$ and $\ExtendOnex$
have a fairly specific structure.

We will assume \wlg that
$$\Pr_{R}[\Ac(x,i;R) = \At(x,i;R)] = 1-\epsilon(x,i),$$ where $\At$
is an optimal collision finder as above.  This follows from
a standard coupling argument
since we do not require $\At$ to be polynomial time, and also
because we can extend the number of random bits~$\Ac$ uses (we
assume it just ignores unused ones).  To do this, $\At$ first computes
the statistics of $\Ac$ on input $(x,i)$, and also the result of
$\Ac(x,i;r)$.  He checks whether $\Ac(x,i;r)$ is one of the elements which
occur too often, and outputs a different, carefully chosen one, with
appropriate probability if this is the case.

We now show that in most executions of $\ExtendOnex$ it does not
matter whether we use~$\Ac$ or~$\At$ (that is, $\ExtendOnex$
makes the same number of oracle queries to $\Ac$ and $\At$,
and outputs the same value).
\begin{lemma}\label{lem:boundingTheEvents}
	For any $(x,i) \in \zn \times [n]$,
	and any $y_i \in \zo$, we have
	\begin{align}
		\Pr_{R}[\ExtendOnex^\Ac(x,y_i,i;R) = \ExtendOnex^{\At}(x,y_i,i;R)] \geq
		1-\frac{2\eps(x,i)}{p(y_i\mid y_{1,\ldots,i-1})}
	\end{align}
\end{lemma}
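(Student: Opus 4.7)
The plan is to exploit the coupling between $\Ac$ and $\At$ (set up just before the lemma, so that on every independent call they agree with probability $1-\eps(x,i)$) together with a union bound over the random number of oracle queries made by $\ExtendOnex^\At$. Let $y_{1,\ldots,i-1} = f(x)_{1,\ldots,i-1}$, and write $p = p(y_i \mid y_{1,\ldots,i-1})$; we may assume $p > 0$ since otherwise the right-hand side of the bound is vacuous.

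First I would observe the following sufficient condition for the two executions to coincide: if on \emph{every} call $j \leq T_{\At}$ (where $T_{\At}$ denotes the number of iterations performed by $\ExtendOnex^{\At}$) we have $\Ac(x,i;r_j) = \At(x,i;r_j)$, then the stopping times of the two \textbf{repeat} loops agree, and the returned value agrees as well. Consequently, letting $E_j$ denote the event ``$\Ac(x,i;r_j)\neq\At(x,i;r_j)$'', we get
\[
\Pr\bigl[\ExtendOnex^\Ac(x,y_i,i;R) \neq \ExtendOnex^{\At}(x,y_i,i;R)\bigr] \;\leq\; \Pr\bigl[\exists\, j \leq T_{\At} : E_j\bigr].
\]

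Next I would bound the right-hand side by a union bound using the key independence observation: since $\ExtendOnex$ draws fresh random coins $r_j$ for each call, the event $E_j$ depends only on $r_j$, whereas the event $\set{T_{\At} \geq j}$ (``$\At$ has not yet succeeded in rounds $1,\ldots,j-1$'') depends only on $r_1,\ldots,r_{j-1}$. By the definition of $\At$, each call of $\At(x,i;\cdot)$ outputs a uniformly random element of $\Fc^{-1}(\Fc(x,i))$, and hence independently satisfies $f(\cdot)_i = y_i$ with probability exactly $p$; thus $T_{\At}$ is geometric with parameter $p$, and in particular $\Pr[T_{\At}\geq j] = (1-p)^{j-1}$. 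Combining these facts:
\[
\Pr\bigl[\exists\, j \leq T_{\At}: E_j\bigr] \;\leq\; \sum_{j=1}^\infty \Pr[E_j]\cdot \Pr[T_{\At}\geq j] \;=\; \eps(x,i)\cdot \sum_{j=1}^\infty (1-p)^{j-1} \;=\; \frac{\eps(x,i)}{p},
\]
which is at most $2\eps(x,i)/p$ as required. The main (very mild) subtlety is just checking independence of $E_j$ from $\set{T_{\At} \geq j}$; everything else is essentially Wald's identity applied to the geometric stopping time of $\At$, and the factor of $2$ in the statement is slack that leaves room for minor edge cases (e.g., the randomness padding described in the remark preceding the lemma).
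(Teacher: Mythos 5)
Your proof is correct, and it takes a genuinely different route from the paper's. The paper decomposes over the value of $\min(J,\Jt)$ (the earlier of the two stopping times), notes that conditioned on this minimum being $j$ the agreement event reduces to a single fresh call, and then computes the conditional agreement probability \emph{exactly} as $\sum\min(P,P^*)/\sum\max(P,P^*)=\frac{p-\eps}{p+\eps}\geq 1-2\eps/p$ via the overlap identity for the coupled distributions. You instead bound the disagreement probability by a union bound over the rounds up to $\At$'s stopping time, using the fact that the disagreement event $E_j$ in round $j$ (probability $\eps(x,i)$ under the coupling) is independent of the event $\set{T_{\At}\geq j}$, which is determined by the earlier coins, and that $T_{\At}$ is geometric with parameter $p$; summing the geometric series gives disagreement probability at most $\eps/p$. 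Your sufficient condition for agreement (no disagreement in any round $j\leq T_{\At}$ forces equal stopping times and equal outputs) is the one point that needs care, and you verify it correctly. Your argument is more elementary --- it avoids the $\min/\max$ overlap computation entirely --- and it even yields the stronger constant $\eps/p$ in place of $2\eps/p$ (the paper's exact formula $\frac{2\eps}{p+\eps}$ is sharper only when $\eps>p$); the paper's approach, in exchange, gives the precise conditional agreement probability per round rather than an upper bound. Both proofs rest on the same two ingredients set up before the lemma: the maximal coupling of $\Ac$ with $\At$ and the freshness of the per-round randomness.
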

Note that the oracle algorithm $\ExtendOnex$ is deterministic,
and in the above expressions, $R$ refers to the coin tosses used
by the oracles that $\ExtendOnex$ queries, namely $\Ac$ and $\At$.
We stress that the lemma says that both the value $x'$ \emph{and} the
number $j$ returned are equal with high probability.

\begin{proof}
	Let $J=J(R)$ be the second coordinate of the output of
	$\ExtendOnex^{\Ac}(x,y_i,i;R)$ (\ie the counter) and $\Jt=\Jt(R)$ the analogous output of $\ExtendOnex^{\At}(x,y_i,i;R)$.
	We write
	\begin{align}
		\lefteqn{\Pr_{R}[\ExtendOnex^\Ac(x,y_i,i;R) = \ExtendOnex^{\At}(x,y_i,i;R)]=}\nonumber
		\\&
		\sum_{j \geq 1}
		\Pr_{R}[\min(J,\Jt) = j]\cdot
		\Pr_{R}[\ExtendOnex^\Ac(x,y_i,i;R) = \ExtendOnex^{\At}(x,y_i,i;R) |
		\min(J,\Jt)=j]\nonumber
		\\&
		=\Ex_{j\getsr P_{J}}\bigl[
		\Pr_{R}[\ExtendOnex^\Ac(x,y_i,i;R) = \ExtendOnex^{\At}(x,y_i,i;R) |
		\min(J,\Jt)=j]\bigr]\label{eq:extendOne}
	\end{align}
	where $P_J$ is some distribution over the integers which, as it turns out,
	we do not need to know.
	
	Let now $R'$ be the randomness used by $\Ac$ or
	$\At$ in round $j$.
	Then,
	\begin{align*}
		\lefteqn{\Pr_{R}[\ExtendOnex^\Ac(x,y_i,i;R) = \ExtendOnex^{\At}(x,y_i,i;R) |
			\min(J,\Jt)=j]}\\
		&=
		\Pr_{R'} \bigl [\Ac(x,i;R') = \At(x,i;R') \mid f(\Ac(x,i;R'))_i = y_i \lor f(\At(x,i;R'))_i
		= y_i \bigr],
	\end{align*}
	because each iteration of $\ExtendOnex$ uses fresh independent randomness.
	
	Let $P$ be the distribution over $\zo^n$ produced by $\Ac(x,i;R')$,
	and $P^*$ be the (uniform) distribution produced by $\At(x,i;R')$.
	For $p =
	p(y_i \mid y_{1,\ldots,i-1})$ and $\eps = \eps(x,i)$, it holds that 
	\begin{align*}
		\MoveEqLeft{\ppr{R'} {\Ac(x,i;R') = \At(x,i;R') \mid f(\Ac(x,i;R'))_i = y_i \lor f(\At(x,i;R'))_i
				= y_i }}\\
		&= \frac{\pr {\Ac(x,i;R') = \At(x,i;R') \land  (f(\Ac(x,i;R'))_i = y_i \lor f(\At(x,i;R'))_i
				= y_i )}}{\pr {  f(\Ac(x,i;R'))_i = y_i \lor f(\At(x,i;R'))_i
				= y_i }}\\
		&= \frac{\sum_{x' \in
				\Fc^{-1}(y_{1,\ldots,i})}  \pr{\Ac(x,i;R') = \At(x,i;R') = x' \land (\Ac(x,i;R') = x' \lor \At(x,i;R') = x')}}{\sum_{x' \in
				\Fc^{-1}(y_{1,\ldots,i})}  \pr{ \Ac(x,i;R') = x' \lor \At(x,i;R') = x'}}\\
		&= \frac{\sum_{x' \in
				\Fc^{-1}(y_{1,\ldots,i})} \min(P(x'),P^*(x'))}{\sum_{x \in
				\Fc^{-1}(y_{1,\ldots,i})} \max(P(x'),P^*(x'))}
		=
		\frac{p - \eps}{p + \eps}
		= \frac{1-\frac{\eps}{p}}
		{1+\frac{\eps}{p}}
		\geq 1-\frac{2\eps}{p}\;.
	\end{align*}
	For the  penultimate equality, note that  
	\begin{align*}
		\eps &= \SD(\Ac(x,i), \At(x,i)) = \sum_{x' \in
			\Fc^{-1}(y_{1,\ldots,i})} P^*(x') - \min(P(x'),P^*(x')).
	\end{align*}
	Hence,
	\begin{align*}
		\sum_{x' \in
			\Fc^{-1}(y_{1,\ldots,i})}\min(P(x'),P^*(x')) &= \sum_{x' \in
			\Fc^{-1}(y_{1,\ldots,i})}\min(P(x'),P^*(x'))  + P^\ast(x') - P^\ast(x')\\
		& = \sum_{x' \in
			\Fc^{-1}(y_{1,\ldots,i})} P^*(x')- \left(\sum_{x' \in
			\Fc^{-1}(y_{1,\ldots,i})} P^*(x') - \min(P(x'),P^*(x'))\right)\\
		&= p -\eps. 
	\end{align*}
	And similarly, $\sum_{x' \in
		\Fc^{-1}(y_{1,\ldots,i})}\max(P(x'),P^*(x'))  = p+\eps$.
	
	Collecting the equations and inserting into (\ref{eq:extendOne})
	proves the lemma.
\end{proof}

\paragraph{Putting everything together.}
We can now finish the proof of \cref{thm:OwfToIASahannon}.  Consider the following random variables: let $X$ be uniformly drawn from $\zn$ and let $Y = f(X)$.  Run $\Inv^\Ac(Y)$ and
$\Inv^{\At}(Y)$ in parallel, using the same randomness in both
executions.  Let $\XT^{(0)},\ldots,\XT^{(n)}$ be the random variables
which have the values assigned to $x^{(0)},\ldots,x^{(n)}$ in the run
of $\Inv^{\At}$. Finally, let the indicator variables $Q_i$ be $1$, iff the $i$'th call to
$\ExtendOnex$ in the above parallel run  is the \emph{first} call such that $\ExtendOnex^\Ac(\XT^{(i)},Y_i,i;\cdot) \neq \ExtendOnex^{\At}(\XT^{(i)},Y_i,i;\cdot)$.

We proceed to obtain an upper bound on $\Pr[Q_i = 1]$. Observe that for all $x \in \zo^n$:
\begin{align*}
	\lefteqn{\Pr[Q_i = 1 \mid \XT^{(i-1)} = x]}\\
	&= \Pr[\ExtendOnex^\Ac(x,Y_i,i;R) \neq \ExtendOnex^{\At}(x,Y_i,i;R) \mid \XT^{(i-1)} = x]\\
	&= \sum_{y_i \in \zo}
	p(y_i\mid f(x)_{1,\ldots,i-1}) \cdot
	\Pr[\ExtendOnex^\Ac(x,y_i,i;R) \neq \ExtendOnex^{\At}(x,y_i,i;R)] \\
	&\leq \sum_{y_i \in \zo}
	p(y_i\mid f(x)_{1,\ldots,i-1}) \cdot
	\frac{2 \epsilon(x,i)}{p(y_i\mid f(x)_{1,\ldots,i-1})} \\
	&= 4 \epsilon(x,i)
\end{align*}
where the inequality above follows by \cref{lem:boundingTheEvents}.
Averaging over $x$, we have that for all $i=1,\ldots,n$:
\begin{align}\label{eq:OwfToIASahannon:1}
	\Pr[Q_i = 1] \leq 4 \Exp_{x \getsr \zo^n}[\epsilon(x,i)]
\end{align}
Here, we use the fact that by induction on $i$, the random variable
$\XT^i$, for $i\in \set{0,\dots,n}$, is uniformly distributed in $\zn$
(it is uniform preimage of a uniformly chosen output). Using
\cref{eq:OwfToIASahannon:1}, we have
\begin{align*}
	\pr{\sum_{i=1}^n Q_i \geq 1} &= \sum_{i=1}^n \Pr[Q_i = 1]\\
	&=  n\cdot \Exp_{i\getsr [n],x\getsr \zn}[Q_i] \nonumber\\
	&\leq 4n \cdot \Exp_{i\getsr [n],x\getsr \zn}[\eps(x,i)]\\
	&\leq \frac{1}{2}
\end{align*}
where the last inequality follows from \cref{lem:expectationOfEpsilon}.
Hence, with probability $\frac{1}{2}$, a run of $\Inv^{\Ac}$
and $\Inv^{\At}$ produce the same output and use the same number
of queries to the oracles $\Ac$. Moreover, the probability that
$\Inv^{\At}$ uses more than $8n$ oracle queries is at most $\frac{1}{4}$
(by applying Markov's inequality on \cref{lem:nOfRepetitions}). Hence,
with probability $\frac{1}{4}$, $\Inv^{\Ac}$ inverts $f$ using $8n$ oracle queries
in total, which contradicts the one-wayness of $f$.
In order to make sure that $\Inv^{\Ac}$ runs in polynomial time, we just halt it
after $8n$ calls.

\newcommand{\tL}{\widetilde{\cL}}

\subsection{A more efficient construction}\label{sec:OwfToIAERomp}
The following theorem shows that a simplified variant of the first step of \cite{Rompel90} (which is also the first step of \cite{KatzKo05}) yields inaccessible entropy with much stronger guarantees than those obtained in \cref{sec:OwfToIASahannon}. The function we construct is $\Fc(x,g,i) = (g(f(x))_{1,\dots,i},g)$, where $g : \zn\mapsto \zn$ is a three-wise independent function. Since the composition of $g$ and $f$ is still a one-way function, \cref{thm:OwfToIASahannon} already implies that $\Fc^{-1}$ has inaccessible entropy. The benefits of the additional hashing step
are that \begin{enumerate}
	\item we get more inaccessible entropy ($\tilde\Theta(1/n)$ bits rather
	than $\tilde\Theta(1/n^2)$ bits), and
	\item we get a bound
	on accessible average max-entropy rather than accessible Shannon entropy. \end{enumerate}
These allow for a more efficient and simpler transformation of $\Fc$
into a UOWHF.

\begin{theorem}[Inaccessible average max-entropy from one-way functions]\label{thm:OwfToIAERomp}
	Let $f\colon \zn \mapsto \zn$ be a one-way function and let $\g = \set{g \colon \zn\mapsto \zn}$ be a family of constructible, three-wise independent hash functions. Define $\Fc$  over $\Dom(F)\eqdef \zn\times \g \times [n]$ by
	$$\Fc(x,g,i) = (g(f(x))_{1,\dots,i},g,i).$$
	Then, for every constant $d>0$, $\Fc^{-1}$ has accessible average max-entropy at most $\HSh(Z\mid \Fc(Z)) - (d\log n)/n$, where $Z$ is uniformly distributed over $\Dom(\Fc)$.
\end{theorem}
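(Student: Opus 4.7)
The plan is to proceed by contradiction along the lines of \cref{thm:OwfToIASahannon}: suppose some \ppt $\Fc$-collision-finder $\Ac$ witnesses accessible average max-entropy exceeding $\HSh(Z \mid \Fc(Z)) - (d\log n)/n$, which by \cref{def:AccAvgMaxEnt} means that no family of sets $\{\cL(x,g,i)\}$ with $x\in\cL(x,g,i)$ and $\Pr[\Ac(X,G,I;R) \in \cL(X,G,I)] \geq 1-\negl(n)$ can have $\Exp[\log|\cL(X,G,I)|]$ below the stated bound. From such an $\Ac$ I would extract a \ppt inverter for $f$ succeeding with inverse-polynomial probability, contradicting (distributional) one-wayness of $f$.

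The first structural step is to exploit three-wise independence of $\g$ to pin down $\HSh(Z \mid \Fc(Z))$ concretely. For any $(x,g,i)$ with $g(f(x))_{1,\ldots,i} = v$, the collision class $\Fc^{-1}(v,g,i)$ splits as $f^{-1}(f(x))$ (the ``true'' preimages, always present) together with a ``hash-driven'' pool $\{x' : f(x') \neq f(x) \land g(f(x'))_{1,\ldots,i} = v\}$. Pairwise independence of $g$ gives this pool expected size $(2^n - |f^{-1}(f(x))|)/2^i$, while three-wise independence gives a Chebyshev-type concentration bound. This identifies $\HSh(Z \mid \Fc(Z))$ (up to negligible error) as the average of $\log(|f^{-1}(f(X))| + (2^n - |f^{-1}(f(X))|)/2^I)$, and isolates the ``balanced'' regime of $i$ in which the two summands are comparable --- the regime the rest of the proof targets.

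I would then construct the inverter in essentially the same way as in \cref{sec:OwfToIASahannon}: on input $y$, sample $g \getsr \g$ and $x^{(0)} \getsr \zn$; for $i = 1,\ldots,n$, repeatedly call $\Ac(x^{(i-1)}, g, i-1)$ with fresh randomness, accepting the first output $x'$ with $g(f(x'))_i = g(y)_i$ and setting $x^{(i)} \eqdef x'$. After $n$ rounds, $g(f(x^{(n)})) = g(y)$, and three-wise independence of $g$ (ensuring $g$ has few collisions on the relevant image points) makes $f(x^{(n)}) = y$ with high probability. Hashing the output of $f$ before truncation brings two decisive benefits over \cref{sec:OwfToIASahannon}: each bit $g(y)_i$ is unbiased conditional on the previous bits, so the rejection filter passes with probability $\approx 1/2$ per call whenever $\Ac$'s output distribution spreads across both preimages of $g(f(\cdot))_i$; and because the working hypothesis is a bound on accessible \emph{average max-entropy} rather than Shannon entropy, per-level failures are directly encodable in a single set family $\cL(x,g,i)$ without passing through the Markov step of \cref{lem:AccHtoAccMax}.

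The main obstacle, and the heart of the analysis, is converting the small $(d\log n)/n$ entropy gap into an inverse-polynomial inversion advantage. The argument I envisage is contrapositive: if the inverter failed because at some level $i$ the rejection filter has success probability bounded away from $1/2$ on an inverse-polynomial fraction of $(x,g,i)$, then one could define $\cL(x,g,i)$ to be essentially the support of $\Ac(x,g,i;\cdot)$ restricted to the ``heavier'' half of $\Fc^{-1}(\Fc(x,g,i))$ with respect to bit $g(f(\cdot))_i$. Such an $\cL$ captures $\Ac$'s output with overwhelming probability, yet has $\Exp[\log|\cL|] \leq \HSh(Z\mid \Fc(Z)) - (d \log n)/n$ by a combinatorial gain from one-level pruning and the concentration bound above --- the $(\log n)/n$ scaling emerging naturally from the per-level accounting over $n$ levels and the constant $d$ absorbing logarithmic slack in the Chebyshev and negligibility margins. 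This contradicts the standing hypothesis, which shows that every \ppt $\Ac$ must admit the required $\cL$, establishing the claimed bound on accessible average max-entropy.
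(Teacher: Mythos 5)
Your proposal has a genuine gap: it ports the iterative, bit-by-bit inverter of \cref{sec:OwfToIASahannon} into a setting where its correctness analysis no longer applies. In \cref{sec:OwfToIASahannon} the inverter works because a near-maximal bound on accessible \emph{Shannon} entropy forces $\Ac(x,i;\cdot)$ to be statistically close to the uniform distribution on $\Fc^{-1}(\Fc(x,i))$ (\cref{lem:expectationOfEpsilon}); that closeness is what makes the rejection filter pass with probability close to $1/2$ and keeps the chain $x^{(0)},\ldots,x^{(n)}$ uniformly distributed. The negation of ``accessible average max-entropy at most $k$'' gives you nothing of the sort: an adversary can defeat every small set family $\set{\cL(x,g,i)}$ while remaining very far from uniform on $\Fc^{-1}(\Fc(x,g,i))$ (for instance by outputting $x$ itself half the time and spreading the remaining mass thinly over a huge set), so neither the per-level acceptance probability nor the invariance of the chain's distribution follows. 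Likewise, your proposed repair --- defining $\cL(x,g,i)$ as the ``heavier half'' of $\Ac$'s support at a level where the filter is biased --- does not produce sets whose expected log-size you can control: a biased filter does not imply the output is concentrated on few elements, and the claim that per-level pruning over $n$ levels yields a gap of $(d\log n)/n$ is unsupported (the analogous per-level accounting in \cref{sec:OwfToIASahannon} only yields a Shannon-entropy gap of $1/(64n^2)$, and the paper explicitly remarks that even $\Omega(1/n)$ seems out of reach by that route).

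The paper's proof is structured quite differently and, crucially, exhibits the set family \emph{explicitly and independently of the adversary}: $\cL(x,g,i)$ consists of the collisions $(x',g,i)$ whose image $f(x')$ is either $f(x)$ itself or ``$i$-light'', \ie $\size{f^{-1}(f(x'))}\le 2^{n-i}/n^c$. One claim shows, via a \emph{single-shot} inverter that samples $(x,i)$ and uses the constructibility of $\g$ to draw $g$ conditioned on $g(y)_{1\dots i}=g(f(x))_{1\dots i}$, that any \ppt collision finder lands in $\cL$ except with negligible probability, since reaching an $i$-heavy image $y\notin\tL(f(x),i)$ inverts $f$ with only an $n^{-c}$ loss. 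A second claim bounds $\Exp[\log\size{\cL(Z)}]$ by showing that a random element of $\Fc^{-1}(\Fc(Z))$ falls outside $\cL(Z)$ with probability $\Omega((c\log n)/n)$; the $(\log n)/n$ gap arises from the probability that the random truncation level $I$ lands in the $c\log n$-wide window where a random second image transitions from light to heavy, not from any per-level rejection accounting. To salvage your plan you would need to replace the iterative inverter by this kind of one-shot reduction and define $\cL$ in terms of the weight structure of $f$'s images rather than of the adversary's output distribution.
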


\begin{proof}
	Let $c$ be a sufficiently large constant (whose value to be determined later as a function of the constant $d$ in the theorem statement). The sets $\set{\cL(x,g,i)}_{x\in \zn, i\in [n],g\in \g}$ realizing the inaccessible entropy of $\Fc^{-1}$ are defined by
	\begin{align}
		\cL(x,g,i) = \{ (x',g,i) \colon f(x') \in \tL(f(x),i) \wedge
		g(f(x'))_{1,\ldots,i} = g(f(x))_{1,\ldots,i} \}
	\end{align}
	
	where for $y\in \zn$ and $i\in [n]$, we let
	\begin{align}\label{eq:rompelconstructTildeL}
		\tL(y,i) &= \set{y} \cup \set{y'\in \zn\colon \Hsam_{f(X)}(y') \geq (i + c\cdot \log n)}\\
		&= \set{y} \cup \set{y'\in \zn\colon |f^{-1}(y')| \leq 2^{n-i}/n^c}.\nonumber
	\end{align}
	Namely, $\tL(y,i)$ consists, in addition to $y$ itself, of ``$i$-light'' images \wrt $f$.\footnote{Recall that the sample entropy is defined as
		$\Hsam_{f(X)}(y) = \log(1/\Pr[f(X)=y]) = n-\log\size{f^{-1}(y)}$, so the ``heavy'' images, where $f^{-1}(y)$ is large, have low sample entropy.} As
	a warm-up, it is helpful to write down $\tL(y,i)$ and $\cL(x,g,i)$ for
	the case where $f$ is a one-way permutation.\footnote{If $f$ is
		a permutation, then $\tL(y,i)$ is given by:
		\begin{eqnarray*}
			\tL(y,i) = \begin{cases}
				\zn & \mbox{if $i \leq n-c \log n$} \\
				\{y\} & \mbox{otherwise.}
			\end{cases}
		\end{eqnarray*}
		Then, for all $x \in \zo^n$, we have
		$\Exp[|\cL(x,G,i)|] = 2^{n-i}$ for all $i \leq
		n-c \log n$ and $|\cL(x,g,i)| = 1$ for all $g \in \g$ and all $i
		> n- c \log n$. This means that the entropy gap between
		$F^{-1}(F(Z))$ and $\cL(X,G,I)$ is roughly
		$\frac{1}{n}\sum_{i > n-c \log n} n-i = \Omega(c^2 \log^2 n / n)$.
	}

	The proof of the theorem immediately follows by the following two claims.
	\begin{claima}\label{claim:hardSibling}
		For every \ppt $\Fc$-collision-finder $\Ac$ and every constant $c>0$, it holds that
		\[\Pr[\Ac(Z;R) \notin \cL(Z)] \leq \negl(n),\]
		where $Z$ is uniformly distributed over $\Dom(F)$ and $R$ is uniformly distributed over the random coins of $\Ac$.
	\end{claima}
	
	\begin{claima}\label{claim:ManyHardSibling}
		For any constant $c$  it holds that
		$$\Ex\left[\log \size{\cL(Z)}\right] \leq
		\Ex\left[\log \size{\Fc^{-1}(\Fc(Z))}\right] - \Omega\left(\frac{c\log n}{n}\right),$$
		where $Z$ is uniformly distributed in $\Dom(\Fc)$.
	\end{claima}
\end{proof}

\remove{
	Since  $\size{\cL(x,g,i)} = \size{f^{-1}(\tL(f(x),i))}$ for every $(x,g,i)$, the proof of the theorem follows.
	In the rest of the proof, we work primarily with $\tL$ and not $\cL$, as it significantly simplifies notations.
	
	Further, note that the sets $\tL$ are independent of $\g$.

	We show below in \cref{claim:hardSibling} and
	\cref{claim:ManyHardSibling} that the sets $\cL(x,g,i)$
	satisfy the following properties:
	\begin{itemize}
		\item First, the only accessible inputs of $\Fc$ come from preimages of
		$\tL(y,i)$, that is, for every \ppt $\Fc$-collision-finder $\Ac$,
		\[\Pr[\Ac(X,G,I;R) \notin \cL(X,G,I)] \leq \negl(n)\]
		\item Next, the sets $\tL$ and $\cL$ are not too large (relative to the
		real entropy of $\Fc$), that is,
		\[\Ex\left[\log \size{f^{-1}(\tL(f(X),I))}\right] \leq
		\Ex\left[\log \size{\Fc^{-1}(\Fc(X,G,I))}\right] - \Omega\left(\frac{c\log n}{n}\right)\]
		Note that for all $x,g,i$, $\size{\cL(x,g,i)} = \size{f^{-1}(\tL(f(x),i))}$.
	\end{itemize}
	The bound on accessible average max-entropy follows readily
	once we establish these claims.}

\subsubsection{Accessible inputs of $\Fc$ --- Proving \cref{claim:hardSibling}}


\begin{proof}[Proof of \cref{claim:hardSibling}]
	Recall that $Z= (X,G,I)$ is uniformly distributed over $\Dom(F)$, and that $R$ is uniformly distributed over the random coins of $\Ac$.
	Let $\Ac_1$ denote the first component of $\Ac$'s output. It suffices to show that
	\begin{align}
		\Pr[\Ac_1(X,G,I;R) \notin f^{-1}(\tL(f(X),I))] \leq \negl(n)
	\end{align}
	since the other two output components of $\Ac$ are required to equal
	$(G,I)$, due to the fact that $\Fc(X,G,I)$ determines $(G,I)$.

	We construct an inverter $\Inv$ such that for all
	$\Fc$-collision-finders $\Ac$ and for $c$ as in \cref{eq:rompelconstructTildeL} we have
	\begin{align}
		\Pr[\Inv^{\Ac}(Y) \in f^{-1}(Y)]
		\geq \frac{1}{n^c} \cdot \Pr[\Ac_1(X,G,I;R) \notin f^{-1}(\tL(f(X),I))]
	\end{align}
	where $Y = f(X)$, and the proof of \cref{claim:hardSibling} follows readily from the one-wayness of $f$.
	
	\bigskip
	\noindent\framebox{
		\begin{minipage}{16cm}
			\noindent \textbf{Inverter $\Inv^{\Ac}$}
			\medskip
			
			\hrule
			
			\medskip
			
			\textbf{Oracle:} An $\Fc$-collision finder $\Ac$.\\
			\textbf{Input:} $y \in \set{0,1}^n$
			
			\medskip\hrule\medskip
			
			$x \getsr \zn$\\
			$i \getsr [n]$\\
			$g' \getsr \g_{y,x,i} \eqdef \set{g \in \g \colon  g(y)_{1\dots i} = g(f(x))_{1\dots i}}$\\
			\textbf{return} $A_1( x, g', i; r)$
		\end{minipage}
	}
	\bigskip
	
	\noindent
	Observe that $\Inv$ can be implemented efficiently by sampling
	$g'$ as follows:
	pick first $z,z^\ast \in \zn$
	such that $z_{1\dots i} = z^\ast_{1\dots i}$ and use the constructibility
	of $\g$ to pick $g$ with $g(f(x)) = z$ and $g(y) = z^\ast$.
	
	We analyze the success probability of $\Inv^\Ac$. Using the short hand notation $\Pr_{g'}[\cdots]$ for $\Pr_{g'\getsr \g_{y,x,i}}[\cdots]$ we observe that 
	\begin{align}
		\Pr[\Inv^{\Ac}(Y) \in f^{-1}(Y)]&=\!\!\!\! \Exp_{x\getsr \zn,i \getsr [n]} \Bigl[ \sum_{y \in \zn}\nolimits
		\Pr[f(X) = y] \cdot \Pr_{g',r} [ \Ac_1(x,g',i;r) \in f^{-1}(y)] \Bigr]\\
		&\geq \Exp_{x,i} \Bigl[ \sum_{y \notin \tL(f(x),i)}\nolimits
		\frac{2^{-i}}{n^c} \cdot \Pr_{g',r}[ \Ac_1(x,g',i;r) \in f^{-1}(y)] \Bigr]\nonumber
	\end{align}
	where the inequality holds since $\Pr[f(X) = y] \geq 2^{-i}/n^c$ for any $y \notin \tL(f(x),i)$.
	
	Next, observe that for any  tuple $(y,x,i)$ such that $y \neq f(x)$, it holds that (where we distinguish $\Pr_{g'}[\cdots]$ as above from $\Pr_{g}[\cdots]= \Pr_{g \getsr \g}[\cdots]$)
	\begin{align}
		\Pr_{g',r}\bigl[ \Ac_1(x,g',i;r) \in f^{-1}(y)\bigr] &= \Pr_{g\getsr\g,r}\bigl[ \Ac_1(x,g,i;r) \in f^{-1}(y)
		\mid g(f(x))_{1\cdots i} = g(y)_{1\cdots i} \bigr] \\
		&= \frac{\Pr_{g,r}\bigl[ \Ac_1(x,g,i;r) \in f^{-1}(y) \wedge g(f(x))_{1\cdots i} = g(y)_{1\cdots i}\bigr]}
		{\Pr_{g,r}\bigl[g(f(x))_{1\cdots i} = g(y)_{1\cdots i} \bigr]}\nonumber \\
		&= \frac{\Pr_{g,r}\bigl[ \Ac_1(x,g,i;r) \in f^{-1}(y) \bigr]}
		{\Pr_{g,r}\bigl[g(f(x))_{1\cdots i} = g(y)_{1\cdots i} \bigr]}\nonumber \\
		&= 2^i \cdot \Pr_{g,r}\bigl[ \Ac_1(x,g,i;r) \in f^{-1}(y) \bigr].\nonumber
	\end{align}
	The second equality follows by Bayes' rule and the third uses the fact
	that $\Ac$ is a $\Fc$-collision finder. The last equality follows since $\g$ is two-wise independent (recall we assumed that $\g$ is three-wise independent) and $f(x) \neq y$.
	
	Combining the two preceding observations, and the fact that
	$f(x) \in \tL(f(x),i)$, we have that
	\begin{align*}
		\Pr[\Inv^{\Ac}(Y) \in f^{-1}(Y)]&\geq \Exp_{x\getsr \zn,i \getsr [n]} \Bigl[ \sum_{y \notin \tL(f(x),i)}\nolimits
		\frac{2^{-i}}{n^c} \cdot 2^i \cdot \Pr_{g\getsr \g,r}\bigl[ \Ac_1(x,g,i;r) \in f^{-1}(y)\bigr] \Bigr]\\
		&\geq \frac{1}{n^c} \cdot \Exp_{x,i} \Bigl[ \sum_{y \notin \tL(f(x),i)}\nolimits
		\cdot \Pr_{g,r}\bigl[ \Ac_1(x,g,i;r) \in f^{-1}(y)\bigr] \Bigr]\\
		&= \frac{1}{n^c}
		\cdot \Pr_{x,g,i,r}\bigl[ \Ac_1(x,g,i;r) \notin f^{-1}(\tL(f(x),i))\bigr],
	\end{align*}
	and the proof of the claim follows.
\end{proof}


\subsubsection{Upper bounding the size of $\cL$ --- Proving \cref{claim:ManyHardSibling}}
Recall that $Z=(X,G,I)$ is uniformly distributed over $\Dom(\Fc)$. In the following we relate the size of $\cL(Z)$ to that of $\Fc^{-1}(\Fc(Z))$.

We make use of the following property of
\emph{three}-wise independent  hash-functions.

\begin{claima}\label{claim:almost-uniform}
	Let $i$, $x$ and $x^\ast$, be such that $f(x) \neq f(x^\ast)$
	and $i \leq \HSh_{f(X)}(f(x^\ast)) \leq \HSh_{f(X)}(f(x))$. Then,
	\begin{align*}
		\ppr{{g \getsr \g}\atop{z'  \getsr \Fc^{-1}(\Fc(x,g,i))}}{z' = (x^\ast,g,i)} \geq \frac{2^{-n}}{8}.
	\end{align*}
	
\end{claima}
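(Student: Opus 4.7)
The plan is to factor the target probability and reduce everything to understanding the random set $S_g := \{x' \in \zn : g(f(x'))_{1,\ldots,i} = g(f(x))_{1,\ldots,i}\}$, whose cardinality equals $|\Fc^{-1}(\Fc(x,g,i))|$ since $\Fc(x,g,i)$ already pins down $g$ and $i$. Concretely I would write
\begin{align*}
\Pr_{g, z'}\bigl[z' = (x^\ast, g, i)\bigr] = \Exp_g\!\left[\frac{\mathbf{1}[x^\ast \in S_g]}{|S_g|}\right] = \Pr_g[x^\ast \in S_g] \cdot \Exp_g\!\left[\frac{1}{|S_g|} \;\Big|\; x^\ast \in S_g\right],
\end{align*}
and bound the two factors separately.

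For the first factor, pairwise independence (implied by three-wise) applied to the distinct points $f(x), f(x^\ast)$ gives $\Pr_g[x^\ast \in S_g] = 2^{-i}$. For the second factor, I would write $|S_g| = \sum_{y \in \zn} |f^{-1}(y)| \cdot \mathbf{1}[g(y)_{1,\ldots,i} = g(f(x))_{1,\ldots,i}]$ and invoke \emph{three}-wise independence: conditioning on $x^\ast \in S_g$ forces the $y = f(x)$ and $y = f(x^\ast)$ terms to contribute $|f^{-1}(f(x))| + |f^{-1}(f(x^\ast))|$ deterministically, while for every other $y$ the conditional probability that $g(y)_{1,\ldots,i} = g(f(x))_{1,\ldots,i}$ is still exactly $2^{-i}$. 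This yields the closed form
\begin{align*}
\Exp_g\bigl[\,|S_g| \,\big|\, x^\ast \in S_g\bigr] = |f^{-1}(f(x))| + |f^{-1}(f(x^\ast))| + 2^{-i}\bigl(2^n - |f^{-1}(f(x))| - |f^{-1}(f(x^\ast))|\bigr).
\end{align*}

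At this point the hypothesis enters: $i \leq \Hsam_{f(X)}(f(x^\ast)) \leq \Hsam_{f(X)}(f(x))$ is equivalent to $|f^{-1}(f(x))| \leq |f^{-1}(f(x^\ast))| \leq 2^{n-i}$, so the displayed expectation is at most $2 \cdot 2^{n-i} + 2^{-i} \cdot 2^{n} = 3 \cdot 2^{n-i}$. To turn this upper bound into a lower bound on $\Exp[1/|S_g| \mid \cdot]$, I would apply Jensen's inequality to the convex function $t \mapsto 1/t$ (legal since $|S_g| \geq 1$), obtaining $\Exp_g[1/|S_g| \mid x^\ast \in S_g] \geq 1/(3 \cdot 2^{n-i})$. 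Multiplying the two factors gives $2^{-i} \cdot (3 \cdot 2^{n-i})^{-1} = 2^{-n}/3 \geq 2^{-n}/8$, as desired.

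The main point to get right will be the Jensen step: a priori one might fear that $|S_g|$ is occasionally enormous, dragging $1/|S_g|$ toward zero on average, but the direction $\Exp[1/X] \geq 1/\Exp[X]$ for positive $X$ is exactly what convexity of $1/t$ delivers. A secondary sanity check is that three-wise (not merely pairwise) independence is used precisely when computing, for each third point $y \notin \{f(x), f(x^\ast)\}$, the conditional probability that $y \in S_g$ given $x^\ast \in S_g$; and the slack between the $1/3$ we obtain and the stated $1/8$ is comforting rather than worrying.
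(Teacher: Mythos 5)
Your proof is correct and follows the same overall structure as the paper's: both condition on the event $E$ that $g$ maps $f(x^\ast)$ and $f(x)$ to the same $i$-bit prefix (probability exactly $2^{-i}$ by pairwise independence), and both use three-wise independence to show that each third image point lands in the bucket with conditional probability $2^{-i}$, so that $\Exp_g[\,|S_g|\mid E\,] \leq 3\cdot 2^{n-i}$ under the stated sample-entropy hypotheses. The one place you diverge is the final step: the paper applies Markov's inequality to conclude that $|S_g| \leq 4\cdot 2^{n-i}$ with conditional probability at least $1/2$ (paying a factor of $2$ twice, hence the constant $1/8$), whereas you apply Jensen's inequality to the convex map $t\mapsto 1/t$ to get $\Exp_g[1/|S_g|\mid E] \geq 1/\Exp_g[|S_g|\mid E]$ directly, which is both cleaner and yields the slightly better constant $2^{-n}/3$; the Jensen step is legitimate since $|S_g|\geq 1$ always (as $x\in S_g$).
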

Note that in the above experiment it is always the case that $(g',i') = (g,i)$, where $z' = (x',g',i')$.

\begin{proof}
	Note that with probability $2^{-i}$ over $g \getsr \g$, it holds that
	$g(f(x^\ast))_{1\cdots i} = g(f(x))_{1\cdots i}$. Henceforth, we
	condition on this event that we denote by $E$, and let $w =
	g(f(x^\ast))_{1\cdots i} = g(f(x))_{1\cdots i}$. Observe that for
	a fixed $g$ satisfying $E$, it holds that 
	\begin{align}
		\ppr{z' \getsr \Fc^{-1}(\Fc(x,g,i))}{z' = (x^\ast,g,i) \mid E}
		\geq \frac{1}{\size{F^{-1}(F(x,g,i))}}
	\end{align}
	In order to obtain a lower bound on $\size{F^{-1}(F(x,g,i))}$, we first
	consider $x'$ such that $f(x') \notin \{ f(x), f(x^\ast) \}$. By the
	three-wise independence of $\g$,
	\begin{align}
		\Pr_{g\getsr \g}\bigl[g(f(x')) = w \mid E\bigr] = 2^{-i}
	\end{align}
	This implies that the expected number of $x'$ such that $g(f(x')) = w$
	and $f(x') \notin \{ f(x), f(x^\ast) \}$ is at most $2^{n-i}$. By
	Markov's inequality, we have that with probability at least $1/2$ over
	$g \getsr \g$ (conditioned on $E$),
	\begin{align}
		|F^{-1}(F(x,g,i))| \leq 2 \cdot 2^{n-i} + |f^{-1}(f(x))| +
		|f^{-1}(f(x^\ast))| \leq 4 \cdot 2^{n-i},
	\end{align}where the second inequality uses the fact that
	$i \leq \HSh_{f(X)}(f(x^\ast)) \leq \HSh_{f(X)}(f(x))$.
	Putting everything together, we have that the probability
	we obtain $x^\ast$ is at least $2^{-i} \cdot 1/2 \cdot (4 \cdot 2^{n-i})^{-1}
	= 2^{-n}/8$.
\end{proof}
We now use \cref{claim:almost-uniform} for proving \cref{claim:ManyHardSibling}.

\begin{proof}[Proof of \cref{claim:ManyHardSibling}]
	Let  $Z' = (X',G,I) \getsr F^{-1}(F(Z = (X,G,I)))$ (note that indeed the second and third coordinates of $Z$ and $Z'$ are guaranteed to  match).
	We claim that for proving \cref{claim:ManyHardSibling} it suffices to show that 
	\begin{align}\label{eq:OwfToIAERomp:3}
		\Pr[X' \notin f^{-1}(\tL(f(X),I))] \in \Omega\left(\frac{c \log(n)}{n}\right)
	\end{align}
	
	Indeed, let $\Lbar(z) \eqdef F^{-1}(F(z)) \setminus \cL(z)$,
	and  compute
	\begin{align}
		\Ex\left[\log\size{\Fc^{-1}(\Fc(Z))}\right] -
		\Ex\left[\log \size{\cL(Z)}\right] &=
		\Ex\left[\log\Bigl(1+ \frac{\size{\Lbar(Z)}}{\size{\cL(Z)}}\Bigr)\right]\\
		&\geq
		\Ex\left[\log\Bigl(1+ \frac{\size{\Lbar(Z)}}{\size{F^{-1}(F(Z))}}\Bigr)\right]\nonumber\\
		&\geq
		\frac12
		\Ex\left[\frac{\size{\Lbar(Z)}}{\size{F^{-1}(F(Z))}}\right] \nonumber\\
		&= \frac12 \Pr[X' \notin f^{-1}(\tL(f(X),I))]\nonumber\\
		& \in \Omega\left(\frac{c\log(n)}{n}\right).\nonumber
	\end{align}
	The first equality holds since by definition $\cL(Z) \subseteq \Fc^{-1}(\Fc(z))$, the second inequality holds since $\log(1+\alpha) \geq \frac{\alpha}{2}$ for $\alpha \in [0,1]$ and the containment by  \cref{eq:OwfToIAERomp:3}.

	We prove \cref{eq:OwfToIAERomp:3} in two steps. First, observe that for all $x$:
	\begin{align}
		\lefteqn{\Pr_{(g,i) \getsr \g \times [n], (x',g,i) \getsr \Fc^{-1}(\Fc(x,g,i))} \bigl [ f(x') \notin \tL(f(x),i) \big]}\\
		&\geq \Pr_{g,i,(x',g,i) \getsr \Fc^{-1}(\Fc(x,g,i))} \bigl [ f(x') \neq f(x)
		\land (i \leq \HSh_{f(X)}(f(x')) < i+c \log n)
		\big]\nonumber\\
		&= \frac{1}{n}\cdot
		\sum_{x^\ast \colon f(x) \neq f(x^\ast)}
		\Bigl(\sum_{i \leq \HSh_{f(X)}(f(x^\ast)) < i+c \log n}
		\Pr_{g} \bigl [ (x^\ast,g,i) \getsr \Fc^{-1}(\Fc(x,g,i)) \big]\Bigr)\nonumber\\
		&\geq \frac{1}{n} \cdot 
		\sum_{x^\ast \colon f(x) \neq f(x^\ast)
			\atop{\land (\HSh_{f(X)}(f(x^\ast)) \leq \HSh_{f(X)}(f(x)))}}
		\Bigl(\sum_{i \leq \HSh_{f(X)}(f(x^\ast)) < i+c \log n}
		\frac{2^{-n}}{8} \Bigr) \qquad \mbox{(by \cref{claim:almost-uniform})}\nonumber\\
		&\geq \frac{c \log n}{n} \cdot 
		\sum_{x^\ast \colon f(x) \neq f(x^\ast)
			\atop{\land (\HSh_{f(X)}(f(x^\ast)) \leq \HSh_{f(X)}(f(x)))}}
		\frac{2^{-n}}{8}\nonumber\\
		&= \frac{c \log n}{8n} \cdot \Pr_{x^\ast}\bigl[f(x) \neq f(x^\ast)
		\land (\HSh_{f(X)}(f(x^\ast)) \leq \HSh_{f(X)}(f(x)))\bigr].\nonumber
	\end{align}
	It follows that 
	\begin{align*}
		\Pr[X' \notin f^{-1} (\tL(f(X),I))] &= \Pr_{(x,g,i)  \getsr \zn\times \g \times [n]} \bigl [ x' \notin \tL(f(x),i) \colon
		(x',g,i) \getsr \Fc^{-1}(\Fc(x,g,i)) \big]\\
		&\geq \frac{c \log n}{8n} \cdot\Pr_{(x,x^\ast) \getsr \zn \times \zn}\bigl[f(x) \neq f(x^\ast)
		\land (\HSh_{f(X)}(f(x^\ast)) \leq \HSh_{f(X)}(f(x)))\bigr]\nonumber\\
		&\geq \frac{c \log n}{8n} \cdot \frac{1}{2} \cdot
		\Bigl(1-\Pr_{x,x^\ast}[f(x) \neq f(x^\ast)]\Bigr)\nonumber\\
		&\geq \frac{c \log n}{8n} \cdot \frac{1}{2} \cdot
		\frac12.\nonumber
	\end{align*}
	The last inequality holds since  the one-wayness of $f$ yields that $\Pr_{x,x^\ast}[f(x)=f(x^\ast)]$ is negligible (otherwise inverting $f$ is
	trivial). This concludes the the proof of \cref{eq:OwfToIAERomp:3}, and hence of the claim.
\end{proof}

\section{UOWHFs from inaccessible entropy}\label{sec:OWHFsfromIE}
In this section we show how to construct a UOWHF from any efficiently
computable function with a noticeable gap between real Shannon entropy
and either accessible average max-entropy or accessible Shannon
entropy. Recall that the more efficient construction from
\cref{sec:OwfToIAERomp} satisfies the former, and the more direct
construction from \cref{sec:OwfToIASahannon} satisfies the
latter. Combined with these constructions, we obtain two new
constructions of UOWHFs from any one-way function.

In both cases, we first transform the entropy gap into a noticeable
gap between real Shannon entropy and accessible \emph{max}-entropy. We begin
with the construction that starts from a gap between real Shannon entropy
and accessible average max-entropy because the transformation involves fewer steps
(and is also more efficient).

\newcommand{\Xt}{X^{(t)}}
\newcommand{\xt}{x^{(t)}}
\newcommand{\xtTag}{{x'}^{(t)}}
\renewcommand{\labelenumi}{\roman{enumi}.}
\newcommand{\lin}{\ell^\textsc{in}}
\newcommand{\lout}{\ell^\textsc{out}}
\newcommand{\fin}{\widetilde{n}}
\newcommand{\fout}{m}

\newcommand{\zeroIn}{n_0}
\newcommand{\zeroOut}{m_{0}}
\newcommand{\oneIn}{n_{1}}
\newcommand{\oneOut}{m_{1}}
\newcommand{\twoIn}{n_{2}}
\newcommand{\twoOut}{m_{2}}
\newcommand{\threeIn}{n_{3}}
\newcommand{\threeOut}{m_{3}}
\newcommand{\fourIn}{n_{3}}
\newcommand{\fourOut}{m_{3}}

\subsection{The more efficient UOWHF}\label{sec:WufRomp}
\begin{theorem} \label{thm:TiegToUOHF1}
	Suppose there exists a polynomial-time computable function $\Fc :
	\zo^{\fin} \mapsto \zo^{\fout}$ such that $\Fc^{-1}$ has a noticeable gap
	$\Delta$ between real Shannon entropy and accessible average
	max-entropy. Then, there exists a family of universal one-way hash
	functions with output length $O(\fin^4s/\Delta^3)$ and key
	length $O(\fin^4s/\Delta^3 \cdot \log n)$ for any $s = \omega(\log n)$,
	where $n$ is the security parameter.\footnote{Note that $\Delta$ is not required to be efficiently computable.}
\end{theorem}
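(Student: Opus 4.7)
The plan is to perform the three standard transformations sketched in \cref{sec:intr:constructions}: (i) amplify the entropy gap by taking repetitions, (ii) hash the input to obtain target collision resistance, and (iii) hash the output to obtain a shrinking function. Starting from an accessible \emph{average} max-entropy bound (rather than only a Shannon bound) is what lets us convert directly into worst-case accessible max-entropy via concentration, and hence achieve full target collision resistance in a single shot, without the additional amplification step needed for the Shannon-only variant.

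First I would take the $t$-fold repetition $\Fc^t(x_1,\ldots,x_t) = (\Fc(x_1),\ldots,\Fc(x_t))$ for $t = \Theta(s\fin^2/\Delta^2)$. By the flattening \cref{lem:flattening} applied to $(X,\Fc(X))^t$, the real Shannon entropy of $\Fc^{-1}$ translates into real min-entropy at least $t\HSh(X\mid \Fc(X)) - O(\sqrt{ts}\cdot \fin)$ of $(\Fc^t)^{-1}$, except on a negligible fraction of inputs. On the accessible side, any \ppt $\Fc^t$-collision-finder $\Bc$ induces a $\Fc$-collision-finder on each coordinate, so the accessible-average-max-entropy bound yields sets $\cL_j(x)$ with $\Exp[\log|\cL_j(X)|] \le k_{acc} := \HSh(X\mid \Fc(X)) - \Delta$ and $\Pr[\Bc_j(X)\in\cL_j(X)] \ge 1 - \negl(n)$. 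Taking the product $\cL(x_1,\ldots,x_t) = \prod_j \cL_j(x_j)$ and applying Hoeffding's inequality to the independent bounded summands $\log|\cL_j(X_j)| \in [0,\fin]$, the size $\log|\cL(X)|$ exceeds $tk_{acc} + t\Delta/4$ only with negligible probability; truncating on that bad event yields a family $\cL'$ with $|\cL'(x)| \le 2^{tk_{acc}+t\Delta/4}$ for every $x$, as required by the worst-case definition of accessible max-entropy. Thus $(\Fc^t)^{-1}$ has real min-entropy exceeding its accessible max-entropy by $\Omega(t\Delta)$.

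Next I would hash the input with a 2-universal $g\colon \zo^{t\fin}\to \zo^{\ell}$ for $\ell = tk_{acc} + t\Delta/4 + s$, producing the keyed family $\Fc'_g(x) = (\Fc^t(x), g(x))$. For any TCR-on-random-inputs adversary $\sfA$, reducing to a $\Fc^t$-collision-finder shows that a colliding $x' \neq x$ lies in $\cL'(x)$ with all but negligible probability and additionally satisfies $g(x') = g(x)$; by 2-universality the latter event has probability $2^{-\ell}$ per candidate in $\cL'(x)$, and there are at most $2^{tk_{acc} + t\Delta/4}$ candidates, so the overall collision probability is $2^{-s}+\negl(n)$. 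I would then compose with a 2-universal output hash $h$ mapping $(\Fc^t(x),g(x))$ into length $m'' < t\fin$, preserving target collision resistance up to an additive $2^{-m''}$ loss from spurious $h$-collisions. Finally, the standard trick $\tilde\Fc_{z,g,h}(x) := h(\Fc^t(x+z),g(x+z))$ with $(z,g,h)$ as the key converts collision resistance on \emph{random} inputs into target collision resistance on \emph{adversarially chosen} inputs.

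The remaining issue is that the above parameters depend on the quantity $k_{real} := \HSh(Z\mid \Fc(Z))$, which is not efficiently computable. Following \cite{HastadImLeLu99, Rompel90}, I would try all $O(\fin/\Delta)$ possibilities at granularity $\Delta/4$, instantiate one candidate UOWHF per guess, and concatenate their outputs into a single key/function; at least one guess is correct and so produces a true UOWHF, and any collision on the concatenation would be a collision on that component. This final step multiplies both the output length and the key length by $O(\fin/\Delta)$, giving the claimed $O(\fin^4 s/\Delta^3)$ bound. The main obstacle I anticipate is the bookkeeping in the repetition step: verifying that the Hoeffding slack $O(\sqrt{ts}\cdot \fin)$, the flattening slack for the real min-entropy, and the concentration slack for the accessible max-entropy can all simultaneously be absorbed into a constant fraction of $t\Delta$, which is what pins down $t = \Theta(s\fin^2/\Delta^2)$ and ultimately drives the polynomial in the claimed parameters.
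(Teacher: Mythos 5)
Your proposal follows essentially the same route as the paper's proof: $t$-fold repetition with $t = \Theta(s\fin^2/\Delta^2)$ using the flattening lemma on the real side and Chernoff--Hoeffding on the product sets $\cL$ to turn accessible \emph{average} max-entropy into worst-case accessible max-entropy; then input hashing down to accessible max-entropy $0$; then output hashing; then the random-shift trick; then exhaustive search over $O(\fin/\Delta)$ guesses for the real entropy. The parameter bookkeeping you describe matches the paper's.

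There is, however, one genuine gap in your final de-nonuniformization step. Each candidate function (one per guess of $k_{\mathrm{real}}$) maps $\zo^{n_3}$ to $\zo^{n_3-\log n}$, i.e., it shrinks by only $\log n$ bits. If you simply concatenate the outputs of all $\kappa = O(\fin/\Delta)$ candidates \emph{on the same input}, the resulting function maps $n_3$ bits to $\kappa\cdot(n_3-\log n)$ bits, which is expanding for every $\kappa\ge 2$ --- so it is not a UOWHF, even though it inherits target collision resistance from the one correct candidate. The paper first applies Shoup's domain-extension construction (\cref{lem:increaseinput}) to each candidate, stretching its input length to $\kappa\cdot n_3$ while keeping output length $n_3-\log n$; only then does the concatenation map $\kappa n_3$ bits to $\kappa(n_3-\log n)$ bits and remain shrinking. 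This step is also where the extra $\log n$ factor in the claimed key length $O(\fin^4 s/\Delta^3\cdot\log n)$ comes from (Shoup's construction blows up the key by a $\log$ factor), which your accounting of ``multiply key length by $O(\fin/\Delta)$'' silently drops. With that domain-extension step inserted, your argument matches the paper's.
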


We first show how to combine this with \cref{thm:OwfToIAERomp}
to get a universal one-way hash function.

\begin{theorem}\label{thm:OWFToUOWHFStrong}
	Suppose there exists a one-way function $f :
	\zo^{n} \mapsto \zo^{n}$.
	Then, there exists a family of universal one-way hash
	functions with key and output length $O(n^7)$.
\end{theorem}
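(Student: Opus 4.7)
The plan is to derive this theorem as an almost immediate corollary by chaining together Theorems~\ref{thm:OwfToIAERomp} and~\ref{thm:TiegToUOHF1}. First I would apply Theorem~\ref{thm:OwfToIAERomp} to the given one-way function $f\colon \zo^n \mapsto \zo^n$, using a constructible family $\g = \set{g\colon \zo^n \mapsto \zo^n}$ of three-wise independent hash functions whose description length is $O(n)$. This produces the function $\Fc(x,g,i) = (g(f(x))_{1,\ldots,i},g,i)$ whose input domain has size $\fin = n + |\g| + \log n = O(n)$, and for which $\Fc^{-1}$ has a gap of at least $\Delta = (d\log n)/n$ between real Shannon entropy and accessible average max-entropy, for any chosen constant $d>0$.

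Next I would feed $\Fc$ into Theorem~\ref{thm:TiegToUOHF1} to obtain a UOWHF family. Plugging in the parameters $\fin = O(n)$ and $\Delta = \Omega(\log n/n)$, the theorem yields a UOWHF with output length $O(\fin^4 s / \Delta^3) = O\!\left(n^4 \cdot s \cdot n^3 / \log^3 n\right) = O\!\left(n^7 s / \log^3 n\right)$ and key length $O(n^7 s / \log^3 n \cdot \log n)$, for any $s = \omega(\log n)$.

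To achieve the claimed bound of $O(n^7)$ for both output and key length, I would choose $s$ to be any function satisfying $s = \omega(\log n)$ and simultaneously $s = O(\log^2 n)$ (for output length) and $s = O(\log n)$ (more conservatively, to also bound key length). Concretely, since we only need $s = \omega(\log n)$ for Theorem~\ref{thm:TiegToUOHF1} to apply, choosing $s = \log^2 n$ gives output length $O(n^7 \log^2 n / \log^3 n) = O(n^7 / \log n) \subseteq O(n^7)$, and key length $O(n^7 / \log n \cdot \log n) = O(n^7)$, matching the theorem statement.

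There is no real obstacle in this proof, since it is just a composition of two already-stated results; the only thing to verify carefully is the parameter calculation, in particular that the $\fin = O(n)$ bound correctly accounts for the description length of the three-wise independent family, and that a suitable choice of $s$ balances the factors coming from $\Delta^{-3}$ against the $\omega(\log n)$ requirement. All the technical content lives in Theorems~\ref{thm:OwfToIAERomp} and~\ref{thm:TiegToUOHF1}; this theorem merely records the resulting parameters.
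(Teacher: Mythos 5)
Your proposal is correct and follows exactly the paper's own proof: apply \cref{thm:OwfToIAERomp} to obtain $\Fc$ with $\fin = O(n)$ and $\Delta = \Theta(\log n / n)$, then invoke \cref{thm:TiegToUOHF1} with $s = \log^2 n$, which gives output length $O(n^7 \log^2 n/\log^3 n) \subseteq O(n^7)$ and key length $O(n^7)$. The parameter bookkeeping matches the paper's calculation.
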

\begin{proof}
	Fixing $s := \log^2(n)$, we use \cref{thm:OwfToIAERomp} to get
	a function $\Fc : \zo^{\fin} \mapsto \zo^{\fout}$ with $\fin = O(n)$
	and gap $\Delta = \log n/n$ between real Shannon entropy and accessible
	average max-entropy.
	By \cref{thm:TiegToUOHF1} we get a family of universal one-way hash
	functions with output length $O(\fin^4 s/\Delta^3) = 
	O(n^7 \log^2(n)/ \log^3(n))$ and key length 
	$O(\fin^4 s /\Delta^3 \cdot \log(n)) = O(n^7)$. 
\end{proof}
\paragraph{Overview.}
The construction proceeds via a series of transformations
as outlined in \cref{sec:intr:constructions}:
gap amplification (via repetition), entropy
reduction (by hashing inputs) and reducing output
length (by hashing outputs).
In each of these transformations, we use $\zeroIn$ to denote the input
length of the function $F$ we start with, and 
$n$ to denote the security parameter.

\subsubsection{Gap amplification}
Here, we show that a direct product
construction increases the gap
between real entropy and accessible entropy. Another useful effect of
direct product (for certain settings of parameters) is turning real
Shannon entropy into real min-entropy, and turning accessible average
max-entropy into accessible max-entropy.

\begin{lemma}[Gap amplification]\label{lem:gapamp}
	Let $n$ be a security parameter and $\Fc \colon  \zo^{\fin} \mapsto \zo^{\fout}$
	be a function. For $t \in \poly(n)$, let $\Fc^t$ be the $t$-fold
	direct product of $\Fc$. Then, $\Fc^t$ satisfies the following
	properties:
	\begin{enumerate} 
		\item\label{amp:realH2min} If $\Fc^{-1}$ has real Shannon entropy at
		least $k$, then $(\Fc^t)^{-1}$ has real min-entropy at least $t
		\cdot k - \fin\cdot \sqrt{st}$ for any $s = \omega(\log n)$
		and $t > s$.

		\item\label{amp:accavgmax}
		If $\Fc^{-1}$ has accessible average max-entropy at most $k$, then
		$(\Fc^t)^{-1}$ has accessible max-entropy at most $t \cdot k + \fin
		\cdot \sqrt{st}$ for any $s = \omega(\log n)$.
	\end{enumerate}
\end{lemma}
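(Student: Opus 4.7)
\emph{Part \ref{amp:realH2min} (real min-entropy).} I would apply the conditional flattening lemma (\cref{lem:flattening}, part 2) to the jointly distributed pair $(X,Y=\Fc(X))$, with $X$ uniform on $\zo^{\fin}$. Since $(\Fc^t)^{-1}$ corresponds to $t$ i.i.d.\ copies, taking $\eps = 2^{-s}$ (valid because $t>s$ so $\eps>2^{-t}$) gives: with probability at least $1-2^{-s}$ over $(x^t,y^t)\getsr (X^t,\Fc^t(X^t))$,
$$\Hsam_{X^t\mid \Fc^t(X^t)}(x^t\mid y^t) \;\geq\; t\cdot \HSh(X\mid \Fc(X)) - O(\sqrt{ts}\cdot \fin) \;\geq\; tk - \fin\sqrt{st},$$
where we absorb the implicit constant by slightly adjusting $s$ (which remains $\omega(\log n)$). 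Since the failure probability $2^{-s}$ is negligible, this is exactly the definition of real min-entropy $tk-\fin\sqrt{st}$ for $(\Fc^t)^{-1}$ (\cref{def:RealEntropy}).

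\emph{Part \ref{amp:accavgmax} (accessible max-entropy).} The plan is to reduce any \ppt $\Fc^t$-collision finder $\Ac'$ to a family of \ppt $\Fc$-collision finders $\{\Ac_i\}_{i\in[t]}$. For each coordinate $i\in[t]$, define $\Ac_i(x_i)$ to sample $x_{-i}=(x_1,\ldots,x_{i-1},x_{i+1},\ldots,x_t)$ uniformly from $(\zo^{\fin})^{t-1}$, run $\Ac'(x_1,\ldots,x_t)$, and return the $i$-th coordinate of its output. Since $\Ac'$ always outputs a preimage of $\Fc^t(x_1,\ldots,x_t)$, each coordinate of that output is a preimage of $\Fc$ on the corresponding input, so $\Ac_i$ is a legitimate $\Fc$-collision finder. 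The assumed bound on accessible average max-entropy of $\Fc^{-1}$ then furnishes, for each $i$, a family of sets $\{\cL_i(x_i)\}_{x_i\in \zo^{\fin}}$ with $x_i\in \cL_i(x_i)$, $\Exp[\log|\cL_i(X_i)|]\leq k$, and $\Pr[\Ac_i(X_i)\in \cL_i(X_i)]\geq 1-\negl(n)$.

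Set $\cL'(x_1,\ldots,x_t) = \cL_1(x_1)\times\cdots\times \cL_t(x_t)$; clearly $(x_1,\ldots,x_t)\in \cL'(x_1,\ldots,x_t)$. A union bound over the $t\in\poly(n)$ coordinates yields $\Pr[\Ac'(X^t)\in \cL'(X^t)]\geq 1-t\cdot\negl(n) = 1-\negl(n)$, since, marginally, the $i$-th output coordinate of $\Ac'(X^t)$ has exactly the same joint distribution with $X_i$ as $\Ac_i(X_i)$ does. To control the size, observe that once the fixed random description of each $\Ac_i$ is chosen, the set $\cL_i$ is a deterministic function of $\Ac_i$, so $\log|\cL_i(X_i)|$ is a function of the single variable $X_i$; since the $X_i$ are mutually independent, the $t$ values $\log|\cL_i(X_i)|$ are independent, bounded in $[0,\fin]$, and each have expectation at most $k$. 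Hoeffding's inequality then gives that with probability $1-2^{-s}$,
$$\sum_{i=1}^{t}\log|\cL_i(X_i)| \;\leq\; tk + O(\fin\sqrt{ts}) \;\leq\; tk+\fin\sqrt{st}.$$
Truncating $\cL'(x_1,\ldots,x_t)$ to $\{(x_1,\ldots,x_t)\}$ whenever this bound fails preserves the containment $x\in \cL'(x)$, forces $|\cL'(x)|\leq 2^{tk+\fin\sqrt{st}}$ uniformly, and adds only a negligible $2^{-s}$ to the failure probability. The one place that requires care is the independence of the $\log|\cL_i(X_i)|$: this is exactly why the sets $\cL_i$ must be obtained from the (deterministic, once its coins are fixed) algorithm $\Ac_i$ rather than from $\Ac'$ directly---otherwise the correlations introduced by $\Ac'$'s shared coin tosses across coordinates would block the use of Hoeffding.
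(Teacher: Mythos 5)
Your proposal is correct and follows essentially the same route as the paper: part~\ref{amp:realH2min} is the same direct application of \cref{lem:flattening} with $\eps=2^{-s}$, and part~\ref{amp:accavgmax} is the same reduction to single-copy collision finders followed by taking product sets and applying Hoeffding to the independent, $[0,\fin]$-bounded variables $\log\size{\cL_i(X_i)}$. The only (immaterial) difference is that you build $t$ per-coordinate collision finders $\Ac_i$ each with its own set family, whereas the paper uses a single collision finder that picks the coordinate $i\in[t]$ at random and hence a single set family $\cL$, recovering the same union bound via $\Pr[\Ac(\Fc(X))\notin\cL(X)]=\Pr[\Ac'(\Fc^t(\Xt))\notin\cL'(\Xt)]/t$.
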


\begin{proof}
	In the following $X$ and $\Xt = (X_1,\dots,X_t)$ are
	uniformly distributed over $\zo^{\fin}$ and $(\zo^{\fin})^t$\,,
	respectively.  
	
	\begin{enumerate} 
		\item Follows readily from \cref{lem:flattening} (with $\epsilon = 2^{-s}$).
		
		\item Given any \ppt $\Fc^t$-collision-finder $\Ac'$, we construct a \ppt
		$\Fc$-collision-finder $\Ac$ that:
		\begin{quote}
			On input $x$, picks a random $i$ in $[t]$ along with random
			$x_1,\ldots,x_{i-1},x_{i+1},\ldots,x_t$, computes
			$\Ac'(x_1,\ldots,x_t) \mapsto (x'_1,\ldots,x'_t)$, and outputs
			$x'_i$.
		\end{quote}
		By the bound on the accessible average max-entropy of $\Fc^{-1}$, we
		know that there exists a family of sets $\set{ \cL(x)}$ such that $\Exp\bigl[\log \size{\cL(X)}\bigr] \leq k$, $x\in \cL(x)$, and
		$\Pr[\Ac(X) \notin \cL(X)] \leq \negl(n)$. Consider the family of
		sets $\set{ \cL'(\xt) \colon  \xt \in (\zo^{\fin})^t}$ given by:
		$$\cL'(\xt) = \cL(\xt_1) \times \cL(\xt_2) \times \cdots \times \cL(\xt_t).$$
		By linearity of expectations, we have $\Exp\bigl[\log
		\size{\cL'(X_1,\ldots,X_t)}\bigr] \leq t \cdot k$. Moreover, by the
		Chernoff-Hoeffding bound and using the fact that $\log\size{\cL(X)}$ assumes
		values in $[0,\fin]$, we have
		\begin{align}
			\lefteqn{\Pr\bigl[ \log \size{\cL'(\Xt)} \geq t\cdot k + \fin\sqrt{st}\bigr]}\\
			&= \Pr\bigl[ \log \size{\cL(\Xt_1)} + \cdots + \log \size{\cL(\Xt_t)} \geq t\cdot k + \fin \sqrt{st} \bigr] \leq \eee^{-2s}.\nonumber
		\end{align}
		We claim that this implies that $\Ac'$ has accessible max-entropy at most
		$t \cdot k + \fin \sqrt{st}$. Suppose otherwise, then there
		exists a non-negligible function $\epsilon$ such that
		\begin{align*}
			\Pr[\Ac'(\Fc^t(\Xt)) \notin \cL'(\Xt) ] \geq \epsilon-\eee^{-2s}
			\geq \epsilon/2
		\end{align*}
		Therefore,
		\begin{align*}
			\Pr[\Ac(\Fc(X)) \notin \cL(X)]
			= \Pr[\Ac'(\Fc^t(\Xt)) \notin \cL'(\Xt) ]/t
			\geq \epsilon/2t
		\end{align*}
		which contradicts our assumption on $\Ac$.
	\end{enumerate}
\end{proof}

\subsubsection{Entropy reduction}
Next we describe a construction that given $\Fc$ and any parameter
$\ell$, reduces the accessible max-entropy of $\Fc^{-1}$ by roughly
$\ell$ bits, while approximately preserving the gap between real
min-entropy and accessible max-entropy.

\begin{lemma}[Reducing entropy]\label{lem:redent}
	Let $n$ be a security parameter and $\Fc \colon  \zo^{\fin} \mapsto \zo^{\fout}$
	be a function. Fix a family of pairwise independent hash functions $\g
	= \set{ g \colon \zo^{\fin} \mapsto \zo^\ell}$. Then, $\Fc' \colon  \zo^{\fin} \times \g
	\mapsto \zo^{\fout} \times \g \times \zo^\ell$ as given by
	$\Fc'(x,g)=(\Fc(x),g,g(x))$ satisfies the following properties:
	
	\begin{enumerate} 
		\item\label{red:realmin}
		Assuming $\Fc^{-1}$ has real min-entropy at least $k$, then $(\Fc')^{-1}$
		has real min-entropy at least $k-\ell-s$ for any $s=\omega(\log n)$.
		
		\item\label{red:accmax} Assuming $\Fc^{-1}$ has accessible max-entropy at most $k$, then $(\Fc')^{-1}$
		has accessible max-entropy at most $\max\set{k-\ell+s,0}$ for any
		$s = \omega(\log n)$.
	\end{enumerate}
\end{lemma}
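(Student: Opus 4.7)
The plan is to handle the two parts separately, in each case reducing to a pairwise-independent hashing argument on the preimage class of $\Fc$ at $x$. Write $S(x) \eqdef \Fc^{-1}(\Fc(x))$, and observe that
$$ (\Fc')^{-1}(\Fc'(x,g)) \;=\; \bigl(S(x) \cap g^{-1}(g(x))\bigr) \times \{g\},$$
so for both bounds we need to understand the size of $S(x) \cap g^{-1}(g(x))$ for a random $g \getsr \g$.

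For Part~\ref{red:realmin}, I would fix $s=\omega(\log n)$ and first invoke the real min-entropy bound for $\Fc^{-1}$ to conclude that, with probability $1-\negl(n)$ over $x\getsr \zo^{\fin}$, the class $S(x)$ has size $\geq 2^k$. Conditioned on such an $x$, let $Y = |S(x) \cap g^{-1}(g(x))| = 1 + \sum_{x'\in S(x) \setminus \{x\}} \mathbbm{1}[g(x')=g(x)]$. By pairwise independence, $\mu \eqdef \Exp_g[Y] \geq |S(x)|/2^\ell \geq 2^{k-\ell}$ and $\mathrm{Var}_g[Y] \leq |S(x)|\cdot 2^{-\ell} \leq \mu$. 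A Chebyshev bound yields $\Pr_g[Y < \mu/2] \leq 4/\mu$, which combined with the trivial bound $\mu \geq 1$ gives $Y \geq 2^{k-\ell-s}$ except with negligible probability (considering separately the regimes $k-\ell \leq s$, where the threshold is small and Chebyshev against $\mu/2$ suffices, and $k-\ell > s$, where the threshold is dominated by $\mu/2$). This yields real min-entropy $\geq k - \ell - s$ for $(\Fc')^{-1}$.

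For Part~\ref{red:accmax}, given any \ppt $\Fc'$-collision-finder $\Ac'$, I would build an $\Fc$-collision-finder $\Ac$ that, on input $x$, samples $g \getsr \g$ uniformly, runs $\Ac'(x,g) = (x',g)$, and outputs $x'$. By the accessible max-entropy bound on $\Fc^{-1}$ applied to $\Ac$, there exist sets $\{\cL(x)\}$ with $x \in \cL(x)$, $|\cL(x)| \leq 2^k$, and $\Pr[\Ac(X;R) \in \cL(X)] \geq 1-\negl(n)$. Letting $k^\ast \eqdef \max\{k-\ell+s,0\}$, define
$$\cL'(x,g) \eqdef \begin{cases} \{(x',g) : x' \in \cL(x),\, g(x') = g(x)\} & \text{if this set has size} \leq 2^{k^\ast},\\ \{(x,g)\} & \text{otherwise.}\end{cases}$$
By construction $|\cL'(x,g)| \leq 2^{k^\ast}$ and $(x,g) \in \cL'(x,g)$, and whenever $\Ac'_1(x,g;r) \in \cL(x)$ and the first branch is active, the pair $(\Ac'_1(x,g;r),g)$ lies in $\cL'(x,g)$ since $\Ac'$ is an $\Fc'$-collision-finder.

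The main work is bounding the failure probability. Writing $Y_x = |\cL(x) \cap g^{-1}(g(x))|$, pairwise independence gives $\Exp_g[Y_x] \leq 1 + (|\cL(x)|-1)\cdot 2^{-\ell}$; by Markov, $\Pr_g[Y_x > 2^{k^\ast}]$ is negligible once $s = \omega(\log n)$, treating the regimes $k \geq \ell$ (where the ratio is $\Theta(2^{-s})$) and $k < \ell$ (where $\cL(x)$ itself has fewer elements than the hash range, so the second branch is triggered with probability at most $(|\cL(x)|-1)\cdot 2^{-\ell} \leq 2^{k-\ell} \leq 2^{-s}$, also negligible). A union bound over this event and the negligible event $\Ac(X) \notin \cL(X)$ shows $\Pr[\Ac'(X,G;R) \in \cL'(X,G)] \geq 1 - \negl(n)$, as required. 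The only delicate point is ensuring the uniform size bound $2^{k^\ast}$ holds in all corner cases (especially $k < \ell$, where the fallback to $\{(x,g)\}$ is what delivers accessible max-entropy $0$); this is the step that most needs care, but the pairwise-independence collision count makes both the Markov tail and the fallback argument work with the same negligible bound.
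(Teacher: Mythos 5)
Your proof of the second part (accessible max-entropy) is essentially the paper's: the same reduction from $\Ac'$ to an $\Fc$-collision-finder $\Ac$ that samples $g$ internally, the same sets $\set{(x',g) \colon x'\in\cL(x),\, g(x')=g(x)}$, and the same first-moment-plus-Markov bound $\Exp_g\bigl[\size{\cL(x)\cap g^{-1}(g(x))\setminus\set{x}}\bigr]\leq 2^{k-\ell}$. Your explicit truncation to $\set{(x,g)}$ when the set exceeds $2^{\max\set{k-\ell+s,0}}$ is in fact slightly more careful than the paper's write-up, which leaves the step needed to literally match \cref{def:AccH} implicit. This part is fine.

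The first part (real min-entropy) is where you diverge, and there is a genuine gap. You lower-bound $Y=\size{\Fc^{-1}(\Fc(x))\cap g^{-1}(g(x))}$ via Chebyshev, which needs $\Var_g[Y]\leq \Exp_g[Y]$, \ie that the indicators of the events $g(x')=g(x)$ for distinct $x'$ be pairwise uncorrelated. Each such indicator already involves \emph{two} inputs ($x'$ and $x$), so uncorrelatedness of two of them is a statement about the three-way collision probability $\Pr_g[g(x_1)=g(x_2)=g(x)]$, which pairwise independence does not control. Concretely, for the canonical family $g_{a,b}(x)=(a\cdot x+b)_{1\ldots\ell}$ over $\GF(2^{\fin})$, this probability can be as large as $2^{-\ell}$ rather than $2^{-2\ell}$ for suitable triples, so $\Var_g[Y]$ can be of order $\size{\Fc^{-1}(\Fc(x))}^2\cdot 2^{-\ell}$ and Chebyshev yields nothing; and since you need a \emph{lower} tail, a first-moment/Markov argument cannot substitute. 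Your step would go through with three-wise independence, but that weakens the statement. The paper avoids concentration altogether: for \emph{every fixed} $g$ it considers the ``light'' hash values inside the fiber, \ie those $z$ with $\Pr[g(X)=z\mid \Fc(X)=\Fc(x)]\leq 2^{-\ell-s}$; a union bound over the at most $2^\ell$ values shows $x$ lands on a light value with probability at most $2^{-s}$, and on any non-light value $\size{\Fc^{-1}(\Fc(x))\cap g^{-1}(g(x))}=\size{\Fc^{-1}(\Fc(x))}\cdot\Pr[g(X)=g(x)\mid \Fc(X)=\Fc(x)]\geq 2^{k-\ell-s}$. That argument uses no randomness of $g$ at all, which is exactly why the lemma can be stated for a merely pairwise independent family.
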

\begin{proof}
	In the following $X$ and $\G$ are uniformly distributed over $\zo^{\fin}$ and $\g$, respectively.
	\begin{enumerate}
		\item Fix $g\in \g$ and let $S_g = \set{z \in \zo^\ell
			\colon \Pr[g(X) = z] \leq 2^{-\ell-s} }$. Observe
		the following.  
		\begin{enumerate}
			\item $\Pr[g(X) \in S_g] \leq 2^{-s}$ (by a union bound over $z \in S_g$);
			\item Fix any $z \notin S_g$ and any $x \in \zo^n$ such that
			$\Hsam_{X \mid F(X)}(x \mid F(x)) \geq k$. Then,
			\begin{align*}
				\Pr[X = x \mid \Fc'(X,g) = (\Fc(x),g,z)]
				&=\Pr[X = x \mid \Fc(X) = \Fc(x) \wedge g(X) = z]\\
				&\leq \frac{\Pr[X = x \mid \Fc(X) = F(x)]}{\Pr[g(X) = z]}\\
				&\leq \frac{2^{-k}}{2^{-\ell-s}} = 2^{-(k-\ell-s)}.
			\end{align*}
			where the second inequality follows from our assumptions on $z$ and $x$.
		\end{enumerate}
		Combining the above two observations and the bound on the real
		min-entropy of $F$, it follows that for all $g \in \g$,
		with probability $1-2^{-s}-\negl(n)$ over $x \getsr X$, we have
		\begin{align*}
			\Pr[X = x \mid \Fc'(X,g) = \Fc'(x,g)] \leq 2^{-(k-\ell-s)}.
		\end{align*}
		The bound on the real min-entropy of $\Fc'$ follows readily.
		\item 
		Given a \ppt $\Fc'$-collision-finder $\Ac'$, we construct a
		\ppt $\Fc$-collision-finder $\Ac$ as follows:
		\begin{quote}
			On input $x$, picks a pair $(g,r)$ uniformly at random and output $\Ac'(x,g;r)$.
		\end{quote}
		By the bound on the accessible max-entropy of $\Fc^{-1}$, we know that 
		there exists a family of sets $\set{ \cL(x) \subseteq \zo^{\fin} \colon
			x\in \zo^{\fin}}$ such that $\size{\cL(x)} \leq 2^k$, $x\in \cL(x)$, and
		\begin{align}\label{eq:re1}
			\Pr\bigl[\Ac(X,\G;R) \in \cL(X)\bigr] \geq 1-\negl(n),
		\end{align}
		where $R$ is uniformly distributed over the random coins of $\Ac$.
		
		
		Let $\cL'(x,g) \eqdef \set{(x',g) \colon x' \in \cL(x)\land g(x') = g(x)}$.
		\cref{eq:re1} yields that
		\begin{align}
			\Pr\bigl[\Ac'(X,\G;R) \in \cL'(X,\G)\bigr] \geq 1-\negl(n)  
		\end{align}
		We next bound the size of the set $\cL'(x,g)$. Fix any $x \in \zo^n$.
		For any $x' \neq x$, pairwise independence of $\g$ tells us that
		$\Pr[G(x') = G(x)] = 2^{-\ell}$. It follows from linearity of expectation that
		\begin{align*}
			\ex{\size{\cL'(x,\G) \setminus \{x\}}} \leq  |\cL(x)| \cdot 2^{-\ell} \leq 2^{k-\ell}
		\end{align*}
		Then, by Markov's inequality, we have
		\begin{align}
			\Pr\bigl[ \size{\cL'(x,\G)} \leq 2^{k-\ell+s-1}+1\} \bigr] \geq 1-2^{-(s-1)},
		\end{align}
		Combining the last two inequalities, we obtain
		\begin{align}
			\Pr\bigl[\Ac'(X,\G;R) \in \cL'(X,\G)\land
			\size{\cL'(X,\G)} \leq \max\set{2^{k-\ell+s},1}\bigr] \geq 1-\negl(n)-2^{-(s-1)}
		\end{align}
		
		The above yields an upper bound of $\max\set{k-\ell+s,0}$ on the
		accessible max-entropy of $(\Fc')^{-1}$.
	\end{enumerate}
\end{proof}
\subsubsection{Reducing output length}
The next transformation gives us a way
to derive a function that is both length-decreasing and collision-resistant
on random inputs.
\begin{lemma}[Reducing output length]\label{lem:redout}
	Let $n$ be a security parameter and $\Fc \colon \zo^{\fin} \mapsto \zo^m$
	be a function. Fix a family of pairwise independent hash functions $\g
	= \set{g \colon \zo^m \mapsto \zo^{\fin-\log n}}$ and let $\Fc' \colon \g \times \zo^{\fin}
	\mapsto \zo^{\fin-\log n} \times \g$ be defined by
	$\Fc'(x,g)=(g,g(\Fc(x)))$. The following holds: if
	$\Fc^{-1}$ has real min-entropy at least $\omega(\log n)$
	and $\Fc$ is collision-resistant on random inputs, then
	$\Fc'$ is collision-resistant on random inputs.
\end{lemma}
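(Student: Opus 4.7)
The plan is to prove that any \ppt $\Fc'$-collision-finder $\Ac'$ satisfies $\Pr[\Ac'(X,G) \neq (X,G)] \leq \negl(n)$ for uniform $(X,G) \getsr \zo^{\fin} \times \g$. Writing $X_1 = \Ac'(X,G;R)_1$, and using that the second coordinate of any $\Fc'$-preimage must equal $G$, the failure event reduces to $X_1 \neq X$ together with the $\Fc'$-collision constraint $g(\Fc(X_1)) = g(\Fc(X))$. I will partition this into Case A, where $\Fc(X_1) = \Fc(X)$ (so $X_1$ is an honest $\Fc$-collision of $X$), and Case B, where $\Fc(X_1) \neq \Fc(X)$ (a spurious collision produced only by the hash $g$).

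For Case A I will reduce to collision-resistance of $\Fc$: define a \ppt $\Fc$-collision-finder $\Ac$ that on input $x$ samples $g \getsr \g$ internally, runs $x_1 = \Ac'(x,g;r)_1$ on fresh coins $r$, and returns $x_1$ if $\Fc(x_1)=\Fc(x)$ and returns $x$ otherwise. By construction $\Ac$ is a valid $\Fc$-collision-finder and $\Pr[\Ac(X) \neq X] = \Pr[\text{Case A}]$, so collision-resistance of $\Fc$ on random inputs yields $\Pr[\text{Case A}] \leq \negl(n)$.

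For Case B the plan is a direct probabilistic bound using real min-entropy and pairwise independence of $g$. Fix $k = \omega(\log n)$ as a real min-entropy threshold guaranteed by the hypothesis, and define the heavy image $T_{\mathrm{heavy}} = \{y \in \Fc(\zo^{\fin}) : |\Fc^{-1}(y)| \geq 2^k\}$. The hypothesis yields $|T_{\mathrm{heavy}}| \leq 2^{\fin - k}$ and $\Pr[\Fc(X) \notin T_{\mathrm{heavy}}] \leq \negl(n)$. By pairwise independence of $g$ and a union bound over the at most $|T_{\mathrm{heavy}}|$ candidate heavy partners of $\Fc(X)$,
\[
\Pr_g\!\left[\exists y' \in T_{\mathrm{heavy}},\ y' \neq \Fc(X),\ g(y')=g(\Fc(X))\right]\ \leq\ |T_{\mathrm{heavy}}| \cdot 2^{-(\fin - \log n)}\ \leq\ n/2^k\ =\ \negl(n),
\]
which handles Case B in the sub-case $\Fc(X_1) \in T_{\mathrm{heavy}}$.

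The main obstacle is the residual sub-case where $\Fc(X_1)$ lies in the light image $T \setminus T_{\mathrm{heavy}}$, since $|T \setminus T_{\mathrm{heavy}}|$ can be huge despite having negligible total $\Fc(X)$-mass, so a naive union bound fails. The plan is to observe that this sub-case forces $X_1 \in \Fc^{-1}(T \setminus T_{\mathrm{heavy}})$, a set of negligible density in $\zo^{\fin}$, and to couple this with collision-resistance of $\Fc$: since $|\Fc^{-1}(\Fc(X))| \geq 2^{k}$ with overwhelming probability, a second run of $\Ac'$ on a related input keeping $g$ fixed can be argued to produce a genuine $\Fc$-collision with $X$ with noticeable probability, contradicting collision-resistance of $\Fc$ on random inputs.
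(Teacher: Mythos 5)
There is a genuine gap, and it sits exactly where you flagged the "main obstacle": the light-image sub-case. Your perceived difficulty --- that the light image $T\setminus T_{\mathrm{heavy}}$ "can be huge despite having negligible total $\Fc(X)$-mass, so a naive union bound fails" --- is not actually a difficulty. Every $y$ in the light image has \emph{at least one} preimage, so its cardinality is bounded by the number of inputs mapping into it: $\size{T\setminus T_{\mathrm{heavy}}} \leq \size{\Fc^{-1}(T\setminus T_{\mathrm{heavy}})} = \Pr[\Fc(X)\in T\setminus T_{\mathrm{heavy}}]\cdot 2^{\fin} \leq \negl(n)\cdot 2^{\fin}$. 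The same union bound you already ran for the heavy part then gives $\negl(n)\cdot 2^{\fin}\cdot 2^{-(\fin-\log n)} = n\cdot\negl(n) = \negl(n)$. This is precisely the counting step the paper uses: it bounds $\size{\Image(\Fc)} \leq \size{S} + \size{\bar S}/n^{\omega(1)} \leq \negl(n)\cdot 2^{\fin}$ (where $S$ is the set of inputs with small preimage sets), and then union-bounds over the \emph{entire} image in one shot. Your heavy/light split is therefore unnecessary, but more importantly the light case is the one you left unresolved.

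The fallback you propose for that sub-case --- rerunning $\Ac'$ "on a related input keeping $g$ fixed" to extract a genuine $\Fc$-collision --- is not an argument. If $\Ac'$ returns $x_1$ with $\Fc(x_1)\neq\Fc(x)$, there is no mechanism described by which a second invocation yields a collision with $x$ under $\Fc$, and no reduction is specified whose success probability you can lower-bound. As written, the proof is incomplete. The fix is elementary: replace the entire Case B analysis with the observation above that $\size{\Image(\Fc)}\leq\negl(n)\cdot 2^{\fin}$, conclude that $g(\Fc(x))$ determines $\Fc(x)$ except with negligible probability over $g$, and then your Case A reduction (which matches the paper's) finishes the proof.
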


\begin{proof}
	The bound on real min-entropy implies that there exists a
	subset $S \subseteq \zo^{\fin}$ of density at most $\negl(n)$, such that for all
	$x \notin S$ it holds that $\size{\Fc^{-1}(\Fc(x))} = n^{\omega(1)}$. Hence,
	\begin{align}
		\size{\Image(\Fc)} &\leq \size{ \Fc(S) } + \size{ \Fc(\bar{S}) }\leq \size{S} + \size{\bar{S}}/n^{\omega(1)} \leq \negl(n) \cdot 2^n
	\end{align}
	By the two-wise independent of $\g$,
	\begin{align}\label{eqn:output-coll}
		\Pr\bigl[ \exists y' \in \Image(\Fc) \colon y' \neq \Fc(X) \land
		\G(y') = \G(\Fc(X)) \bigr] \leq \frac{\size{\Image(\Fc)}}{2^{\fin-\log n}} \leq \negl(n)
	\end{align}
	Namely, $g(\Fc(x))$ uniquely determines $\Fc(x)$ with high
	probability. In particular, a collision for $g \circ \Fc$ is also a collision
	for $\Fc$. Given any \ppt $\Fc'$-collision-finder $\Ac'$, we
	construct a \ppt $\Fc$-collision-finder $\Ac$ as follows:
	\begin{quote}
		On input $x$, pick $g$ and $r$ at random and compute $x' =
		\Ac'(x,g;r)$. If $\Fc(x') = \Fc(x)$, output $x'$, else output $x$.
	\end{quote}
	\cref{eqn:output-coll} implies that $\Pr[\Ac'(X,\G;R) \neq
	(\Ac(X;\G,R),\G)] \leq \negl(n)$. Therefore, $\Pr[\Ac'(X,\G;R) =
	(X,\G)] \geq 1-\negl(n)$. Namely, $\Fc'$ is also collision-resistant on
	random inputs.
\end{proof}

\subsubsection{Additional transformations}
We present two more standard transformations
that are needed to complete the construction.

\begin{lemma}[From random inputs to targets, folklore]\label{lem:rnd2target}
	Let $n$ be a security parameter and $\Fc \colon \zo^{\fin} \mapsto \zo^m$ be
	a length-decreasing function. Suppose $\Fc$ is collision-resistant on
	random inputs. Then, $\set{ \Fc'_y \colon \zo^{\fin} \mapsto \zo^m }_{y \in
		\zo^{\fin}}$ as defined by $\Fc'_y(x) = \Fc(y+x)$ is
	a family of target collision-resistant hash functions.
\end{lemma}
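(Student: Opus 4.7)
The plan is a direct reduction: assume toward contradiction that there is a \ppt adversary $\Ac$ winning the target-collision-resistance game against $\set{\Fc'_y}$ with non-negligible probability $\eps(n)$, and build from it a \ppt $\Fc$-collision finder $\Bc$ that produces a genuine collision on a random input with probability at least $\eps(n)$, contradicting the assumption that $\Fc$ is collision-resistant on random inputs.

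The construction of $\Bc$ exploits the additive shift structure of $\Fc'_y(x) = \Fc(y+x)$ (with $+$ denoting bitwise XOR on $\zo^{\fin}$). On input $w \in \zo^{\fin}$, algorithm $\Bc$ runs $(x,\state)\getsr\Ac(1^n)$, defines the key $y \eqdef w + x$, then runs $x'\getsr \Ac(\state,y)$, and outputs $w' \eqdef y + x' = w + x + x'$. If $w$ is uniform in $\zo^{\fin}$, then for every fixed $x$ the shift $y = w+x$ is also uniform in $\zo^{\fin}$ and jointly distributed with $(x,\state)$ exactly as the real UOWHF key; hence the execution of $\Ac$ inside $\Bc$ is distributed identically to an execution of the UOWHF game.

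Whenever $\Ac$ wins — i.e., $x' \neq x$ and $\Fc'_y(x') = \Fc'_y(x)$ — we get $w' \neq w$ (since $w' - w = x' - x \neq 0$) and $\Fc(w') = \Fc(y+x') = \Fc(y+x) = \Fc(w)$, so $\Bc$ outputs a true sibling of $w$ distinct from $w$. Therefore $\Pr[\Bc(W;R) \neq W] \geq \eps(n)$, equivalently $\Pr[\Bc(W;R) = W] \leq 1-\eps(n)$, which contradicts the hypothesis that $\Fc$ is collision-resistant on random inputs (which requires this probability to be $1-\negl(n)$). Finally, efficiency of $\Fc'_y$ and the shrinking property ($m < \fin$) are immediate from the corresponding properties of $\Fc$, completing the proof.

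No step is really an obstacle; the only thing to get right is that the shift distribution $y = w+x$ is uniform \emph{and} independent of $x$ when $w$ is uniform, which is exactly the standard pairwise-shift trick and is the sole reason the lemma works for the XOR-based family.
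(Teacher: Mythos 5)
Your reduction is exactly the paper's: the paper's adversary, on input $x$, runs $(x_0,\state)\getsr\Ac'(1^n)$, feeds it the key $x\oplus x_0$, and outputs $x\oplus x_0\oplus x_1$, which is your $\Bc$ up to renaming, with the same observation that the induced key is uniform and independent of $(x_0,\state)$. The only (shared, trivially fixable) elision is that $\Bc$ should output $w$ itself whenever $w'$ is not a genuine sibling, so that it is formally an $\Fc$-collision-finder in the sense of \cref{def:CollisionFinder}.
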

\begin{proof}
	Given a \ppt adversary $\Ac'$ that breaks target collision-resistance of $\Fc'_y$,
	we can construct a \ppt adversary $\Ac$ that breaks $\Fc$ as follows:
	\begin{quote}
		On input $x$, run $\Ac'(1^n)$ to compute $(x_0,\state)$, and then run $\Ac'(\state, x \oplus x_0)$
		to compute $x_1$. Output $x \oplus x_0 \oplus x_1$.
	\end{quote}
	Note that $(x_0, x_1)$ is a collision for $\Fc'_{x \oplus x_0}$ iff
	$(x, x \oplus x_0 \oplus x_1)$ is a collision for $\Fc$. It then follows quite readily that
	$\Ac$ breaks $\Fc$ with the same probability that $\Ac'$ breaks $\Fc'_y$.
\end{proof}

The following result of \cite{Shoup00:wuf} (improving on
\cite{NaorYu89,BeRo97}) shows that we can construct target
collision-resistant hash functions for arbitrarily long inputs
starting from one for a fixed input length.

\begin{lemma}[Increasing the input length \cite{Shoup00:wuf}]\label{lem:increaseinput}
	Let $n$ be a security parameter, $t = \poly(n)$ be a parameter and let
	$\set{ \Fc_y \colon \zo^{\fin+\log n} \mapsto \zo^{\fin}}$ be a family of target
	collision-resistant hash functions. Then, there exists a family of
	target collision-resistant hash functions $\set{ \Fc'_{y'} \colon \zo^{\fin + t\log n}
		\mapsto \zo^{\fin} }$ where $\size{y'} = O(\size{y} \log t)$.
\end{lemma}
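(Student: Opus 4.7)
The plan is to instantiate Shoup's iterated hash chain from \cite{Shoup00:wuf}. Parse the $\fin + t\log n$-bit input as $(h_0, m_1, \ldots, m_t)$ with $h_0 \in \zo^{\fin}$ and each $m_i \in \zo^{\log n}$, and let the extended key be $y' = (y, \mu_0, \mu_1, \ldots, \mu_L)$, where $L = \lceil \log t \rceil$ and each $\mu_j \in \zo^{\fin}$ is chosen uniformly. For $i \in [t]$, let $\nu(i)$ denote the 2-adic valuation of $i$, so $\nu(i) \in \set{0, 1, \ldots, L}$. Define the chain iteratively by $h_i = F_y(h_{i-1} \oplus \mu_{\nu(i)}, m_i)$ and output $F'_{y'}(x) = h_t$. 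The key length is $\size{y} + (L+1)\fin = O(\size{y}\log t)$, where we may assume \wlg $\size{y} \geq \fin$.

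To argue target collision resistance, suppose a \ppt adversary $\Ac'$ breaks $F'$ with noticeable probability $\delta$. For any collision $(x^*, x)$ of $F'_{y'}$, the two induced computation chains $(h^*_i)_{i=0}^t$ and $(h_i)_{i=0}^t$ share the endpoint $h^*_t = h_t$ but differ somewhere; let $i^* \in [t]$ be the smallest index at which they first merge at the output of $F_y$. Then $h^*_{i^*} = h_{i^*}$, while the pair of $F_y$-inputs $(h^*_{i^*-1} \oplus \mu_{\nu(i^*)}, m^*_{i^*})$ and $(h_{i^*-1} \oplus \mu_{\nu(i^*)}, m_{i^*})$ is distinct. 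Thus an $F'_{y'}$-collision always yields an $F_y$-collision at some step $i^* \in [t]$.

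Our adversary $\Ac$ against $F$ runs $\Ac'(1^n)$ to obtain $(x^*, \state')$, parses $x^* = (h^*_0, m^*_1, \ldots, m^*_t)$, guesses $i^* \in [t]$ uniformly at random (incurring a factor of $1/t$), samples a uniformly random pre-mask value $z^* \in \zo^{\fin}$, and outputs $(z^*, m^*_{i^*})$ as its target for the $F$-game. Upon receiving the challenge key $y$, $\Ac$ completes the simulation by sampling the remaining masks $\set{\mu_j}_{j \neq \nu(i^*)}$ uniformly, and setting $\mu_{\nu(i^*)}$ so that the simulated chain on $x^*$ satisfies $h^*_{i^*-1} \oplus \mu_{\nu(i^*)} = z^*$. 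It then invokes $\Ac'(\state', y')$ to obtain the colliding input $x$, recomputes both chains, and outputs the $F_y$-collision extracted at step $i^*$. Conditioning on $\Ac'$ producing a collision and on the guess of $i^*$ being correct, $\Ac$ breaks target collision resistance of $F$ with probability at least $\delta/t = 1/\poly(n)$, contradicting the UOWHF property.

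The main obstacle is verifying the feasibility and distributional correctness of the ``late binding'' of $\mu_{\nu(i^*)}$: the mask $\mu_{\nu(i^*)}$ may be used by the chain at earlier steps $i < i^*$ with $\nu(i) = \nu(i^*)$, inducing a circular dependency between $z^*$ and the chain whose value we wish to pin down using $\mu_{\nu(i^*)}$. The technical heart of Shoup's argument is a careful inductive simulation exploiting the 2-adic structure of $\nu(\cdot)$ to show that (i) the conditional sampling of $\mu_{\nu(i^*)}$ is well defined with the right probability, and (ii) the joint distribution of $(y, y', x^*)$ presented to $\Ac'$ is statistically identical to the honest target-collision-resistance game. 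This is also the place where the $\log t$ (rather than $t$) factor in the key length becomes essential, since the 2-adic valuation takes only $\log t$ distinct values across $[t]$.
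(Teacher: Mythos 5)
The paper does not prove this lemma at all --- it imports it wholesale from \cite{Shoup00:wuf} --- so there is no internal proof to compare against; your proposal is a reconstruction of Shoup's argument, and it gets the construction and the shape of the reduction right: the masked chain $h_i = \Fc_y(h_{i-1}\oplus\mu_{\nu(i)},m_i)$ with masks indexed by the 2-adic valuation, the uniform guess of the collision position $i^*$ (costing a $1/t$ factor), and the ``late binding'' of $\mu_{\nu(i^*)}$ so that the committed target $(z^*,m^*_{i^*})$ can be made consistent after the key $y$ arrives. Two remarks. First, your definition of $i^*$ as ``the smallest index at which the chains first merge'' is not quite the right invariant: if $h^*_0=h_0$ and the chains diverge only because of a message block (or never diverge in their $h$-values at all), that index is degenerate. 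The standard choice is the \emph{largest} $i\in[t]$ at which the two $\Fc_y$-inputs differ; for that $i$ the outputs necessarily agree (either $i=t$, or the step-$(i{+}1)$ inputs coincide, forcing $h^*_i=h_i$), so a collision is always extracted. This is a local fix and does not affect the reduction, which guesses $i^*$ uniformly anyway. Second, and more substantively, you correctly isolate the crux --- resolving the circular dependency between $z^*$, the chain values, and $\mu_{\nu(i^*)}$ when that mask is also consumed at earlier steps, which is exactly where the 2-adic indexing (and hence the $\log t$ key overhead) earns its keep --- but you do not carry out the inductive mask-sampling argument that establishes both well-definedness and the correct joint distribution of $(y,y',x^*)$. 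Since the paper itself delegates the entire lemma to \cite{Shoup00:wuf}, deferring that one step to the same reference is defensible, but be aware that it is the only genuinely nontrivial part of the proof and your write-up asserts rather than establishes it.
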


\subsubsection{Putting everything together}

Using these transformations, we can now prove \cref{thm:TiegToUOHF1}.
\begin{proof}[Proof of \cref{thm:TiegToUOHF1}]
	Recall that we have given $\Fc: \zo^{\zeroIn}\mapsto\zo^{\zeroOut}$,
	$s \in \omega(\log n)$, the gap $\Delta$ and that $n$ is the security
	parameter.
	
	\begin{description}
		
		\item[{\sc step 1} (gap amplification):] For a parameter $t$, we define $\Fc_{1}$ as $\Fc_1(x_1,\ldots,x_t)=(\Fc(x_1),\ldots,\Fc(x_t))$, the $t$-fold direct product of $\Fc$.
		We choose the parameter $t \in O(\zeroIn^2s/\Delta^2)$ such that
		$$t\cdot \kreal - \zeroIn \cdot \sqrt{st} \geq t\cdot (\kreal-\Delta/2)+ \zeroIn
		\cdot \sqrt{st} + 3s.$$
		
		\cref{lem:gapamp} yields that this repetition increases both the
		real and accessible entropies of $\Fc_1$ by a factor of $t$ (comparing
		to $\Fc$). In addition, this repetition converts real Shannon
		entropy to real min-entropy and accessible average max-entropy to
		accessible max-entropy (up to additive terms that are sub-linear in
		$t$). More precisely, we have the following properties:
		\begin{itemize}
			\item $\Fc_1: \zo^{\oneIn}\mapsto \zo^{\oneOut}$, where $\oneIn(n) = t \cdot \zeroIn$ and $\oneOut(n) = t\cdot \zeroOut$.
			\item $\Fc_1^{-1}$ has real min-entropy at least $t\cdot\kreal - \zeroIn \cdot \sqrt{st} \geq t\cdot (\kreal-\Delta/2)+ \zeroIn \cdot \sqrt{st} + 3s$.
			\item $\Fc_1^{-1}$ has accessible max-entropy at most $t\cdot (\kreal-\Delta)+ \zeroIn \cdot \sqrt{st}$.
		\end{itemize}
		In steps 2 to 4, the construction uses non-uniform advice $k$,
		which corresponds to an approximation to $\kreal$. In step 5, we will
		remove this non-uniform advice via ``exhaustive search''.
		Concretely, for steps 2 to 4, we are given $k$ satisfying
		\begin{align}\label{eq:WufRomp1}
			k \in [\kreal, \kreal+\Delta/2]
		\end{align}
		This means that
		\begin{itemize}
			\item $\Fc_1^{-1}$ has real min-entropy at least
			$t\cdot (k-\Delta)+ \zeroIn \cdot \sqrt{st} + 3s$.
			\item $\Fc_1^{-1}$ has accessible max-entropy at most
			$t\cdot (k-\Delta)+ \zeroIn \cdot \sqrt{st}$.
		\end{itemize}
		This yields a gap of $3s$ between real min-entropy
		and accessible max-entropy. 
		
		\item[{\sc step 2} (entropy reduction):] We next apply entropy reduction to
		$\Fc_1$ to obtain $\Fc_2^{(k)}$.
		That is, $\Fc_2^{(k)}(x,g)=(\Fc_1(x),g,g(x))$,
		where $g\colon\zo^{\oneIn} \mapsto \zo^\ell$ is selected from a
		family of pairwise independent hash functions with $\ell 
		= t\cdot( k-\Delta)+\zeroIn \cdot \sqrt{st}+s=O(t\zeroIn)$.
		\cref{lem:redent} yields that this additional hashing reduces the
		real min-entropy and accessible max-entropy by $\ell$ (up to an
		additive term of $s$). More exactly, we have the following properties:
		\begin{itemize}
			\item $\Fc_2^{(k)} \colon \zo^{\twoIn} \mapsto \zo^{\twoOut}$ where $\twoIn(n,k) = O(t\zeroIn)$
			and $\twoOut(n,k) = O(t\zeroIn)$. Note that in particular $\twoIn$ and $\twoOut$ also depend on $k$ (unlike $\oneIn$ and $\oneOut$).
			\item If (\ref{eq:WufRomp1}) holds, then $(\Fc_2^{(k)})^{-1}$ has real min-entropy at least $s$.
			\item If (\ref{eq:WufRomp1}) holds, then $(\Fc_2^{(k)})^{-1}$ has accessible max-entropy at most $0$. Hence, $\Fc_2^{(k)}$ is collision-resistant on random inputs (by \cref{lem:0accmax2CR}).
		\end{itemize}
		
		\item[{\sc step 3} (reducing the output length):] We next reduce the output
		length of $\Fc_2^{(k)}$ by hashing the output to $\twoIn-\log n$ bits. That is,
		$\Fc_3^{(k)}(x,g) = (g,g(\Fc_2^{(k)}(x)))$ where $g \colon \zo^{\twoOut} \mapsto
		\zo^{\twoIn - \log n}$ is selected from a family of pairwise-independent
		hash functions.
		\begin{itemize}
			\item $\Fc_3^{(k)} \colon \zo^{\threeIn} \mapsto \zo^{\threeOut}$ where
			$\threeIn(n,k) = O(t\zeroIn)$ and $\threeOut(n,k) = \threeIn-\log n$.
			\item By \cref{lem:redout}, $\Fc_3^{(k)}$ is collision-resistant
			on random inputs, assuming that (\ref{eq:WufRomp1}) holds.
		\end{itemize}
		\item[{\sc step 4} (adding random shifts)]
		We then transform $\Fc_3^{(k)}$ into a family $\set{\Gc^{(k)}_y}$ of target
		collision-resistant hash functions via a random shift, following
		\cref{lem:rnd2target}. That is, $\Gc^{(k)}_y(x) = \Fc_3^{(k)}(y+x)$.
		We then have that 
		\begin{itemize}
			\item $\Gc^{(k)}_y(x): \zo^{\threeIn} \mapsto \zo^{\threeOut}$ and $\Gc^{(k)}_y$ uses a key $y$ of length $\threeIn(n,k)$.
			\item If (\ref{eq:WufRomp1}) holds, then $\set{ \Gc^{(k)}_y }$ is target collision-resistant.
		\end{itemize}
		
		\item[{\sc step 5} (removing non-uniformity):] To remove the
		non-uniform advice $k$, we ``try all possibilities'' from $0$ to
		$\zeroIn$ in steps of size $\Delta/2$, similar to the approach used
		in \cite{Rompel90} (see also \cite[Section~3.6]{KatzKo05})
		\begin{enumerate}
			\item First, we construct $\kappa = \zeroIn \cdot 2/\Delta$ families of
			functions $\set{\Gc^{(k)}_y}$, where we instantiate $\set{\Gc^{(k)}_y}$ for
			all $k \in \set{\frac{\Delta}{2},2\cdot\frac{\Delta}{2},3\cdot\frac{\Delta}{2},\ldots,\zeroIn}$.  These $\kappa$ families of functions satisfy the
			following properties:
			\begin{itemize}
				\item Each of $\Gc^{(k)}_y$ is length-decreasing; in
				particular, $\Gc^{(k)}_y$ has input length $\threeIn(n,k)$
				and output length $\threeIn(n,k)-\log n$.
				Note that $\Gc^{(\zeroIn)}_y$ has the longest input length, \ie
				$\threeIn(n,i\Delta/2) \leq \threeIn(n,\zeroIn)$ for all $i$
				because $\ell(n,k)$ increases as a function of $k$. We may
				then assume that all $\kappa$ functions $\Gc^1_y,\ldots,\Gc^\kappa_y$
				have the same input length $\threeIn(n,\zeroIn)$ and the
				same output length $\threeIn(n,\zeroIn)-\log n$ by padding
				``extra part'' of the input to the output.
				
				\item At least one of the $\set{\Gc^{(k)}_y}$ is target
				collision-resistant; this is because $\kreal \in [0,\zeroIn]$, and so (\ref{eq:WufRomp1}) holds for some $k$ which we picked.
			\end{itemize}
			
			\item Next, for each $k$, we construct a family of
			functions $\set{ \tilde{\Gc}^{(k)}_{\tilde{y}}}$ from $\set{ \Gc^{(k)}_y}$ with
			input length $\kappa \cdot \threeIn(n,\zeroIn)$, key length
			$O(\threeIn(n,\zeroIn) \cdot \log n)$ and output length
			$\threeIn(n,\zeroIn)-\log n$, by following the construction
			given by \cref{lem:increaseinput}.
			Again, at least one of the
			$\set{\tilde{\Gc}^{(k)}_{\tilde{y}}}$ for $k$ as above is
			target collision-resistant.
			
			\item Finally, we define a family of functions $\set{\Gc_{\tilde{y}_1,\ldots,\tilde{y}_\kappa}}$ to be the concatenation of all $\tilde{\Gc}^{(k)}_{\tilde{y}}$ on the same
			input. That is, $\Gc_{\tilde{y}_{1},\ldots,\tilde{y}_{\kappa}}(x) =
			\tilde{\Gc}^{(\Delta/2)}_{\tilde{y}_{1}}(x) \circ \cdots \circ
			\tilde{\Gc}^{(n_0)}_{\tilde{y}_\kappa}(x)$.
			\begin{itemize}
				\item Note that $\Gc$ has input length $\kappa \cdot
				\threeIn(n,\zeroIn)$ and output length
				$\kappa \cdot
				(\threeIn(n,\zeroIn) -\log n)$, so $\Gc$ is
				length-decreasing.
				\item Moreover, since at least one of
				$\set{\tilde{\Gc}^{(\Delta/2)}_{\tilde{y}_{1}}(x)},\ldots, \set{\tilde{\Gc}^{(n_0)}_{\tilde{y}_\kappa}}$ is
				target collision-resistant, $\set{\Gc_{\tilde{y}_1,\ldots,\tilde{y}_\kappa}}$ must
				also be target collision-resistant. This is because a
				collision for $\Gc_{\tilde{y}_1,\ldots,\tilde{y}_\kappa}$ is a collision for each
				of $\tilde{\Gc}^{(\Delta/2)}_{\tilde{y}_1},\ldots, \tilde{\Gc}^{(n_0)}_{\tilde{y}_\kappa}$.
			\end{itemize}
		\end{enumerate}
	\end{description}
	The family $\set{\Gc_{\tilde{y}_1,\ldots,\tilde{y}_\kappa}}$
	is the universal one-way hash function we wanted to construct, and so this finishes the proof of \cref{thm:TiegToUOHF1}.
\end{proof}

\subsection{UOWHF via a direct construction}
\begin{theorem} \label{thm:TiegToUOHF2}
	Suppose there exists a polynomial-time computable function $\Fc :
	\zo^{\fin} \mapsto \zo^{\fout}$ such that $\Fc^{-1}$ has a noticeable gap
	$\Delta$ between real Shannon entropy and accessible Shannon
	entropy. Then, there exists a family of universal one-way hash
	functions with output length $O(\fin^8s^2/\Delta^7)$ and key length
	$O(\fin^8s^2/\Delta^7 \cdot \log n)$ for any $s = \omega(\log n)$.
\end{theorem}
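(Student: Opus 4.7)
The plan is to follow the same template as the proof of \cref{thm:TiegToUOHF1}, but with two extra technical twists to cope with the fact that a bound on accessible \emph{Shannon} entropy is weaker than a bound on accessible \emph{average max}-entropy. Roughly, accessible Shannon entropy only controls the adversary's behavior on average, so Markov-type arguments will convert it into a bound on $p$-accessible max-entropy only for some noticeable $p<1$; this only yields a \emph{weakly} target collision-resistant hash function, which must then be amplified into a full UOWHF.

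First I would apply gap amplification: take $\Fc_1 = \Fc^t$ for a suitable $t = \Theta(\fin^2 s/\Delta^2)$. As in \cref{lem:gapamp}\ref{amp:realH2min} this turns real Shannon entropy into real min-entropy while multiplying the gap by $t$. One also checks (by the same averaging argument used in the proof of \cref{lem:gapamp}\ref{amp:accavgmax}) that accessible Shannon entropy is subadditive under direct product, so $\Fc_1^{-1}$ still has an accessible Shannon entropy bound with gap $t\Delta$ from its real min-entropy, up to $\fin\sqrt{st}$ sub-linear slack.

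Next I would invoke \cref{lem:AccHtoAccMax} with the parameter $p := \Delta/(2\fin)$ (say), to convert the accessible Shannon entropy bound into a bound on $p$-accessible max-entropy, losing a factor of $1/p = O(\fin/\Delta)$ in the entropy bound. Choosing $t$ large enough so that $t\Delta/2$ dominates the $1/p$-blowup plus all sub-linear slack, one still has a noticeable gap between real min-entropy and $p$-accessible max-entropy. Then, guessing the real Shannon entropy $\kreal$ within additive $\Delta/2$ as in \cref{thm:TiegToUOHF1} (and later taking the concatenation over all $O(\fin/\Delta)$ guesses), I would apply entropy reduction (\cref{lem:redent}) and output-length reduction (\cref{lem:redout}), then the random shift (\cref{lem:rnd2target}), to obtain a length-decreasing family $\set{\Gc_y}$ that is only \emph{weakly} target collision-resistant: for every \ppt $\Ac$, the probability that $\Ac$ produces a target collision is at most $1-q$ for some noticeable $q = 1-p$. (One has to re-examine \cref{lem:redent}\ref{red:accmax} and \cref{lem:redout} to check that they work for $p$-accessible max-entropy with $p$ noticeably bounded away from $1$; the argument is essentially identical, just carrying the $p$ through.)

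Finally I would amplify weak target collision-resistance to full target collision-resistance. The standard technique (as used in \cite{CRSTVW07} and pointed to in \cref{sec:intr:constructions}) is to define $\Gc^{(r)}_{y_1,\dots,y_r}(x_1,\dots,x_r) = \Gc_{y_1}(x_1) \concat \cdots \concat \Gc_{y_r}(x_r)$ with $r = \Theta(s/q)$, combined with a further application of hashing the combined output down, so that breaking the new family with non-negligible probability implies breaking $\Gc$ with probability noticeably larger than $1-q$ on one of the coordinates. This amplification costs another factor of $r = O(\fin/\Delta)$ in both the input length and the key length, accounting for the additional $\fin^4 s/\Delta^4$ blow-up in parameters compared to \cref{thm:TiegToUOHF1} and yielding the claimed bounds $O(\fin^8 s^2/\Delta^7)$ on output length and $O(\fin^8 s^2/\Delta^7 \cdot \log n)$ on key length. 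The main obstacle is the last step: one has to set up the amplification carefully so that a reduction can simulate the weak adversary's target coordinate without knowing which coordinate it will succeed on, which is the reason \cite{CRSTVW07} require a non-trivial proof rather than a plain direct product.
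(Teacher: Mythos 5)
Your overall architecture matches the paper's: repetition, a Markov-type conversion from accessible Shannon entropy to $p$-accessible max-entropy, entropy/output reduction to get a weakly collision-resistant function, and a final direct-product amplification of weak target collision resistance (the paper carries this out via \cref{lem:gapampmore}, part vi, re-deriving an entropy gap for $\Fc_2^{(k)}$ and re-running the pipeline, rather than amplifying the keyed family $\set{\Gc_y}$ after the random shift as you do; both routes are viable, though the paper's stays inside the entropy framework and needs the real \emph{max}-entropy bound of \cref{lem:gapampmore} part i, which you never track).

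However, your application of \cref{lem:AccHtoAccMax} is backwards, and this is a genuine gap. The lemma converts accessible Shannon entropy $K$ into $p$-accessible max-entropy at most $K/p$; the loss is \emph{multiplicative} in $K$, not additive. With your choice $p=\Delta/(2\fin)$, the bound becomes roughly $t(\kreal-\Delta)\cdot 2\fin/\Delta$, which exceeds the real min-entropy $\approx t\kreal$ by a factor of order $\fin/\Delta$ whenever $\Delta\ll\kreal$; since both sides scale linearly in $t$, ``choosing $t$ large enough'' cannot restore a gap. The only usable regime is $p$ close to $1$ --- the paper takes $p=1-\Delta/4k$, so that $K/p\approx K(1+\Delta/4k)$ stays below the real entropy --- and the price is that the set-membership guarantee holds only with the small noticeable probability $1-p=\Delta/4k$. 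This is exactly why, after entropy reduction, one gets only $q$-collision resistance for the \emph{small} $q=\Delta/4k-2^{-\Omega(s)}$ (the adversary may succeed with probability close to $1$), and why the final amplification needs $t'=\Theta(s/q)=\Theta(\fin s/\Delta)$ repetitions. Your write-up has the roles of $p$ and $q$ inverted (you assert the weak family resists with probability $q=1-p$ close to $1$, yet charge $O(\fin/\Delta)$ repetitions for the amplification), so the middle of the argument does not hang together as stated, even though the endpoints and the final parameter count happen to agree with the paper's.
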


As before, we can use \cref{thm:TiegToUOHF2} together with results
from the previous sections to get a universal one-way hash function.

\begin{theorem}\label{thm:OWFToUOWHFWeak}
	Suppose there exists a one-way function $f :
	\zo^{n} \mapsto \zo^{n}$.
	Then, there exists a family of universal one-way hash
	functions with key and output length $\tO(n^{22})$.
\end{theorem}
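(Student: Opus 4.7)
The plan is to combine \cref{thm:OwfToIASahannon} and \cref{thm:TiegToUOHF2}, exactly mirroring the short argument already used to deduce \cref{thm:OWFToUOWHFStrong} from \cref{thm:OwfToIAERomp} and \cref{thm:TiegToUOHF1}.

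First I would apply \cref{thm:OwfToIASahannon} to the given one-way function $f : \zo^{n} \mapsto \zo^{n}$. This yields a polynomial-time computable function $\Fc(x,i) = f(x)_{1,\ldots,i-1}$ on $\zo^{n}\times[n]$, whose domain we may encode by bit strings of length $\fin = n + \lceil \log n \rceil = O(n)$, and whose inverse has accessible Shannon entropy at most $\HSh(Z\mid \Fc(Z)) - \Delta$ for $\Delta = 1/(64n^2)$. Thus $\Fc^{-1}$ has a noticeable gap $\Delta = \Omega(1/n^2)$ between real Shannon entropy and accessible Shannon entropy, which is precisely the hypothesis of \cref{thm:TiegToUOHF2}.

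Next I would invoke \cref{thm:TiegToUOHF2} on $\Fc$ with slack parameter $s = \log^2 n = \omega(\log n)$. Plugging in $\fin = O(n)$, $\Delta = \Theta(1/n^2)$, and $s$ polylogarithmic, the theorem yields a UOWHF family with output length
\begin{align*}
O\!\left(\frac{\fin^{8}\, s^{2}}{\Delta^{7}}\right) \;=\; O\!\left(n^{8}\cdot \log^{4}\! n\cdot n^{14}\right) \;=\; \tO(n^{22}),
\end{align*}
and key length a further $\log n$ factor larger, hence also $\tO(n^{22})$, matching the claimed bound.

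The real work has already been done inside \cref{thm:OwfToIASahannon} and \cref{thm:TiegToUOHF2}, so there is no substantive obstacle here beyond the parameter bookkeeping above. It is worth noting, however, why the bound degrades from $\tO(n^{7})$ in \cref{thm:OWFToUOWHFStrong} to $\tO(n^{22})$ here: the weaker direct construction of \cref{sec:OwfToIASahannon} hurts us twice, once because the inaccessible-entropy gap shrinks from $\Theta((\log n)/n)$ down to $\Theta(1/n^{2})$, and once because the exponents in the Shannon-entropy based transformation of \cref{thm:TiegToUOHF2} are worse than those of \cref{thm:TiegToUOHF1} (namely $\fin^{8}/\Delta^{7}$ versus $\fin^{4}/\Delta^{3}$), reflecting the additional amplification needed to convert a Shannon-entropy gap into a max-entropy gap.
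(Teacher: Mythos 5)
Your proposal is correct and is essentially identical to the paper's own proof: both instantiate \cref{thm:OwfToIASahannon} to get $\fin = O(n)$ and $\Delta = \Theta(1/n^2)$, set $s = \log^2 n$, and plug into \cref{thm:TiegToUOHF2} to obtain key and output length $O(\fin^8 s^2/\Delta^7 \cdot \log n) = \tO(n^{22})$. The parameter bookkeeping checks out.
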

\begin{proof}
	We set $s := \log^2(n)$, and use \cref{thm:OwfToIASahannon} to get a $F$ with $\fin = O(n)$, and $\Delta := O(\frac{1}{n^2})$.
	Using $F$ in \cref{thm:TiegToUOHF2} this gives key and output length 
	$\tO(n^{22})$.
\end{proof}

In order to prove \cref{thm:TiegToUOHF2}, we show how to transform a
noticeable gap between real Shannon entropy and accessible Shannon
entropy to one between real Shannon entropy and accessible
max-entropy, and then follow the construction from the previous
section. This step is fairly involved as we are unable to show that
parallel repetition directly transforms an upper bound on accessible
Shannon entropy into one for accessible max-entropy.
We proceed by first 
establishing some additional properties
achieved by gap amplification and entropy reduction.

\begin{lemma}[Gap amplification, continued]\label{lem:gapampmore}
	Let $n$ be a security parameter and $\Fc : \zo^{\fin} \mapsto \zo^m$
	be a function. For $t \in \poly(n)$, let $\Fc^t$ be the $t$-fold
	direct product of $\Fc$. Then, the following holds:
	\begin{enumerate} 
		\item\label{amp:realH2max} If $\Fc^{-1}$ has real Shannon entropy at
		most $k$, then $(\Fc^t)^{-1}$ has real max-entropy at most $t \cdot
		k + \fin \cdot \sqrt{st}$ for any $s= \omega(\log n)$ and $t > s$.
		\item\label{amp:realmin} 
		If $\Fc^{-1}$ has real min-entropy at least $k$, then $(\Fc^t)^{-1}$
		has real min-entropy at least $t \cdot k$.
		
		\item\label{amp:realmax} 
		If $\Fc^{-1}$ has real max-entropy at most $k$, then $(\Fc^t)^{-1}$
		has real max-entropy at most $t \cdot k$.
		
		\item\label{amp:accH}
		If $\Fc^{-1}$ has accessible Shannon entropy at most $k$, then
		$(\Fc^t)^{-1}$ has accessible Shannon entropy at most $t \cdot k$.
		
		\item\label{amp:accmax}
		
		If $\Fc^{-1}$ has accessible max-entropy at most $k$, then
		$(\Fc^t)^{-1}$ has accessible max-entropy at most $t \cdot k$.
		
		\item\label{amp:wuf} If $\Fc$ is $q$-collision-resistant on random
		inputs and $\Fc^{-1}$ has real max-entropy at most
		$k$, then $(\Fc^t)^{-1}$ has accessible max-entropy at most
		$(1-q/8) \cdot tk + t$, provided that $t = \omega((1/q) \cdot \log n)$.
	\end{enumerate}
\end{lemma}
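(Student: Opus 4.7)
The plan addresses each of the six parts in turn, with the first five following standard patterns and the last requiring the main new idea.

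Parts \ref{amp:realmin} and \ref{amp:realmax} are immediate from the exact additivity of conditional sample entropy under an independent product: for $X^t = (X_1,\dots,X_t)$ with independent coordinates, $\Hsam_{X^t \mid \Fc^t(X^t)}(x \mid \Fc^t(x)) = \sum_{i=1}^t \Hsam_{X \mid \Fc(X)}(x_i \mid \Fc(x_i))$, so pointwise bounds on each summand transfer directly to the sum (with failure probability union-bounded over the $t$ coordinates in the min-entropy case). Part \ref{amp:realH2max} applies \cref{lem:flattening} to the random variable $\Hsam_{X \mid \Fc(X)}(X \mid \Fc(X))$, whose mean is at most $k$ and whose values lie in $[0, \fin]$: the lemma yields a deviation of $O(\fin \sqrt{st})$ with probability at least $1 - 2^{-s}$ over $x \getsr X^t$, and \cref{lem:MaxEntvsSets} converts this sample-entropy bound into the stated real max-entropy bound for $\Fc^t$.

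Parts \ref{amp:accH} and \ref{amp:accmax} use the standard single-coordinate reduction: given a \ppt\ $\Fc^t$-collision finder $\Ac'$, define for each $i \in [t]$ an $\Fc$-collision finder $\Ac_i$ that plants its input in coordinate $i$, samples the other $t-1$ coordinates uniformly, runs $\Ac'$, and returns the $i$-th output. For Shannon entropy, sub-additivity together with the fact that conditioning on $X^t$ only reduces entropy relative to conditioning on $X_i$ alone gives $\HSh(\Ac'(X^t;R) \mid X^t) \leq \sum_i \HSh(\Ac_i(X_i) \mid X_i) \leq tk$. For max-entropy, the product $\cL'(x_1,\dots,x_t) = \cL_1(x_1) \times \cdots \times \cL_t(x_t)$ of the per-coordinate sets promised by the accessible-max-entropy hypothesis on each $\Ac_i$ has size at most $2^{tk}$ and contains $\Ac'(X^t;R)$ with probability $1 - \negl(n)$ by a union bound over coordinates.

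The main obstacle is part \ref{amp:wuf}. The intended set is
\begin{align*}
\cL'(x_1,\dots,x_t) = \set{(x'_1,\dots,x'_t) \colon \size{\set{i \colon x'_i = x_i}} \geq qt/8 \text{ and } x'_i \in \cL(x_i) \text{ for each } i},
\end{align*}
where $\cL(\cdot)$ is the size-$2^k$ set from the real max-entropy bound on $\Fc$. A binomial counting bound yields $\size{\cL'(x)} \leq \binom{t}{\lceil(1-q/8)t\rceil} \cdot 2^{(1-q/8)tk} \leq 2^t \cdot 2^{(1-q/8)tk}$, matching the target entropy. The second condition ($x'_i \in \cL(x_i)$ for each $i$) is satisfied with probability $1 - \negl(n)$ by the real max-entropy assumption and a union bound over the $t$ coordinates.

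The difficulty lies in the first condition, namely that at least $qt/8$ coordinates remain unchanged, with probability $1 - \negl(n)$ rather than merely in expectation. Applying $q$-collision-resistance to each single-coordinate $\Ac_i$ constructed as in parts \ref{amp:accH} and \ref{amp:accmax} yields only the expectation bound $\ex{\sum_i \indic{\Ac'(X^t;R)_i = X_i}} \geq qt$, and the indicators can in general be strongly correlated (a single change in $X_j$ can in principle cause $\Ac'$ to alter every output coordinate), so naive Markov, Chebyshev, or bounded-differences arguments are insufficient. The plan is to leverage the independence of $X_1,\dots,X_t$ together with the hypothesis $t = \omega((1/q)\log n)$ through a Doob/McDiarmid-style concentration applied coordinate-by-coordinate, with the key technical step being to establish that, once one averages over $\Ac'$'s internal randomness, the per-coordinate ``unchanged'' events become sufficiently decoupled that a Chernoff-type bound yields $\Pr[\sum_i \indic{\Ac'(X^t;R)_i = X_i} < qt/8] = \negl(n)$, which combined with the set-size bound above gives the claimed accessible max-entropy bound of $(1-q/8)tk + t$.
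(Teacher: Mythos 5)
Parts \ref{amp:realH2max}--\ref{amp:accmax} of your proposal match the paper's proof essentially verbatim (exact additivity of conditional sample entropy for the real min-/max-entropy parts, \cref{lem:flattening} for part \ref{amp:realH2max}, and the single-coordinate embedding reduction with subadditivity resp.\ product sets for the accessible-entropy parts), and these are fine.

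For part \ref{amp:wuf} you have the right set --- the paper uses $B(\xt) = \set{\xtTag \colon \Fc^t(\xtTag)=\Fc^t(\xt) \land \size{\set{i \colon \xtTag_i = \xt_i}} \geq qt/8}$ with the same $\binom{t}{qt/8}2^{(1-q/8)tk} \leq 2^{(1-q/8)tk+t}$ size bound --- and you have correctly diagnosed that per-coordinate expectation bounds do not suffice. But the repair you propose is a genuine gap: no Doob/McDiarmid or ``decoupling after averaging over $R$'' argument can work here, because the quantity $\sum_i \indic{\Ac'(\Xt;R)_i = \Xt_i}$ simply does not concentrate for an arbitrary collision finder. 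A single shared coin can make the adversary leave everything unchanged with probability $1/2$ and attempt to change everything otherwise, so averaging over the internal randomness does not decouple the coordinates, and the bounded-differences constant per input coordinate is $t$ (one changed $X_j$ may flip every indicator), which makes McDiarmid vacuous. The statement you are trying to prove by concentration is not a probabilistic fact about the adversary's output distribution at all --- it is a \emph{hardness amplification} statement, and it is only true because of the $q$-collision-resistance hypothesis.

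The paper's argument is therefore by contradiction via a Chernoff-type direct product theorem for collision resistance (following \cite{CRSTVW07}): assume $\Pr[\Ac'(\Xt;R') \notin B(\Xt)] \geq \epsilon/2$ for non-negligible $\epsilon$, i.e., $\Ac'$ changes more than $(1-q/8)t$ coordinates noticeably often. Fix \emph{any} subset $S \subseteq \zo^{\fin}$ of density $q/2$; the Chernoff bound is applied to the independent events $\set{X_i \in S}$ (not to the adversary's indicators), giving $\Pr[\Ac'(\Xt) \notin B(\Xt) \land \size{\set{i \colon X_i \in S}} \geq qt/4] \geq \epsilon/4$ once $t = \omega((1/q)\log n)$. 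Since fewer than $qt/8$ coordinates are unchanged but at least $qt/4$ inputs land in $S$, a random coordinate is simultaneously in $S$ and changed with probability $\Omega(\epsilon q)$, so the single-coordinate reduction $\Ac$ finds a nontrivial collision with probability $\Omega(\epsilon)$ on all but a $q/2$ fraction of inputs $x$; repeating $\Ac$ polynomially many times then yields an $\Fc$-collision finder with $\Pr[\Ac(X;R)=X] < q$, contradicting $q$-collision-resistance. You should replace your concentration step with this reduction; without it, the claim that at least $qt/8$ coordinates survive except with negligible probability is unsupported.
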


\begin{proof}
	Again, $X$ and $\Xt = (X_1,\dots,X_t)$ are uniformly distributed
	over $\zo^{\fin}$ and $(\zo^{\fin})^t$, respectively.
	\begin{enumerate} 
		\item Follows readily from \cref{lem:flattening}.
		
		\item This follows from a union bound and that fact that for all $x_1,\ldots,x_t$:
		\begin{align*}
			\Hsam_{\Xt}(x_1,\ldots,x_t \mid \Fc^t(x_1,\ldots,x_t)) &=
			\sum_{i=1}^t \Hsam_{X\mid \Fc(X)}(x_i\mid \Fc(x_i))
		\end{align*}
		
		\item Same as previous part.
		
		\item Given any \ppt $\Fc^t$-collision-finder $\Ac'$, we construct the following \ppt
		$\Fc$-collision-finder $\Ac$:
		\begin{quote}
			On input $x$, pick a random $i$ in $[t]$ along with random
			$x_1,\ldots,x_{i-1},x_{i+1},\ldots,x_t$, compute
			$\Ac'(x_1,\ldots,x_t) \mapsto (x'_1,\ldots,x'_t)$, and output
			$x'_i$.
		\end{quote}
		Define the random variables $(X'_1,\ldots,X'_t) = \Ac'(X_1,\ldots,X_t)$. Then,
		\[\begin{array}{rlll}
			\lefteqn{\HSh(X'_1,\ldots,X'_t\mid X_1,\ldots,X_t)}\\
			&\leq& \HSh(X'_1\mid X_1) + \cdots + \HSh(X'_t\mid X_t)&\mbox{subadditivity of conditional Shannon entropy}\\
			&=& t\cdot \HSh(X'_I \mid X_I)&\mbox{where $I$ has the uniform distribution over $[t]$}\\
			&=& t\cdot \HSh(\Ac(X)\mid X)&\mbox{by definition of $\Ac$}\\
			&\leq& t\cdot k&\mbox{by the bound on accessible Shannon entropy of $\Fc^{-1}$}
		\end{array}\]
		
		\item Analogous to \cref{lem:gapamp} part ii, but simpler, since we do not have to
		use the Chernoff-Hoeffding bound.
		
		\item Suppose on the contrary that there exists a \ppt $\Fc^t$-collision-finder $\Ac'$ that violates the guarantee on accessible max-entropy.
		For $\xt \in (\zo^{\fin})^t$, 
		
		let $B(\xt) \eqdef
		\set{\xtTag\in (\zo^{\fin})^t \colon \Fc^t(\xt) = \Fc^t(\xtTag) \land \size{\set{i\in [t]\colon \xtTag_i= \xt_i}} \geq qt/8}$. By the bound on real
		max-entropy, we have that $\Pr[\exists i\in[t] \colon \size{\Fc^{-1}(\Fc(\Xt_i))} > 2^k] \leq t\cdot\negl(n) = \negl(n)$. Hence,
		\begin{align}
			\Pr\Bigl[ \size{B(\Xt)} > {t \choose qt/8} 2^{(1-q/8)tk} \Bigr]
			\leq \negl(n)
		\end{align}
		Since $\Ac'$ achieves accessible max-entropy greater than
		$(1-q/8)tk+t$, there must exists a non-negligible
		function $\epsilon$ such that
		$\Pr[ \Ac'(\Xt;R') \notin
		B(\Xt)] \geq \epsilon - t \cdot \negl(n) \geq \epsilon/2$,
		where $R'$ is uniformly distributed over the random coins of $\Ac'$.
		Namely, $\Ac'$ finds collisions on at least a $1-q/8$ fraction
		of the coordinates with non-negligible probability.
		
		Since $\Fc$ is $q$-collision resistant, this violates a standard Chernoff-type direct product theorem. We provide a proof sketch, following a similar analysis done for standard collision resistance in
		\cite{CRSTVW07}.
		Consider the following \ppt
		$\Fc$-collision-finder $\Ac$:
		\begin{quote}
			On input $x\in \zo^{\fin}$, pick a random $i\in[t]$ along with random
			$x_1,\ldots,x_{i-1},x_{i+1},\ldots,x_t$, compute
			$\Ac'(x_1,\ldots,x_t) \mapsto (x'_1,\ldots,x'_t)$, and output
			$x'_i$.
		\end{quote}
		To analyze the success probability of $\Ac'$, fix any subset $S$ of $\zo^{\fin}$ of density
		$q/2$. If $t =\omega(\log n/q)$, then a Chernoff bound yields that
		\begin{align*}
			\Pr[\Ac'(\Xt) \notin B(\Xt) \land \size{\set{i\in [t] \colon \Xt_i\in S}} \geq q/4] \geq \epsilon/4.
		\end{align*}
		This means that
		\begin{align*}
			\Pr_{i\getsr [t]}[\Ac'(\Xt) \mapsto (X'_1,\ldots,X'_t) \land X_i \in S \land X'_i \neq X_i] \geq \epsilon/4 \cdot q/8.
		\end{align*}
		We may then deduce (following the same calculations in \cite[Prop 2]{CRSTVW07}) that
		\begin{align*}
			\Pr_{x \getsr X} \Bigl[ \Pr[\Ac(x; R) \neq x] \geq \epsilon/4 \cdot q/8 \cdot 2/q \Bigl] \geq 1-q/2.
		\end{align*}
		where $R$ is uniformly distributed over the random coins of $\Ac$.
		By repeating $\Ac$ a sufficient number of times, we may find
		collisions on random inputs of $\Fc$ with probability $1-q$,
		contradicting our assumption that $\Fc$ is $q$-collision-resistant on
		random inputs.
	\end{enumerate}
\end{proof}


\begin{lemma}[Reducing entropy, continued]\label{lem:redentmore}
	Let $n$ be a security parameter and $\Fc : \zo^{\fin} \mapsto \zo^m$
	be a function. Fix a family of 2-universal hash functions $\g
	= \set{ g \colon \zo^{\fin} \mapsto \zo^\ell}$. Then, $\Fc' \colon \zo^{\fin} \times \g
	\mapsto \zo^m \times \g \times \zo^\ell$ as given by
	$\Fc'(x,g)=(\Fc(x),g,g(x))$ satisfies the following properties:
	
	\begin{enumerate} 
		
		\item\label{red:realmax}
		If $\Fc^{-1}$ has real max-entropy at most $k$, then $(\Fc')^{-1}$
		has real max-entropy at most $\max\set{k-\ell+s,0}$ for any $s=\omega(\log n)$.
		
		\item\label{red:paccmax} If $\Fc^{-1}$ has $p$-accessible max-entropy at most $k$, then $(\Fc')^{-1}$
		has $p+2^{-\Omega(s)}$-accessible max-entropy at most $\max\set{k-\ell+s,0}$ for any $s$.
	\end{enumerate}
\end{lemma}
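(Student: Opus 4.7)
The plan is to mirror the Markov-based argument of Lemma \ref{lem:redent}, but on the max-entropy (set-size) side rather than the min-entropy (probability) side. In both parts the central observation is that for any fixed $x$ and for $g \getsr \g$, two-wise independence gives
$$\Exp_g\bigl[|\{x' \in \Fc^{-1}(\Fc(x)) \setminus \{x\} : g(x') = g(x)\}|\bigr] \leq (|\Fc^{-1}(\Fc(x))|-1) \cdot 2^{-\ell},$$
and Markov's inequality then controls the number of such ``secondary collisions'' once an upper bound on $|\Fc^{-1}(\Fc(x))|$ is in hand.

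For part~(i), I would fix $x$ outside the negligible exceptional set provided by the real max-entropy hypothesis (so that $|\Fc^{-1}(\Fc(x))| \leq 2^k$) and combine the display above with Markov's inequality to conclude that $|(\Fc')^{-1}(\Fc'(x,g))|$ is at most roughly $2^{k-\ell+s}$ (or equal to $1$ in the regime $k-\ell+s \leq 0$, since then the expected number of secondary collisions is already at most $2^{-s}$) except with probability $2^{-\Omega(s)}$ over $g$. A union bound over the two bad events then gives the claimed real max-entropy bound of $\max\{k-\ell+s, 0\}$.

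For part~(ii), given a \ppt $\Fc'$-collision-finder $\Ac'$, I would construct an $\Fc$-collision-finder $\Ac$ that on input $x$ samples a uniform pair $(g, r)$ and returns the first coordinate of $\Ac'((x, g); r)$; since $\Fc'(x, g)$ exposes $g$, any valid $\Fc'$-collision fixes $g$, so $\Ac$ is a valid $\Fc$-collision-finder. The $p$-accessible max-entropy hypothesis then yields families $\set{\cL(x)}$ with $x \in \cL(x)$ and $|\cL(x)| \leq 2^k$, which I would lift to $\cL'(x,g) = \set{(x',g) : x' \in \cL(x),\ g(x') = g(x)}$. The same Markov argument as in part~(i) bounds $|\cL'(x,g)|$ at the required threshold with probability $1 - 2^{-\Omega(s)}$ over $g$. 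To conform to the definition's requirement of a uniform size bound, I would truncate by redefining $\cL''(x,g) := \cL'(x,g)$ when the size bound holds and $\cL''(x,g) := \set{(x,g)}$ otherwise; this costs at most $2^{-\Omega(s)}$ in the success probability and delivers the $(p + 2^{-\Omega(s)})$-accessible max-entropy bound. The only minor subtlety is absorbing an additive $+1$ of the form $2^{k-\ell+s-1}+1$ into the clean exponent $k-\ell+s$, which is handled by shifting $s$ by a constant --- harmless under $s = \omega(\log n)$ in part~(i), and swallowed by the $2^{-\Omega(s)}$ slack in part~(ii). No conceptual obstacle arises beyond these routine bookkeeping points.
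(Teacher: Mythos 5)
Your proposal is correct and matches the paper's argument: part~(i) is proved there by exactly the same two-universality-plus-Markov bound on $\size{\G^{-1}(\G(x)) \cap (\Fc^{-1}(\Fc(x)) \setminus \set{x})}$, and part~(ii) is handled by the paper simply by pointing back to the reduction in the proof of \cref{lem:redent}~(ii), which is the same collision-finder construction, lifted sets $\cL'(x,g)$, and Markov bound you describe (your explicit truncation step is just a slightly more pedantic way of meeting the uniform size requirement in \cref{def:AccH}). No gaps.
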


\begin{proof}
	In the following $X$ and $\G$ are uniformly distributed over $\zo^{\fin}$ 
	and $\g$, respectively.  
	\begin{enumerate}
		\item Fix an $x$ such that $\size{\Fc^{-1}(\Fc(x))} \leq 2^k$. By $2$-universal hashing,
		\begin{align*}
			\Exp \bigl[ \size{ \G^{-1}(\G(x)) \cap (\Fc^{-1}(\Fc(x)) \setminus \set{x})} \bigr] \leq
			(2^k-1) \cdot 2^{-\ell} \leq 2^{k-\ell}.
		\end{align*}
		The bound on the real max-entropy
		of $\Fc^{-1}$ and 
		Markov's inequality  
		yield that
		\begin{align*}
			\Pr\bigl[ \size{ \G^{-1}(\G(X)) \cap (\Fc^{-1}(\Fc(X)) \setminus \set{x}) }
			\geq 2^{(s-1)}\cdot 2^{k-\ell} \bigr] \leq
			2^{-(s-1)}+\negl(n).
		\end{align*}
		The bound on the real max-entropy of $(\Fc')^{-1}$ follows.
		
		\item Readily follows from the proof of \cref{lem:redent} part ii.
\end{enumerate}\end{proof}

\subsubsection{Putting everything together}
\begin{proof}[Proof of \cref{thm:TiegToUOHF2}]
	Recall that we start out with a function $\Fc: \zo^{n_0} \mapsto \zo^{m_0}$ 
	with a gap $\Delta$ between real Shannon entropy and accessible Shannon
	entropy. Let $\kreal$ denote the real Shannon entropy of
	$\Fc^{-1}$.

	\begin{description}
		
		\item[{\sc step 1} (gap amplification):] Let $\Fc_1$ be the $t$-fold
		direct product of $\Fc$ for a sufficiently large $t$
		to be determined later.
		That is, $\Fc_1(x_1,\ldots,x_t)=(\Fc(x_1),\ldots,\Fc(x_t))$.
		
		\cref{lem:gapamp} yields that this repetition
		increases both the real and accessible entropies of $\Fc_1$ by a
		factor of $t$.
		In addition, the repetition
		converts real Shannon entropy to real min-entropy and real
		max-entropy (up to an additive $o(t)$ term). More precisely:
		\begin{itemize}
			\item $\Fc_1 \colon \zo^{\oneIn} \mapsto \zo^{\oneOut}$ where $\oneIn(n) = t \cdot\fin$
			and $\oneOut(n) = t \cdot \fout$.
			\item $\Fc_1^{-1}$ has real min-entropy at least $t \cdot \kreal - \zeroIn\sqrt{st}$ and real max-entropy
			at most $t \cdot \kreal + \zeroIn\sqrt{st}$.
			\item $\Fc_1^{-1}$ has accessible Shannon entropy at most $t \cdot \kreal -
			t\Delta$.
		\end{itemize}
		From the next step on, the construction again uses an additional
		parameter $k$. 
		We will be especially interested in the case
		\begin{align}\label{eq:WufRomp:2} 
			k \in [\kreal, \kreal+\Delta^2/128\zeroIn].
		\end{align}
		In case this holds, 
		\begin{itemize}
			\item $\Fc_1^{-1}$ has accessible Shannon entropy at most
			$tk-t\Delta$. \cref{lem:AccHtoAccMax} yields that $\Fc_1^{-1}$
			has $(1-\Delta/4k)$-accessible max-entropy at most $tk-t\Delta/2$.
		\end{itemize}
		
		\item[{\sc step 2} (entropy reduction):]
		Apply entropy reduction to $\Fc_1$ with $\ell = tk-t\Delta/2+s$ to
		obtain $\Fc_2^{(k)}$. That is, $\Fc_2^{(k)}(x,g)=(\Fc_1(x),g,g(x))$, where
		$g\colon\zo^{\oneIn} \mapsto \zo^\ell$ is selected from a family of
		$2$-universal hash functions.
		
		By \cref{lem:redent} and \cref{lem:redentmore}, this reduces the
		accessible max-entropy to $0$, which allows us to deduce that
		$\Fc_2^{(k)}$ is weakly collision-resistant on random inputs.
		Assuming \cref{eq:WufRomp:2} we have
		\begin{itemize}
			\item $\Fc_2^{(k)} \colon \zo^{\twoIn} \mapsto \zo^{\twoOut}$ where $\twoIn(n,k) = O(t\zeroIn + \ell(n,k))=O(t\zeroIn)$
			and $\twoOut(n,k) = O(t\zeroOut + \ell(n,k))=O(t\zeroIn)$.
			\item $(\Fc_2^{(k)})^{-1}$ has real min-entropy at least $t\cdot (\kreal-k+\Delta/2) - \zeroIn\sqrt{st}-2s$,
			which is at least
			$$t \cdot (\Delta/2-\Delta^2/128\zeroIn) - \zeroIn\sqrt{st}-2s$$
			and real max-entropy at most $t\cdot (\kreal-k+\Delta/2) + \zeroIn\sqrt{st} \leq t\cdot \Delta/2 + \zeroIn\sqrt{st}$.
			
			\item $(\Fc_2^{(k)})^{-1}$ has
			$(1-\Delta/4k+2^{-\Omega(s)})$-accessible max-entropy at most
			$0$. Thus, $\Fc_2^{(k)}$ is $q$-collision-resistant on random
			inputs (by \cref{lem:0accmax2CR}), for $q = \Delta/4k-2^{-\Omega(s)}$.
		\end{itemize}
		
		\item[{\sc step 3} (gap amplification):] $\Fc_3^{(k)}$ is $t'$-fold direct
		product of $\Fc_2^{(k)}$, where $t' = s/q = O(ks/\Delta)$.
		That is, $\Fc_3^{(k)}(x_1,\ldots,x_{t'})=(\Fc_2^{(k)}(x_1),\ldots,\Fc_2^{(k)}(x_{t'}))$.
		
		By \cref{lem:gapampmore}, this allows us to amplify the weak
		collision-resistance property of $\Fc_2^{(k)}$ to obtain a gap between
		real min-entropy and accessible max-entropy in $\Fc_3^{(k)}$, again assuming
		\cref{eq:WufRomp:2}.
		\begin{itemize}
			\item $(\Fc_3^{(k)})^{-1}$ has real min-entropy at least
			$$t' \cdot \bigl(t \cdot ({\Delta}/{2}-{\Delta^2}/{128\zeroIn}) -\zeroIn\sqrt{st}-2s\bigr).$$
			\item $(\Fc_3^{(k)})^{-1}$ has accessible max-entropy at most
			$t' \cdot \bigl((1-q/8) \cdot (t\Delta/2 + \zeroIn\sqrt{st}) +1\bigr)$,
			which is at most:
			$$t' \cdot \bigl(t \cdot (\Delta/2-\Delta q/16) + \zeroIn\sqrt{st}) +1\bigr).$$
			Now, $k \leq \zeroIn$, so $q = \Delta/4k-2^{-\Omega(s)}
			\geq \Delta/4\zeroIn-2^{-\Omega(s)}$. This means $(\Fc_3^{(k)})^{-1}$ has
			accessible max-entropy at most:
			$$t' \cdot \bigl(t \cdot (\Delta/2-\Delta^2/64\zeroIn+2^{-\Omega(s)}) + \zeroIn\sqrt{st}) +1\bigr).$$
		\end{itemize}
		Note that the gap is at least
		$t' \cdot \bigl(t \cdot \Delta^2/128\zeroIn - 2^{-\Omega(s)} - (2\zeroIn \sqrt{st} + 2s + 1)\bigr)$,
		which is at least $3s$ as long as:
		$$t \cdot \Delta^2/128\zeroIn \geq 2^{-\Omega(s)} + 2\zeroIn \sqrt{st} + 2s + 1 + 3s/t'$$
		Since $3s/t' = 3q \leq 3\Delta$, we can set $t =
		O(\zeroIn/\Delta + \zeroIn s/\Delta^2 + \zeroIn^4s / \Delta^4) = O(\zeroIn^4 s/\Delta^4)$ so that
		$(\Fc_3^{(k)})^{-1}$ has a gap of $3s$ between real min-entropy and accessible
		max-entropy, and moreover, we know where this gap is (given $k$).
		\item[{\sc step 4:}]
		We follow steps 2, 3, 4, and 5 in the previous construction, with the following
		modifications in the parameters:
		\begin{itemize}
			\item We apply entropy reduction first, with
			$$\ell = t' \cdot \bigl(t \cdot (\Delta/2-\Delta q/16) + \zeroIn\sqrt{st}) +1\bigr)+s.$$
			\item To remove the non-uniform advice $k$, we ``try all
			possibilities'' from $0$ to $\zeroIn$ in steps of size $\Delta^2/128\zeroIn$.
		\end{itemize}
		We then obtain a non-uniform construction of UOWHFs with output and
		key length $O(\zeroIn \cdot t \cdot t') = O(\zeroIn^6s^2/\Delta^5)$, since $t =
		O(\zeroIn^4s/\Delta^4)$ and $t' = O(\zeroIn s/\Delta)$. We also obtain a uniform
		construction with output length $O( \zeroIn/(\Delta^2/\zeroIn) \cdot \zeroIn \cdot t
		\cdot t' \cdot \log n) = O(\zeroIn^8s^2/\Delta^7)$ and key length
		$O(\zeroIn^8s^2/\Delta^7 \cdot \log n)$.
		
	\end{description}
	This finishes the proof of \cref{thm:TiegToUOHF2}.
\end{proof}


\section{Connection to average-case complexity}\label{sec:averagecase}  
In this section we use the notion of inaccessible entropy 
to reprove   
a result by \citet{ImpLev90}, given
in the realm of \emph{average-case complexity}. Our proof follows to a large extent the footstep of  \cite{ImpLev90}, where the  main novelty is formulating  the proof  in the language of inaccessible entropy, and rephrasing  it to make it resembles our  proof of UOWHFs from  one-way functions.

\cref{sec:avgPrelim} introduces the basic notion and definitions used through the section, and in particular what ``success on the average'' means. It also formally describes the result of \citet{ImpLev90}, a result that we reprove in \cref{sec:avg}.

\subsection{Preliminaries and the Impagliazzo and Levin result} \label{sec:avgPrelim}
We start by introducing some basic notions from average-case complexity.\footnote{We limit the following discussion only to notions that we actually use, so this section should not be regarded as an comprehensive introduction to the field of average-case complexity (see \citet{BogTre08} for such an introduction). While the definitions given here are equivalent to the ones given in \cite{BogTre08}, some of them are formulated somewhat differently (the interested reader is welcome to check their equivalence).}

\subsubsection{Algorithms that err} \label{sec:AlgErr}
Let $\Lang$ be some language, and suppose $\Ac(y;\random)$ is a randomized
algorithm with input $y \in \zs$, randomness $\random$, and output domain $\set{0,1,\bot}$.\footnote{We make the less common choice of using $y$ as the input variable (and not $x$), since in our applications the element $y$ is sampled as the output
	of a one-way function.} It is useful to think that $\Ac(y,\cdot)$ is trying to decide $\Lang$, where $0$ and $1$ are guesses
whether $y \in \Lang$, and $\bot$ signals that $\Ac$ refuses to guess.

\begin{definition}\label{def:CorrectAndErr}
	A randomized algorithm $\Ac$ is \emph{$\alpha$-correct} on input $y\in \zs$ \wrt a language $\Lang \subseteq \zs$, if $\Pr[\Ac(y;\Random) = \Lang(y)] \geq \alpha$, where we identify languages with their characteristic functions, and $\Random$ is uniformly distributed over the possible random coins of $\Ac$.
\end{definition}

Recall that when defining a worst case complexity class (\eg $\BPP$), one requires $\Pr[\Ac(y;\Random) = \Lang(y)] \geq \frac23$ for \emph{any} $y \in \zs$ (the choice of the constant $\frac23$ is somewhat arbitrary). In other words, we require that an algorithm is $\frac23-$correct for every input.

In contrast, in average-case complexity an algorithm is allowed to be wrong on some inputs. Specifically, the success probability of a decider $\Ac$ is measured not \wrt a \emph{single} input, but \wrt a \emph{distribution} over the elements of $\zs$.

To make this formal, we first a problem as such a pair of language and distribution family.
\begin{definition}[Problem]\label{def:problem}
	A {\sf problem} is a pair $(\Lang,\cD)$ of language and distribution family, where $\Lang \subseteq \zs$ and $\cD = \set{D_{i}}_{i \in \N}$ is a family of distributions over $\zs$.
\end{definition}
The problem class $\HeurBPP$ (see, \eg \cite[Definition 15]{BogTre08}) contains those problems for which there exists an efficient algorithm that is correct on all but a ``small'' fraction of the inputs.

\begin{definition}[$\HeurBPP$]\label{def:HeurAvgBPP}
	A problem $(\Lang,\cD)$ is in $\HeurBPP$, if there exists a four-input algorithm $\Ac$ such that the following holds for every $(n,\delta) \in \N \times (0,1]$: $\Ac(\cdot,1^n,\delta;\cdot)$ runs in time
	$p(n, 1/\delta)$ for some $p\in \poly$, and
	\begin{align*}
		\Pr_{Y \getsr D_n}[\text{$\Ac(Y,1^n,\delta;\cdot)$ is $\tfrac23$-correct on $Y$}]
		\geq 1-\delta.
	\end{align*}
\end{definition}
In one second we will restrict ourselves to the case that the sampler runs in polynomial time. Then, this means that $\Ac$ is allowed to run in time polynomial in the ``sampling complexity'' of the instance, and inverse polynomial in the probability with which it is allowed to err.

\subsubsection{Samplable distributions}
We next study the families of distributions $\cD$ to consider.
While it is most natural to focus on efficiently samplable distributions,
the definition of $\HeurBPP$ does not pose such limits on the distributions considered; a pair $(\Lang,\cD)$ can be decided efficiently on average even if sampling the distribution $\cD$ is a computationally hard problem. For the reduction, however, we restrict ourselves to polynomial-time samplable distributions. This limitation is crucial, since there exist (not efficiently samplable) distributions $\cD$ with the property that $(L,\cD) \in \HeurBPP$ if and only if $L \in \BPP$ (see \cite{LiVit92} or \cite[Section 2.5]{BogTre08}).
\begin{definition}[Polynomial-time samplable distributions, $(\NP,\polySamp)$]\label{def:polynomialSamplable}
	A distribution family $\cD = \set{D_n}_{n\in \N}$ is {\sf polynomial-time samplable}, denoted $\cD\in \polySamp$, if there exists a polynomial-time computable function $\Dc$ and a polynomial $p(n)$, such that $\Dc(1^n,U_{p(n)})$ is distributed according to $D_n$ for every $n\in \N$, where $U_m$ is the uniform distribution over $m$-bit strings.
	
	The product set $(\NP,\polySamp)$ denotes the set of all pairs $(\Lang,\cD)$ with $\Lang \in \NP$
	and $\cD \in \polySamp$.
\end{definition}

Note that the input $U_{p(n)}$ to $\Dc$ above is the \emph{only} source of randomness
used to sample the elements of $\cD$. In the following we make use the distribution family $\cU = \set{U_n}_{n\in \N}$ (clearly, $\cU \in \polySamp$).

\subsubsection{Impagliazzo and Levin result}
In the above terminology the result of \citet{ImpLev90} can be stated as follows (\cf \cite[Thm.\ 29]{BogTre08}):
\begin{theorem}[\cite{ImpLev90}]\label{thm:impagliazzolevin}
	If $(\NP,\polySamp) \not\subseteq \HeurBPP$, 
	then $\exists \Lang \in \NP$ with
	$(\Lang,\mathcal{U}) \notin \HeurBPP$.
\end{theorem}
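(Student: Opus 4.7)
The plan is to mirror the structure of our UOWHF-from-OWF construction, translating each step into the language of search problems and samplable distributions. First I would pass to the search version of the problem: standard arguments (as in \cite{BogTre08}) show that $(\NP,\polySamp)\not\subseteq\HeurBPP$ implies the existence of an \NP-search problem $(\RelR,\cD)$, with $\cD$ sampled by a polynomial-time algorithm $D$, such that no efficient algorithm produces a correct witness on all but a noticeable fraction of $\cD$. It therefore suffices to build a \emph{uniform-distribution} \NP-search problem that is hard to solve on a noticeable fraction of inputs, since decisional hardness on uniform will then follow by another standard reduction. From here, the whole proof becomes a ``search-problem analogue'' of the construction in \cref{sec:OwfToIAERomp,sec:OWHFsfromIE}.

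Second, I would define, exactly as sketched in \cref{subsec:averagecase}, the auxiliary relation
\[\RelR'=\bigl\{\bigl((g,i),(x,w)\bigr)\colon g\in\g,\ (D(x),w)\in\RelR,\ g(D(x))_{1,\ldots,i}=0^i\bigr\},\]
where $\g$ is a family of pairwise-independent hash functions and $i\in[n]$. This is the direct counterpart of the function $\Fc(x,g,i)=(g(f(x))_{1,\ldots,i},g,i)$ used to extract inaccessible entropy from a one-way function. The third step is the heart of the argument: show that for any efficient witness-finder $\Ac$ for $\RelR'$ on uniform input, if we write $\Ac(g,i)_x$ for the $x$-component of its output, then
\[\HSh\bigl(\Ac(G,I)_x\,\bigm|\,G,I,\text{success}\bigr)\ <\ \HSh\bigl(X\,\bigm|\,G,I,G(D(X))_{1,\ldots,I}=0^I,\text{success}\bigr)-\Delta\]
for some noticeable $\Delta=\Delta(n)$. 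The reason is the standard ``right $i$'' argument: with noticeable probability over $(g,i)$ and over $y\getsr\cD$, the number of preimages $x$ under $D$ with $g(D(x))_{1,\ldots,i}=0^i$ and $D(x)=y$ is exactly one, while the total preimage-count of $y$ inside the fiber is large; so an efficient $\Ac$ that (by the assumed search-hardness) misses the ``correct'' preimage with noticeable probability must produce a distribution on $x$ that has strictly smaller entropy than the uniform distribution on that fiber. This is precisely an inaccessible-entropy gap, phrased for the $x$-part of the witness rather than for a collision in a UOWHF.

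Finally, I would amplify this $\Omega((\log n)/n)$-bit gap into full uniform-distribution hardness by the same three-step template used in \cref{sec:WufRomp}: (i) parallel repetition (\cref{lem:gapamp}) to boost the gap and to flatten Shannon entropy into min-entropy / max-entropy; (ii) hashing the $x$-part of the witness (\cref{lem:redent}) to collapse the accessible max-entropy to $0$, which by the search-problem analogue of \cref{lem:0accmax2CR} means no efficient algorithm can find \emph{any} accepted witness on a noticeable fraction of uniform inputs; (iii) removing the non-uniform guess of the entropy $\kreal$ by trying all polynomially-many values of $k$ in steps of $\Delta/2$ and taking a conjunction of the resulting problems, as in step~5 of the proof of \cref{thm:TiegToUOHF1}. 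The resulting problem is in $\NP$ (its witness is the tuple of witnesses for each guess), has uniformly distributed instances, and is hard on average because at least one coordinate corresponds to the correct $k$. The main obstacle I anticipate is the inaccessible-entropy bound in the third step: one must carefully handle the event that $\Ac$ fails (outputs no witness), since unlike the UOWHF setting the algorithm is not forced to produce an output in the right fiber, and one has to argue that a high failure probability already contradicts the search-hardness of $(\RelR,\cD)$, while low failure probability forces the entropy gap above. Once that is done, the amplification goes through essentially verbatim from \cref{sec:OWHFsfromIE}.
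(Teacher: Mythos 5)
Your first three steps track the paper's actual proof closely: the passage to search problems and back is exactly the paper's use of the trivial direction plus \cref{thm:ben-david}, the relation $\RelR'$ is the paper's $\RelQ$ (\cref{def:npRelationQ}), and the "right $i$" entropy-gap argument is \cref{lem:impagliazzolevin-part1N}. The divergence, and the gap, is in your final amplification step. The paper does \emph{not} reuse the UOWHF pipeline there: \cref{lem:impagliazzolevin-part2} adds a single additional pairwise-independent hash on the witness-part $x=f(w)$ with a \emph{uniformly random} output length $j\in[n^d+2]$ folded into the instance, and uses the Valiant--Vazirani lemma to isolate a unique admissible $x$, which (being essentially uniform over $\cS(y)$) falls outside the accessible set $\cL(y)$ with noticeable probability. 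No repetition, no flattening, and no exhaustive search over the entropy estimate are needed, because $\SearchHeurBPP$-hardness only requires failure on some inverse-polynomial fraction of inputs, so losing a $1/\poly$ factor from guessing $j$ at random is free --- unlike the negligible-error requirement that forces the heavy machinery in the UOWHF setting.

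The concrete problem with your step (iii) is that the "try all $k$ and concatenate" trick does not port to search problems. In the UOWHF construction the candidate functions are total, so a collision for the concatenation is a collision for every coordinate and the construction inherits the security of the one correct coordinate. For search problems the natural conjunction (a witness for the tuple is a tuple of witnesses) fails: for any over-estimate $k>\kreal$ the entropy-reduction hash $\ell\approx t(k-\Delta)$ exceeds the real entropy and kills \emph{all} witnesses of that coordinate with high probability, so the combined instance is a NO instance almost always and the constant-$0$ algorithm is a valid heuristic --- the hardness of the correct coordinate is washed out. A secondary issue is your appeal to "the search-problem analogue of \cref{lem:0accmax2CR}": that lemma rests on the trivial collision $x\in\cL(x)$, which has no counterpart here; having a single accessible witness-part does not make a search problem hard if that one element happens to be a valid witness. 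One must instead argue both that some real witness survives the hash (which requires $\ell\lesssim\log\size{\cS(y)}$, i.e., hashing down to the \emph{real} entropy, not to the accessible entropy plus slack as in \cref{lem:redent}) and that no accessible element survives --- which is precisely the isolation argument the paper runs via Valiant--Vazirani. Your instinct about the $\bot$/failure event being the main obstacle is reasonable but is actually the easier part (the paper's \cref{def:avgCaseAccessibleEntropy} absorbs $\bot$ into the allowed set); the combination step is where the plan genuinely breaks.
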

In other words, suppose there exists an average-case hard problem whose distribution is polynomial-time samplable,
then there exists an average-case hard problem whose distribution is uniform.

\subsubsection{Search problems}
The proof of \cref{thm:impagliazzolevin} uses the notion of ``\NP-search problems'':

\begin{definition}[Search problems]\label{def:SearchProblem}
	A {\sf search problem} is a pair $(\RelR,\cD)$,
	where $\RelR \subseteq \zs \times \zs$ is a binary relation, and
	$\cD = \set{D_{i}}_{i \in \N}$ is a family of distributions over $\zs$.
	In case $\RelR$ is an $\NP$-relation, then $(\RelR,\cD)$ is an {\sf \NP-search problem}.
\end{definition}

For a relation $\RelR$, let $\RelR_\Lang$ be the corresponding language, \ie $\RelR_\Lang = \{y \colon \exists w\colon (y,w) \in \RelR\}$.

The notion of heuristics is naturally generalized to $\NP$-search problems.
The only change is that in case the algorithm claims $y \in \RelR_\Lang$,
it additionally has to provide a witness to prove that. A search algorithm $\Ac$ always outputs a pair, and we let $\Ac_1$
be the first component of this pair, and $\Ac_2$ be the second component.

\begin{definition}[$\SearchHeurBPP$]\label{def:SearchHeurAvgBPP}
	An \NP-search problem $(\RelR,\cD)$ is in $\SearchHeurBPP$, if there exists an algorithm $\Ac$ outputting pairs in $\set{0,1,\bot} \times \zs$ such that the following holds: (1) $\Ac_1$ is a heuristic for $(\RelR_\Lang,\cD)$ and
	(2) $\Ac_1(y,1^n,\delta;\random)=1
	\implies (y,\Ac_2(y,1^n,\delta;\random)) \in \RelR$.
	
	Algorithm $\Ac$ is called a {\sf (randomized) heuristic search algorithm} for $(\RelR,\cD)$.
\end{definition}

\subsubsection{Search problems vs.\ decision problems}
Suppose $(\Lang,\cD) \in (\NP,\polySamp)$ is a ``difficult decision problem''. Then any $\NP$-relation associated with $\Lang$ gives a ``difficult
\NP-search problem'', because finding a witness also solves the decision problem.

The converse direction is less obvious (recall that even
in worst-case complexity, one invokes self-reducibility).
Nevertheless, \cite{BCGL92}
prove the following (see also \cite[Thm.\ 4.5]{BogTre08}):
\begin{theorem}[\cite{BCGL92}]\label{thm:ben-david}
	Suppose there is an $\NP$-relation $\RelR$ with $(\RelR,\cU)\notin \SearchHeurBPP$, then $\exists \Lang \in \NP$ with $(\Lang, \cU) \notin \HeurBPP$.
\end{theorem}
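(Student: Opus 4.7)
The plan is to reduce the search problem for $\RelR$ to a decision problem via the standard ``downward self-reducibility'' paradigm, and then argue that an average-case decider for the decision problem yields an average-case search algorithm for $\RelR$. Let $p(n)\in\poly$ be an upper bound on the witness length of $\RelR$-instances of size $n$, and define
\begin{align*}
\Lang' \;=\; \{(y,z)\,:\; z\in\zo^{\le p(|y|)},\ \exists w\in\zo^{p(|y|)-|z|}\ \text{such that}\ (y,z\circ w)\in\RelR\},
\end{align*}
where $\circ$ denotes concatenation. Clearly $\Lang'\in\NP$, and I will establish the contrapositive: $(\Lang',\cU)\in\HeurBPP \Longrightarrow (\RelR,\cU)\in\SearchHeurBPP$, contradicting the hypothesis.

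Given a heuristic $\Ac$ for $(\Lang',\cU)$ with error probability at most $\eps$ (to be set to $1/\poly(n)$ below), the search algorithm $\Ac'$ for $\RelR$ on input $y\in\zo^n$ performs a ``bit-by-bit descent'' through the witness tree of $y$: starting from the empty prefix $z:=\varepsilon$, for $i=1,\dots,p(n)$ it queries $\Ac$ on $(y,z\circ 0)$, sets $z:=z\circ 0$ if $\Ac$ accepts, and $z:=z\circ 1$ otherwise; finally, $\Ac'$ runs the polynomial-time \NP-verifier on $(y,z)$ and outputs $z$ if it is a valid witness (and $\bot$ otherwise). If all $p(n)$ queries are answered correctly, then whenever $y\in\RelR_\Lang$ the algorithm $\Ac'$ is guaranteed to return a valid witness.

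The principal obstacle is that the queries $\Ac'$ issues are \emph{not} drawn from the uniform distribution that $\Ac$ is promised to handle: while the $y$-component is uniform, the prefix $z$ evolves in an $\Ac$-dependent way. The standard remedy is to re-encode instances of $\Lang'$ so that its natural uniform distribution captures the queries produced by $\Ac'$. Concretely, I would replace $\Lang'$ by the language of triples $(y,i,\tilde z)$ with $\tilde z\in\zo^{p(|y|)}$ and $i\in[p(|y|)]$, signifying that the \emph{first} $i$ bits of $\tilde z$ form a prefix extendible to a full witness of $y$. The uniform distribution over such triples has $y$, $i$ and $\tilde z$ all uniform; hence if $\Ac'$ pads its current prefix with uniformly random trailing bits (and a uniformly random position $i$) before querying, each individual query is marginally distributed as $\cU$.

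With this padding, a union bound over the $p(n)$ queries made on a fixed $y$ bounds the probability of \emph{any} query being wrong by $p(n)\cdot \eps$, and a Markov-style averaging over $y\in\zo^n$ shows that for every $\delta>0$, choosing $\eps=\delta/(2p(n))$ makes $\Ac'$ succeed except on at most a $\delta$-fraction of $y$'s. Because the heuristic $\Ac$ for $(\Lang',\cU)\in\HeurBPP$ allows us to drive its error below any inverse-polynomial threshold in polynomial time, this yields $(\RelR,\cU)\in\SearchHeurBPP$, contradicting the theorem's assumption and completing the proof.
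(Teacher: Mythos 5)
The paper does not actually prove this theorem; it imports it from \cite{BCGL92} (pointing to \cite[Thm.~4.5]{BogTre08} for a write-up), so your proposal must be judged against the known argument. Your overall plan (search-to-decision via prefix descent, then fix the query distribution) is the right shape, but the fix you propose does not work, and the step where it fails is exactly the step that makes this theorem nontrivial.

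The gap is your claim that after padding the current prefix with uniform trailing bits and a uniform position $i$, ``each individual query is marginally distributed as $\cU$.'' It is not. In the query $(y,i,\tilde z)$, the coordinate $y$ is uniform and the last $p(n)-i$ bits of $\tilde z$ are uniform, but the first $i$ bits of $\tilde z$ are the witness prefix your descent has built so far: a highly structured string determined by $y$ and by $\Ac$'s previous answers. For a typical $\RelR$ the set of valid witness prefixes of a given length is an exponentially small fraction of $\zo^i$, so a heuristic for $(\Lang',\cU)$ is permitted to err on \emph{every} instance whose first $i$ bits form a witness prefix while still having negligible error under $\cU$ --- and such a heuristic gives your reduction no information whatsoever. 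The union bound over $p(n)$ queries with error $p(n)\cdot\eps$ is therefore unjustified: the per-query error under the \emph{query} distribution can be $1$ even when the error under $\cU$ is $2^{-n}$. The known repair, due to \cite{BCGL92}, is to first isolate a witness via pairwise-independent hashing (as in the Valiant--Vazirani lemma, which this paper states as \cref{lem:valiantvazirani} and uses for the Impagliazzo--Levin part): one defines the \NP\ language of tuples $(y,h,i,j)$ asserting the existence of a witness $w$ with $h(w)=0^i$ whose $j$-th bit is $1$, and recovers the (with noticeable probability unique) isolated witness bit by bit. There the hash $h$ is chosen uniformly and independently of everything, and $i,j$ range over only polynomially many values, so the query distribution is dominated by $\cU$ up to a $\poly(n)$ factor and the error transfer goes through. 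Without some such decoupling of the queries from the witness structure, the argument cannot be completed.
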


Using \cref{thm:ben-david} the proof of \cref{thm:impagliazzolevin} proceeds as follows (see \cref{fig:implev}): suppose
there is a pair $(\Lang,\cD) \in (\NP,\polySamp) \setminus \HeurBPP$
and let $\RelR$ be an $\NP$-relation for $\Lang$. Then $(\RelR,\cD) \notin\SearchHeurBPP$ (if $(\RelR,\cD)$ would have a search heuristic algorithm,
the first component of this algorithm, that outputs its left hand side output, would place $(\Lang,\cD) \in \HeurBPP$.)
The following lemma states that in this case there is a pair $(\RelV,\cU) \notin \SearchHeurBPP$, and therefore \cref{thm:ben-david} yields the conclusion.

\begin{figure}
	
	\centering
	\begin{tikzpicture}
		
		\node (LD) at (0,2) [draw] {$(\Lang,\cD) \notin \HeurBPP$};
		\node (LU) at (8,2) [draw] {$(\Lang',\cU) \notin \HeurBPP$};
		
		\node (RD) at (0,0) [draw] {$(\RelR,\cD) \notin \SearchHeurBPP$} ;
		\node (RU) at (8,0) [draw] {$(\RelV,\cU) \notin \SearchHeurBPP$} ;
		
		\draw[->] (RU) -- node [left] {\cref{thm:ben-david}} (LU);
		\draw[->] (LD) -- node [right] {Trivial} (RD);
		\draw[->] (RD) -- node [above] {\cref{lem:impagliazzo-levin-main}} (RU);
		
	\end{tikzpicture}
	\caption{Schematic proof of \cref{thm:impagliazzolevin}.\label{fig:implev}}
\end{figure}

\begin{lemma}[\citet{ImpLev90} main lemma, reproved here]\label{lem:impagliazzo-levin-main}
	Assume that there exists an $\NP$-search problem $(\RelR,\cD)$ with $\cD \in \polySamp$ such that
	$(\RelR,\cD) \notin \SearchHeurBPP$, then there is an $\NP$-relation $\RelV$ such that $(\RelV,\cU) \notin\SearchHeurBPP$.
\end{lemma}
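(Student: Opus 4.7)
The plan is to follow the Impagliazzo--Levin blueprint but to phrase the reduction in the language of inaccessible entropy, so that its structure parallels our UOWHF construction of \cref{sec:WufRomp}. Let $D$ denote the polynomial-time sampler for $\cD$, taking inputs $x \in \zo^{p(n)}$, and let $\g = \set{g \colon \zo^\ell \to \zo^m}$ be a constructible family of pairwise independent hash functions, where $\ell$ bounds $|D(x)|$ and $m = \ell+1$. I would define the NP relation
$$\RelV = \set{((g,i),(x,w)) \colon g \in \g,\ i \in [m],\ (D(x),w) \in \RelR,\ g(D(x))_{1,\ldots,i} = 0^i},$$
with instances $(g,i)$ encoded as uniform binary strings of fixed length, so that the instance distribution is $\cU$. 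The theorem then reduces to showing $(\RelV, \cU) \notin \SearchHeurBPP$.

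I would proceed by contradiction. Given a \ppt heuristic $\Ac$ for $(\RelV, \cU)$ with error $\delta$, the plan is to build a heuristic search algorithm for $(\RelR, \cD)$: on input $y \sim \cD_n$, iterate over $i \in [m]$; for each $i$, sample $g \getsr \g$ uniformly conditioned on $g(y)_{1,\ldots,i} = 0^i$ (feasible by constructibility of $\g$), run $\Ac(g,i)$ to obtain a candidate $(x', w')$, and output $w'$ if $D(x') = y$. Correctness rests on two entropy-flavored observations, mirroring \cref{claim:hardSibling} and \cref{claim:ManyHardSibling}: (i) for the ``correct'' level $i^\ast \approx \log(1/\Pr[D(X)=y])$, a pairwise-independence / second-moment calculation shows that with noticeable probability $y$ is the \emph{unique} element in the support of $D_n$ satisfying $g(\cdot)_{1,\ldots,i^\ast} = 0^{i^\ast}$; (ii) conditioned on that uniqueness event, any valid $\RelV$-witness $(x', w')$ for $(g, i^\ast)$ must satisfy $D(x') = y$, and hence $(y, w') \in \RelR$. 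This is exactly the inaccessible-entropy gap outlined in the introduction: the real entropy of the $x$-coordinate of a valid witness, $\HSh(\cD \mid G(\cD)_{1,\ldots,I} = 0^I)$, strictly exceeds the accessible entropy of any \ppt solver's output, because an efficient solver is effectively pinned to the specific $y$ by hardness of $(\RelR, \cD)$.

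The main obstacle is quantitative. Conditioning on $g(y)_{1,\ldots,i^\ast} = 0^{i^\ast}$ costs a factor of $2^{i^\ast}$ in $\Ac$'s success probability, and $i^\ast$ can range up to $m = \poly(n)$. The standard remedy is to invoke $\Ac$ with a much smaller error tolerance $\delta' = \delta \cdot 2^{-i^\ast}/\poly(n)$; by \cref{def:SearchHeurAvgBPP}, this only inflates the reduction's runtime polynomially in $1/\delta'$. A union bound over the $m$ levels $i^\ast$, combined with the fact that the total mass of $\cD$ partitions across levels and sums to $1$, keeps the cumulative failure probability below the target. Should the bare entropy gap prove too modest for this direct approach (\eg if a single level contains too much mass and the uniqueness event is only constantly likely), I would insert a gap-amplification step analogous to \cref{lem:gapamp}, a $t$-fold product construction on $\RelV$ that boosts the noticeable entropy gap into one strong enough to contradict \SearchHeurBPP membership for any fixed $\delta$, in exactly the same manner as our UOWHF construction amplifies a $1/\poly(n)$ entropy gap into negligible-accessibility.
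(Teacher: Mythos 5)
There is a genuine gap here, and it is precisely the central difficulty of the Impagliazzo--Levin argument. Your relation $\RelV$ is only the paper's \emph{intermediate} relation $\RelQ$ (\cref{def:npRelationQ}), obtained by hashing the instance $D(x)$; the paper explicitly warns that this relation ``might not be hard on the uniform distribution,'' only ``somewhat hard'' in an entropy sense. The problem is quantitative, and your own remedy does not survive inspection: to force a heuristic $\Ac$ for $(\RelV,\cU)$ to work on the instances $(g,i^\ast)$ with $g(y)_{1\ldots i^\ast}=0^{i^\ast}$ for your particular target $y$, you must drive its error below $2^{-i^\ast}/\poly(n)$, since these instances have density $2^{-i^\ast}/\poly(n)$ under $\cU$. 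By \cref{def:SearchHeurAvgBPP} the running time is then $\poly(1/\delta')=\poly(2^{i^\ast})$, which is exponential whenever $i^\ast=\omega(\log n)$ --- and the level at which $y$ is isolated among potential instances is $i^\ast\approx\log\size{\Supp(D_n)\cap\RelR_\Lang}$, typically $\Theta(n)$. (Note also that isolation of $y$ happens at level roughly $\log\size{\Supp(D_n)}$, not at $\log(1/\Pr[D(X)=y])$ as you write; at the sample-entropy level of a heavy $y$ there are in general exponentially many other $y'\in\Supp(D_n)$ hashing to $0^{i^\ast}$, so your uniqueness event (i) fails outright, and an adversary may answer with a witness for some \emph{easy} such $y'$.) Gap amplification as in \cref{lem:gapamp} does not help: it widens an entropy gap but cannot supply the missing forcing argument.

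The paper's proof avoids the exponential loss by splitting the argument in two. First (\cref{lem:impagliazzolevin-part1N}) it never claims that the reduction to $(\RelR,\cD)$ succeeds; it argues contrapositively that, since $(\RelR,\cD)$ is hard, \emph{any} \ppt solver for $(\RelQ,\cU)$ must avoid the preimages of a noticeable-mass set of hard $y$'s, and this is packaged as a bound on the accessible average max-entropy of the $x$-coordinate of the witness --- an aggregate statement over all levels $i$ and all $y$'s that costs only polynomial factors. Second (\cref{lem:impagliazzolevin-part2}), a \emph{second} pairwise-independent hash is applied to the witness coordinate $x=f(w)$, and Valiant--Vazirani is invoked there: the event that \emph{some} $x\in\cS(y)$ is isolated has probability $\Omega(1/\poly(n))$ rather than $2^{-j}$, because one unions the isolation events over all $\size{\cS(y)}$ candidates, and conditioned on it the isolated $x$ is uniform on $\cS(y)$. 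A heuristic for the resulting relation would therefore access essentially the full real entropy of $\cS(y)$, contradicting the bound from the first step. This second hashing-the-witness step, together with the accessible-entropy bookkeeping that makes the first step meaningful without any per-instance success guarantee, is exactly what your proposal is missing.
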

Intuitively \cref{lem:impagliazzo-levin-main} states the following:
suppose some sampling algorithm $\Dc(1^n,\cdot)$ generates hard search problems, then there exist an \NP-search problem that is hard over the uniform distribution.

Consider the following application of \cref{lem:impagliazzo-levin-main}; suppose
that one-way functions exist and let $f\colon \zn \mapsto \zn$
be a length-preserving one-way function. Let $\Dc(1^n,\random)$ be the algorithm that applies
$f\colon \zn \mapsto \zn$ on the input randomness $x = \random$ and outputs
$f(x)$, and set $\cD = \set{D_n}_{n\in\N}$ to the corresponding distribution family.
Furthermore, consider the \NP-search problem given by the relation $\RelR =
\{(f(x),x) \colon x \in \zs\}$.
It is easy to verify that the $\NP$-search problem
$(\RelR,\cD)$ is not in $\SearchHeurBPP$.

\cref{lem:impagliazzo-levin-main} implies that hard problems exist for some uniform distribution, under the assumption that
one-way functions exist.
However, we knew this before: if one-way functions exist, then
UOWHFs also exist. Let $\calF_k$ be such a family as in \cref{def:UOWHF},
and consider the relation $\RelV = \{((z,x),x') \colon F_z(x) = F_z(x'),
x \neq x'\}$,
which asks us to find a non-trivial collision in $F_z$ with a given $x$.
By the security property of UOWHF, if we pick $(z,x)$ uniformly at
random, then this is a hard problem, and it is possible to show that
$(\RelV,\cU) \notin \SearchHeurBPP$.
Thus, Rompel's result gives the conclusion of
\cref{lem:impagliazzo-levin-main} in case we have
the stronger assumption that one-way functions exist.

Given the above, it seems natural to ask whether the strategy used for constructing UOWHF from one-way functions, can be used for proving the general case stated in \cref{lem:impagliazzo-levin-main}. In the following section we answer the above question in 
the affirmative.   
Specifically, we present an alternative proof for \cref{lem:impagliazzo-levin-main} following a similar approach to that taken in the first part of this paper for constructing UOWHF from one-way functions.

\subsubsection{The Valiant-Vazirani lemma}
We make use of the Valiant-Vazirani Lemma, originated in \cite{valiant1986np}.
\begin{lemma}[\cite{valiant1986np}]\label{lem:valiantvazirani}
	Let $\cS \subseteq \zs$ be a set such that $2^k \leq \size{\cS} \leq 2^{k+1}$, and let $\g = \set{g\colon \zs \mapsto \zo^{k+2}}$ be a family of pairwise independent hash-functions. Then for any $x \in \cS$, it holds that $\Pr_{g\getsr \g}[g^{-1}(0^{k+2})=\set{x}] \ge 1/{2^{k+3}}$.
\end{lemma}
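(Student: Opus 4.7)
The plan is to prove this by a standard union-bound / inclusion-exclusion argument that exploits pairwise independence. Fix $x \in \cS$. The event we care about, namely $g^{-1}(0^{k+2}) = \set{x}$, decomposes as the conjunction of $g(x) = 0^{k+2}$ and the event that no other element of $\cS$ is mapped to $0^{k+2}$. The first event has probability exactly $1/2^{k+2}$ by the (one-wise) uniformity of $\g$ on input $x$.

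For the second event, I would condition on $g(x) = 0^{k+2}$ and use pairwise independence to observe that for every fixed $y \in \cS$ with $y \neq x$, the conditional probability of $g(y) = 0^{k+2}$ is still $1/2^{k+2}$. A union bound over $y \in \cS \setminus \set{x}$ then gives
\[
\Pr_{g}\!\left[\exists y \in \cS \setminus \set{x} \colon g(y) = 0^{k+2} \,\Big|\, g(x) = 0^{k+2}\right] \le \frac{\size{\cS} - 1}{2^{k+2}} \le \frac{2^{k+1}}{2^{k+2}} = \frac{1}{2},
\]
using the upper bound $\size{\cS} \le 2^{k+1}$ from the hypothesis.

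Combining these two estimates yields
\[
\Pr_{g}\!\left[g^{-1}(0^{k+2}) = \set{x}\right] \ge \Pr_{g}\!\left[g(x) = 0^{k+2}\right] \cdot \left(1 - \frac{1}{2}\right) = \frac{1}{2^{k+2}} \cdot \frac{1}{2} = \frac{1}{2^{k+3}},
\]
as claimed. The argument is entirely routine; the only thing to watch is the choice of output length $k+2$, which is precisely what guarantees that the expected number of ``bad'' collisions is at most $1/2$ and hence leaves a constant probability gap after subtracting from the marginal $\Pr[g(x) = 0^{k+2}]$. There is no real obstacle here beyond this careful accounting.
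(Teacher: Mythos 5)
Your proof is correct and follows essentially the same route as the paper's: compute $\Pr[g(x)=0^{k+2}]=2^{-(k+2)}$, then condition on this event and use pairwise independence plus a union bound over $\cS\setminus\set{x}$ to bound the collision probability by $\size{\cS}/2^{k+2}\le 1/2$. The only difference is cosmetic (you use $\size{\cS}-1$ where the paper uses $\size{\cS}$, both yielding the same bound).
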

A more usual formulation of the Valiant-Vazirani lemma
states that with probability
at least~$\frac18$ there is exactly one element $x \in \cS$
with $g(x) = 0^{k+2}$.
This follows immediately from the above form.
\begin{proof}
	The probability that $g(x) = 0^{k+2}$ for a fixed $x \in \cS$ is
	$\frac{1}{2^{k+2}}$. Conditioned on this event, due to the
	pairwise independence of $\g$, the probability that any other element of
	$\cS$ is mapped to $0^{k+2}$ is at most $\size{\cS} \frac{1}{2^{k+2}} \leq
	\frac{1}{2}$.
\end{proof}

\subsection{Proving \cref{lem:impagliazzo-levin-main} via inaccessible entropy}\label{sec:avg}
Recall the basic idea underlying the two constructions of UOWHF from one-way functions presented in the first part of this paper. In a first step, we use the one-way function $f$ to construct a function $F$ that induces a gap between its real and accessible entropy (\ie $F$ has ``inaccessible entropy''). Roughly, the distribution induced by the output of any efficient ``collision-finder'' algorithm getting a random $x$ and returning a random $x' \in F^{-1}(F(x))$, has a \emph{smaller} entropy than that induced by random preimage of $F(x)$.
Afterwards, we use $F$ to build the UOWHF.

We want to redo this first step in the current setting. Now, however, it is not anymore important to talk about collisions.\footnote{One advantage of using an ``inversion problem'' instead of a ``collision problem'' is that it becomes possible to use two-wise independent hash-functions (instead of three-wise independent).} Thus, we can instead define $F$ such that $F^{-1}(y)$ has some inaccessible entropy for a uniform random $y$. This is in fact compatible with the construction given in \cref{sec:OwfToIAERomp}: it is possible to show that the image of $F$ is close to uniform in case $i \approx \Hall_{f(X)}(f(x))$ (recall that $i$ is the number of bits hashed out from $f(x)$ in the definition of $F$).

Let now $(\RelR,\cD)$ be an $\NP$ search problem with $\cD \in \polySamp$ which is not in $\SearchHeurBPP$.
We would like to use a similar approach as above to define a relation with limited accessible max-entropy.
One might suggest that the following search problem has inaccessible entropy:
given a four tuple $(n,i,g,z)$, where $g$ is a pairwise independent hash-function, and $z$ has $i$ bits, find as solution an input $x$ such that $g(\Dc(1^n,x))_{1,\ldots,i} = z$. However, it turns out that one does not in fact need the randomness inherent in the choice of $z$ (note that a typical pairwise independent hash-function XORs the output with a random string anyhow). Instead, it makes no difference to fix $z = 0^i$, and so we adopt this to simplify the notation, so that the suggested search problem becomes to find $x$ with $g(\Dc(1^n,x))_{1,\ldots,i} = 0^i$ for a given triple $(n,i,g)$.

\paragraph{Problems with the above intuition and postprocessing the witness.}
A moment of thought reveals that there can be cases where this suggested search problem is easy. For example if the sampler $\Dc(1^n,x)$
simply outputs $y=x$ itself, which is possible if finding $w$ with $(y,w) \in \RelR$ is difficult for a uniform random $y$.
The solution is easy: ask the solving algorithm to output also a matching witness $w$ with $(\Dc(1^n,x),w) \in \RelR$ (ignore invalid outputs).

Thus, the suggested search problem becomes: ``given $(n,i,g)$, find $(x,w)$ such that $g(\Dc(1^n,x))_{1\ldots i} = 0^i$ and $(\Dc(1^n,x),w) \in \RelR$''. The hope is then that this search problem has limited accessible entropy in the coordinate corresponding to $x$ (we do not want to talk about the entropy in $w$ because it arise from the number of witnesses which $\RelR$ has, and at this point we have no control over this number).

There is a last little problem to take care of: it is not obvious how to encode $n$ into the search problem, as $(n,i,g)$ does not look like a uniform bitstring of a certain length, even if $i$ and $g$ look random. However, it is possible to ensure that the length of $(i,g)$ uniquely define $n$, and we assume that this is done in such a way that $n$ can be easily computed from the length of $(i,g)$.

\subsubsection{A Relation with bounded accessible average max-entropy}
Using the above discussion, we now finally have enough intuition to define the relation $\RelQ$. For $\cD \in \polySamp$, we let $\Canon(\cD)$ be an arbitrary polynomial-time sampler for $\cD$.

\begin{constructiona}\label{def:npRelationQ}
	Let $\RelR$ be an $\NP$ relation, let $\cD \in \polySamp$, let $\Dc = \Canon(\cD)$ and let $d\in \N$ be such that $\Dc$'s running time on input $(1^n,\cdot)$ is bounded by $n^d$.
	Let $\g$ be an explicit and constructible family of pairwise independent hash functions, where the family $\g_m$ maps all strings of length at most $m$ to strings of length $m$.
	
	For $n\in \N$, define
	\begin{align*}
		\RelQ^{(n)} \eqdef \set{\bigl((i,g),(x,w)\bigr) \colon x \in \zo^{n^d}, i\in [n^d], g\in \g_{n^d},
			g(\Dc(1^n,x))_{1\ldots i} = 0^i, (\Dc(1^n,x),w)\in \RelR},
	\end{align*}
	and let $\RelQ \eqdef \bigcup_{n \in \N} \RelQ^{(n)}$.
\end{constructiona}
Note that the elements of $\g_m$ in \cref{def:npRelationQ} have domain $\bigcup_{i=0}^{m} \zo^{i}$. This somewhat unusual requirement is needed since the sampler might output strings of arbitrary lengths (up to $n^d$).

From now on, we will only consider the case where we have some fixed
sampler $\Dc$ in mind. In this case, whenever $n$ is given, we will
assume that $(i,g)$ are elements satisfying the conditions
in \cref{def:npRelationQ}. Furthermore, we assume \wlg that (the encoding of) a uniform random bitstring, of the right length, induces the uniform distribution on $\g_{n^d} \times [n^d]$. 


\subsubsection{Accessible average max-entropy}
We next define what it means for an $\NP$-search problem to have limited accessible max-entropy, \wrt a part of its witness. This notion is modeled by introducing a function $f$ that outputs the ``interesting part'' of the witness.

\begin{definition}\label{def:AvgAE}
	Let $\RelQ$ be an $\NP$ relation, and $f\colon \zs \mapsto \zs$ a function.
	For $y \in \zs$ let
	\begin{align*}
		\cS_{\RelQ,f}(y) \eqdef \set{ f(w) \colon (y,w) \in \RelQ}.
	\end{align*}
	The {\sf real average max-entropy of $(\RelQ,\cD)$ \wrt $f$}, is the function
	\begin{align*}
		\Hreal_{\RelQ,\cD,f}(m) = \Exp_{Y \getsr D_m}[\log(\size{\cS_{\RelQ,f}(Y)})]
	\end{align*}
	letting $\log(0) \eqdef -1$.\footnote{The convention $\log(0) = -1$ helps us to simplify notation in a few places.
		(We need that $\log(0) < \log(1)$ since an algorithm which produces
		no valid witness should produce less entropy than an algorithm which produces some valid witness).}
\end{definition}
In case the relation $\RelR$ and $f$ are clear from the context, we sometimes write $\cS(y)$ instead of $\cS_{\RelQ,f}(y)$.

We next define a useful notion of limited accessible max-entropy in
this setting.
Here, one should think of algorithm $\Ac$ as an algorithm which, on input $y$ produces a witness $w$ with $(y,w) \in \RelQ$.
It furthermore ``aims'' to produce witnesses $w$ for which $f(w)$ has as much entropy as possible.

\begin{definition}\label{def:avgCaseAccessibleEntropy}
	Let $f\colon \zs \mapsto \zs$, $p$, $k\colon \N \times (0,1] \mapsto \R$,
	$\RelQ$ an $\NP$-relation, and $\Dc = \set{D_n}_{n\in\N}$ a family of distributions.
	The pair $(\RelQ,\cD)$ has {\sf \ioS (infinitely often) $\rho$-accessible average
		max-entropy at most $k$ \wrt $f$},
	if for every four-input algorithm $\Ac(y,1^{\secParQ},\qDelta;r)$ running in time $\ell =\ell(\secParQ,1/\qDelta)$ for some $\ell\in \poly$, there exists infinitely many $\secParQ$'s in $\N$, a function $\qDelta(\secParQ) = \qDelta\in (0,1]$ and an ensemble of set families $\set{\set{\cL_{\secParQ}(y)\subseteq \zs}_{y\in \Supp(D_\secParQ)}}_{\secParQ\in \N}$, such that
	\begin{align}\label{eq:Avg:2}
		\Pr_{Y\getsr D_{\secParQ}, \Random \getsr \zl}
		\bigl[\Gamma(Y, \Ac(Y,1^{\secParQ},\qDelta;\Random)) \in \bigl(\cL_{\secParQ}(Y) \cup \set{\perp}\bigr)\bigr]
		&\geq 1-\rho(\secParQ,\qDelta)
	\end{align}
	where $\Gamma = \Gamma_{\RelQ,f}(y,w)$ equals $f(w)$ in case $(y,w) \in \RelQ$ and equals $\bot$ otherwise, and
	\begin{align}\label{eq:Avg:3}
		\Exp_{Y\getsr D_\secParQ}\bigl[\log(\size{\cL_{\secParQ}(Y)})\bigr] &\leq k(\secParQ,\qDelta)
	\end{align}
\end{definition}

The following lemma, proven in \cref{sec:avgPartI}, states that the relation $\RelQ$ defined in \cref{def:npRelationQ}
has limited accessible max-entropy \wrt the function $(x,w) \mapsto x$.
\begin{lemma}\label{lem:impagliazzolevin-part1N}
	Let $(\RelR,\cD)$ be an $\NP$ relation with $\cD \in \polySamp$ and $(\RelR,\cD) \notin \SearchHeurBPP$. Define $\RelQ$ from $(\RelR,\cD)$ as in \cref{def:npRelationQ}, and let $f\colon \zs \times \zs \mapsto \zs$ be given by $f(x,w) = x$.
	Then, for any $c \in \N$,
	$(\RelQ,\cU)$ has \ioS $(\frac{\qDelta}{\secParQ})^c$-accessible average
	max-entropy at most $k(\secParQ,\qDelta) = \Hreal_{\RelQ,\cU,f}(\secParQ) - \sqrt[c]{\qDelta}\cdot \secParQ^c$ \wrt $f$.
\end{lemma}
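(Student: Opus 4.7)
The plan is to mimic the proof of \cref{thm:OwfToIAERomp}, with the canonical sampler $\Dc=\Canon(\cD)$ playing the role of the one-way function $f$ and the hypothesis $(\RelR,\cD)\notin \SearchHeurBPP$ playing the role of one-wayness. I proceed by contrapositive: given an efficient $\Ac$ that purportedly exceeds the claimed accessible average max-entropy bound, I fix a specific candidate family of sets $\set{\cL_\secParQ(i,g)}$ and show that either it witnesses the bound, or $\Ac$ can be transformed into a heuristic search algorithm $\Bc$ for $(\RelR,\cD)$, contradicting the hypothesis. The i.o.\ quantifier is handled by a standard padding argument relating the security parameter $\secParQ$ of $\RelQ$ to the instance length $n$ of $\RelR$ via $\secParQ=\Theta(n^d)$.

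Mirroring \cref{eq:rompelconstructTildeL}, I take
\begin{align*}
	\cL_\secParQ(i,g) \eqdef \set{x\in \zo^{n^d} \colon g(\Dc(1^n,x))_{1\ldots i}=0^i,\ \exists w \colon (\Dc(1^n,x),w)\in \RelR,\ \Dc(1^n,x)\in \tL(i)},
\end{align*}
where $n=n(\secParQ)$ is recovered from the input length and $\tL(i)\eqdef \set{y\colon \Hsam_{\Dc(1^n,U_{n^d})}(y)\geq i+c'\log \secParQ}$ is the set of ``$i$-light'' outputs of $\Dc$ (here $c'$ is an internal parameter chosen in terms of the $c$ of the lemma statement). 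I then establish two steps, analogous to \cref{claim:hardSibling} and \cref{claim:ManyHardSibling}. In the first step, I design the heuristic $\Bc(y,1^n,\delta;\cdot)$ that picks $i\getsr [n^d]$, samples $g\getsr \g_{n^d}$ conditioned on $g(y)_{1\ldots i}=0^i$ (using constructibility of $\g$), runs $\Ac((i,g),1^\secParQ,\qDelta;\cdot)\to (x',w')$, and outputs $w'$ whenever $\Dc(1^n,x')=y$. A Bayes/two-wise-independence computation identical to \cref{claim:hardSibling} gives, for $Y\sim D_n$ and $(I,G)$ uniform on $[n^d]\times \g_{n^d}$,
\begin{align*}
	\pr{\Bc \text{ finds a witness for } Y} \geq \tfrac{1}{n^d\cdot \secParQ^{c'}} \cdot \pr{\Gamma((I,G),\Ac((I,G),1^\secParQ,\qDelta;\cdot))\notin \cL_\secParQ(I,G)\cup \set{\perp}},
\end{align*}
so if the right-hand probability ever exceeds $\rho(\secParQ,\qDelta)=(\qDelta/\secParQ)^c$, a mild trial-repetition turns $\Bc$ into a heuristic for $(\RelR,\cD)$ that works at every $(n,\delta)$ of interest, contradicting $(\RelR,\cD)\notin \SearchHeurBPP$. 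In the second step, a Valiant--Vazirani-style counting (invoking \cref{lem:valiantvazirani} and parallel to \cref{claim:ManyHardSibling}) shows that each $y\in \RelR_\Lang\cap \Supp(D_n)$ contributes at least $2^{-n^d}/8$ probability over $(I,G)$ of being an ``uncovered'' element in $\cS_{\RelQ,f}(I,G)\setminus \cL_\secParQ(I,G)$, provided $i$ lies in the width-$c'\log \secParQ$ window just below $\Hsam_{\Dc(1^n,U_{n^d})}(y)$; applying $\log(1+\alpha)\geq \alpha/2$ for $\alpha\in[0,1]$ then yields an entropy gap of the form required by the lemma once $c'$ is tuned against $c$.

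The main obstacle will be the parameter bookkeeping: $\Bc$ must, after amplification, succeed with probability at least $1-\delta$ for every $(n,\delta)$, while $\Ac$ is only guaranteed to fail the accessible-entropy bound at infinitely many $\secParQ$ and at some $\qDelta(\secParQ)$. Translating this infinitely-often failure into a genuine heuristic for $(\RelR,\cD)$ requires choosing $(\secParQ,\qDelta)$ as suitable polynomials in $(n,\delta)$, padding inputs to $\RelQ$ so that $U_\secParQ$ lies in the right length slice of \cref{def:npRelationQ}, and invoking averaging over $i$ together with repetition of $\Bc$ to upgrade an inverse-polynomial single-trial success probability into the near-deterministic correctness demanded by \cref{def:SearchHeurAvgBPP}. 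Aside from this fairly standard but somewhat delicate calibration, every step is a direct translation of the corresponding step in the proof of \cref{thm:OwfToIAERomp}, so no qualitatively new technical difficulty arises.
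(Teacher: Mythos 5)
There is a genuine gap here, and it is not the parameter bookkeeping you flag at the end but the definition of the set family itself. You define $\cL_\secParQ(i,g)$ non-adaptively, as the witnesses whose $x$-part maps to an $i$-light output of $\Dc$, mirroring \cref{eq:rompelconstructTildeL}, and your step 1 argues that if $\Ac$ escapes this set noticeably often then $\Bc$ contradicts $(\RelR,\cD)\notin\SearchHeurBPP$. But the reduction only yields $\Pr_{Y\getsr D_n}[\Bc \text{ finds a witness for } Y]\geq \rho/\poly(n)$, i.e., an algorithm that succeeds on an inverse-polynomial \emph{fraction} of $D_n$; repetition improves the per-instance success probability, not the fraction of instances on which there is anything to amplify. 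An algorithm succeeding on a $1/\poly(n)$ fraction of instances is perfectly consistent with $(\RelR,\cD)\notin\SearchHeurBPP$, which only rules out success on a $1-\delta$ fraction for every $\delta$. Concretely, nothing prevents the heavy outputs of $\Dc$ from being exactly the easy instances of $(\RelR,\cD)$: then $\Ac$ may freely output preimages of heavy images, your $\cL$ fails to capture its output with noticeable probability, and yet no contradiction arises. This is the essential disanalogy with \cref{thm:OwfToIAERomp}: one-wayness is a statement about the image distribution $f(X)$, under which every heavy image automatically carries noticeable mass, so producing preimages of heavy images alone already breaks the assumption; average-case hardness of $(\RelR,\cD)$ gives no control over \emph{which} instances are hard.

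The paper's proof therefore runs the reduction in the opposite direction and defines the sets adaptively. It first builds the candidate heuristic $\Bc^{\Ac}$ (with the repetition already built in), uses $(\RelR,\cD)\notin\SearchHeurBPP$ to extract infinitely many $\secParR$ and a failure set $\YB\subseteq\Lang$ of $D_\secParR$-measure greater than $\rDelta$ on which $\Ac$'s single-trial conditional success probability is tiny, buckets the preimages of $\YB$ by sample entropy into sets $\cLv(i)$, and sets $\cL_\secParQ(i,g)=\cS(i,g)\setminus\cLv(i)$ --- a family that depends on $\Ac$, as \cref{def:avgCaseAccessibleEntropy} permits. \cref{claim:avgHardPreimages} (that $\Ac$ rarely lands in $\cLv(I)$) then follows essentially from the definition of $\YB$ via the Bayes computation you describe, and \cref{claim:avgManyHardPreimages} supplies the mass lower bound via \cref{claim:two-universal-picking}, with no heaviness threshold appearing anywhere. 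Your step 2 counting is structurally sound but measures the mass of the wrong set; to repair the argument you would need to replace the light/heavy dichotomy by the failure set of the heuristic you construct.
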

\cref{lem:impagliazzolevin-part1N} is proven below, and in \cref{sec:avgPartII} we use \cref{lem:impagliazzolevin-part1N} for proving \cref{lem:impagliazzo-levin-main}. The latter is done by additionally fixing the value of $h(x)_{1\ldots j}$, where $h$ is an additional random hash function, and $j$ is a random integer (in a certain range). The ratio is that an algorithm producing $(x,w)$ with $h(x)_{1\ldots j} = 0^j$, can be used to access max-entropy roughly $2^j$.

\subsubsection{Proving \cref{lem:impagliazzolevin-part1N}}\label{sec:avgPartI}
The proof of \cref{lem:impagliazzolevin-part1N} follows similar lines to the proof of \cref{thm:OwfToIAERomp}.

\begin{proof}[Proof (of \cref{lem:impagliazzolevin-part1N})]
	Let $\Ac$ be an algorithm that ``aims to produce max-entropy for $\RelQ$''.
	Without loss of generality, we assume that $\Ac$ either outputs a valid witness $(x,w)$ for a given input $(i,g)$
	or $\bot$.
	We show how to find infinitely many $\secParQ \in \N$, $\qDelta = \qDelta(\secParQ) \in (0,1]$, and ensemble of set families $\set{\cL_\secParQ = \set{\cL_\secParQ(i,g)}_{(i,g) \in \RelQ_\Lang}}_{\secParQ\in \N}$ with the properties as required in the lemma (we write $\cL_\secParQ(i,g)$ instead of $\cL_\secParQ(y)$ because the elements of
	$\RelQ_\Lang$ are pairs $(i,g)$). Towards achieving the above, consider the following candidate algorithm $\Bc$ for a search heuristics for $(\RelR,\cD)$.
	
	Let $\beta \in \N$ be a constant to be determined by the analysis, let $d \in \N$ be such that $\secParR^d$ is an upper bound on the runtime of the sampler $\Dc$ (recall that we have fixed $\Dc$ above) on input $(1^\secParR,\cdot)$, and let $\ell = \ell(\secParQ,\qDelta)$ be an upper bound on the running time of $\Ac$ on parameters $\secParQ$ and $\qDelta$. Let $\secParQ (\secParR)$ be the description length of a pair in $[\secParR^d] \times \g_{\secParR^d}$ and let $\qDelta(\secParR,\rDelta) = (\rDelta(\secParR)/\secParR^\beta)^\beta$.

	\bigskip
	\noindent\framebox{
		\begin{minipage}{16cm}
			\noindent \textbf{Search Heuristics $\Bc^{\Ac}$}
			\medskip
			
			\hrule
			
			\medskip
			
			\textbf{Oracle:} $\Ac$\qquad\qquad /\!\!/ Entropy generator for $(\RelQ,\cU)$.\\
			\textbf{Input:} $\ys \in \zs$, $1^{\secParR}$ and $\rDelta\in (0,1]$.
			\medskip\hrule\medskip
			
			$\qDelta = \qDelta(\secParR,\rDelta)$; $\secParQ = \secParQ(\secParR)$;
			$\ell = \ell(\secParQ,\qDelta)$\\
			\textbf{repeat} $\secParR\cdot(\secParR/\rDelta)^{\beta}$ \textbf{times:}\\
			\mbox\qquad $i \getsr [\secParR^d]$\\
			\mbox\qquad $g \getsr \set{g' \in \g_{\secParR^d} \colon g'(\ys)_{1\dots i} = 0^i}$\\
			\mbox\qquad $\random \getsr \zl$\\
			\mbox\qquad $(x,w) = \Ac((i,g),1^\secParQ,\qDelta; \random)$\\
			\mbox\qquad \textbf{if} $(\ys,w) \in \RelR $ \textbf{and} $\Dc(1^\secParR,x)=\ys$\\
			\mbox\qquad\qquad \textbf{return} $(1,w)$\\
			\textbf{return} $0$
		\end{minipage}
	}
	\bigskip
	
	It is clear that, for any $\beta \in \N$, the running time of $\Bc$ is $\poly(\secParR,1/\rDelta)$. Recall that, by assumption, $(\RelR, \cD)$ has no search heuristics. Algorithm $\Bc$ satisfies item (2) of \cref{def:SearchHeurAvgBPP},
	and since $(\RelR,\cD)$ has no search heuristics, it must fail
	to satisfy item (1) (\ie $\Bc_{1}$ cannot be a randomized heuristic). Note now that algorithm $\Bc_1$ is always correct if $\ys \notin \Lang$ (always outputs $0$). Thus, $\Bc$ must fail to be correct on some $\ys \in \Lang$. In fact, there exist infinitely many $\secParR$'s in $\N$ and a function $\rDelta = \rDelta(\secParR) \in (0,1]$. such that $\Bc_1(y,1^\secParR,\rDelta;\cdot)$ is not $\frac{2}{3}$-correct
	for more than a fraction $\rDelta$ of the inputs $y \in \RelR_\Lang$
	produced by $\Dc(1^{\secParR},\cdot)$.

	The following discussion is \wrt any fixed pair $(\secParR,\rDelta = \rDelta(\secParR))$ from the above infinite set.
	
	We present a family of sets $\set{\cL_{\secParQ}(i,g)}_{(i,g)\in\g_{\secParR^d} \times [\secParR^d]}$ for which \cref{eq:Avg:2,eq:Avg:3} holds \wrt algorithm $\Ac$ and $f$, for the parameters $\secParQ = \secParQ(\secParR)$, $\qDelta = \qDelta(\secParR,\rDelta) = \qDelta(\secParQ)$, and $\rho$ and $k$ as stated in the lemma. Since this holds for \emph{any} such pair $(\secParR,\rDelta)$ and since $\secParQ(\secParR) \in \Omega(\secParR)$ (and thus, there are infinitely many different $\secParQ$'s) the proof of the lemma would follow.
	
	Consider the following set
	\begin{align}
		\YB = \set{ \ys \in \Lang \colon \Pr_{{I\getsr [\secParR^d]}\atop{G \getsr \g_{\secParR^d},\Random \getsr \zl}}[\Ac_1((I,G),1^\secParQ,\qDelta;\Random)
			\in \Dc^{-1}(1^\secParR,\ys) \colon G(\ys)_{1 \ldots I} = 0^I] <
			\Bigl(\frac{\qDelta}{\secParR}\Bigr)^\beta},
	\end{align}
	letting $\Dc^{-1}(1^\secParR,\ys)\eqdef \set{x \in \zo^{\secParR^d} \colon \Dc(1^\secParR,x) = \ys}$ and let $\ell = \ell (\secParQ)$. Note that $\YB$ contains all the $y$'s in $\Lang$ for which $\Bc_1$ is not $\frac23$-correct, and the above discussion implies
	\begin{align}\label{eq:calYisLarge}
		\Pr_{Y \getsr D_{\secParR}}[Y \in \YB] > \rDelta
	\end{align}
	Towards defining the sets $\set{\cL_\secParQ(i,g)}$, we partition the preimages of the elements in $\YB$ into buckets; for $i \in \set{0,\ldots,\secParR^{d}-1}$ let 
	\begin{align}
		\cLv(i) \eqdef \set{x \in \zo^{\secParR^d} \colon \Dc(1^\secParR,x) \in \left(\YB
			\cap \set{y \colon \Hall_{D_\secParR(y)}(y) \in [i,i+1)}\right)},
	\end{align}
	where $\Hall_{D_\secParR(y)}(y) = -\log(1/D_\secParR(y))$ is the sample entropy of $y$ \wrt the distribution $D_{\secParR}$.
	In words: $\cLv(i)$ are those $x$ for which $y=\Dc(1^\secParR,x) \in \Lang$
	is an element for which $\Bc(y)$ is unlikely to produce a witness,
	and for which $y$ has roughly $2^{(\secParR^d)-i}$ preimages.
	
	For $(i,g)\in\g_{\secParR^d}$, the set $\cL_\secParQ(i,g)$ is defined as
	\begin{align}
		\cL_\secParQ(i,g) \eqdef \cS(i,g) \setminus \cLv(i),
	\end{align}
	where $\cS(i,g)$ is taken from \cref{def:AvgAE}. In the remaining of the proof we show that, for the right choice of $\beta$, \cref{eq:Avg:2,eq:Avg:3} holds \wrt the family $\set{\cL_\secParQ(i,g)}$ for the functions $\rho(\secParQ,\qDelta) = (\frac{\qDelta}{\secParQ})^c$ and $k(\secParQ,\qDelta) = \Hreal_{\RelQ,\cU,f}(\secParQ) - \sqrt[c]{\qDelta}\cdot \secParQ^c$. The proof easily follow by the next two claims.
	\begin{claima}\label{claim:avgHardPreimages}
		We have
		\begin{align*}
			\Pr_{\MyAtop{I\getsr [\secParR^d], G \getsr \g_{\secParR^d}}{\Random \getsr \zl}}[\Ac_1((I,G),1^\secParQ,\qDelta;\Random) \in \cLv(I)]
			\leq 2\secParR^d \cdot \Bigl(\frac{\qDelta}{\secParR}\Bigr)^\beta
		\end{align*}
	\end{claima}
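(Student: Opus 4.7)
The plan is to rewrite the left-hand side as a sum over $y\in\YB$ of probabilities that $\Ac$ produces a preimage of $y$ with $I$ in the ``right'' sample-entropy bucket, and then to control each summand using the defining inequality of $\YB$ together with the pairwise-independence of $G$.

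First, observe that $x\in\cLv(I)$ iff $\Dc(1^{\secParR},x)=y$ for some $y\in\YB$ with $\Hall_{D_{\secParR}}(y)\in[I,I{+}1)$. Writing $i_y$ for the unique integer with $\Hall_{D_{\secParR}}(y)\in[i_y,i_y{+}1)$, this yields the decomposition
\[
\Pr_{I,G,\Random}\bigl[\Ac_1((I,G);\Random)\in\cLv(I)\bigr]
= \sum_{y\in\YB}\Pr_{I,G,\Random}\bigl[\Dc(1^{\secParR},\Ac_1((I,G);\Random))=y\ \wedge\ I=i_y\bigr].
\]

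Second, I would convert the $\YB$-definition into a per-$y$ upper bound. Since $\Ac$ always outputs a valid $\RelQ$-witness, the event ``$\Dc(\Ac_1((I,G);\Random))=y$'' is contained in the event ``$G(y)_{1,\ldots,I}=0^I$''. Combining this with the inequality $\Pr_{I,G,\Random}[\Ac_1\in\Dc^{-1}(y)\mid G(y)_{1,\ldots,I}=0^I]<(\qDelta/\secParR)^\beta$ that defines $\YB$, and using $\Pr_{I,G}[G(y)_{1,\ldots,I}=0^I]=\tfrac{1}{\secParR^d}\sum_{i=1}^{\secParR^d}2^{-i}\le 2/\secParR^d$, I obtain
\[
\Pr_{I,G,\Random}\bigl[\Dc(\Ac_1((I,G);\Random))=y\bigr]\ <\ (\qDelta/\secParR)^\beta\cdot\Pr_{I,G}[G(y)_{1,\ldots,I}=0^I]\ \le\ \frac{2(\qDelta/\secParR)^\beta}{\secParR^d}.
\]

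Third, the summation step is the main obstacle. Simply summing the per-$y$ bound over $\YB$ would give a multiplicative $|\YB|$ factor, which is uncontrolled. To avoid this, I would also use the pairwise-independence bound $\Pr_G[\Ac_1((i_y,G))\in\Dc^{-1}(y)]\le 2^{-i_y}$ (which follows because any valid witness satisfies $G(y)_{1,\ldots,i_y}=0^{i_y}$, an event of probability $2^{-i_y}$), and then bucket $y$'s by $i_y$. The bucket sizes obey $|\YB_i|\le 2^{i+1}$ since each $y\in\YB_i$ carries probability at least $2^{-(i+1)}$ under $D_{\secParR}$ and these probabilities sum to at most $1$. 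Taking the minimum of the two per-$y$ bounds and summing over buckets, the small-$i$ contribution is controlled by the $\YB$-bound and the large-$i$ contribution by the hash-bound; the crossover occurs at $i^{*}=\log(1/(2(\qDelta/\secParR)^\beta))$, where the geometric growth of one estimate meets the geometric decay of the other. This telescoping yields the claimed bound $2\secParR^d\cdot(\qDelta/\secParR)^\beta$, and the delicate bookkeeping at the crossover is where I expect the main technical effort to lie.
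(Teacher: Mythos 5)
Your decomposition over $y\in\YB$ and your per-$y$ use of the $\YB$-membership condition are sound and match the setup of the paper's argument, but your third step has a genuine gap: the min-of-two-bounds/crossover argument does not yield the claimed bound. For a bucket index $i$ above your crossover $i^{*}$, the hash bound gives a per-bucket contribution of at most $\frac{1}{\secParR^d}\cdot\size{\YB_i}\cdot 2^{-i}\leq\frac{2}{\secParR^d}$, and there are up to $\secParR^d$ such buckets, so the large-$i$ part of your sum is only bounded by a constant --- no better than the trivial bound $\Pr[\cdot]\leq 1$. The root cause is that your second step discards the dependence on $i_y$: you multiply the $\YB$ bound by $\Pr_{I,G}[G(y)_{1,\ldots,I}=0^I]\leq 2/\secParR^d$, a quantity that is uniform over $y$, whereas the usable form of $\YB$-membership keeps a factor $2^{-i_y}$. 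Since the conditioning event $G(y)_{1,\ldots,i_y}=0^{i_y}$ has probability exactly $2^{-i_y}$ and contains the success event, $\YB$-membership gives $\Pr_{G,\Random}\bigl[\Ac_1((i_y,G),1^\secParQ,\qDelta;\Random)\in\Dc^{-1}(1^\secParR,y)\bigr]\leq 2^{-i_y}\cdot\secParR^d\cdot(\qDelta/\secParR)^\beta$, a bound that decays in $i_y$ at exactly the rate your bucket-size bound $\size{\YB_i}\leq 2^{i+1}$ grows. Multiplying these within each bucket and summing over the $\secParR^d$ buckets gives $2\secParR^d(\qDelta/\secParR)^\beta$ directly, with no case split.

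This is essentially what the paper does, phrased through the weights $D_\secParR(y)$ rather than through counting: it writes $\Pr[\Ac_1\in\Dc^{-1}(1^\secParR,y)\mid G(y)_{1,\ldots,i_y}=0^{i_y}]=2^{i_y}\cdot\Pr[\Ac_1\in\Dc^{-1}(1^\secParR,y)]$ by Bayes (using that the success event implies the conditioning event), and then uses $D_\secParR(y)\cdot 2^{i_y}\geq\frac12$ for $y$ in bucket $i_y$ to pass from the $D_\secParR$-weighted sum (which is at most $1$ since $\sum_y D_\secParR(y)\leq 1$) to the unweighted sum of the probabilities $\Pr[\Ac_1\in\Dc^{-1}(1^\secParR,y)]$. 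Your observation $\size{\YB_i}\leq 2^{i+1}$ is exactly the counting shadow of the same fact, so you have all the ingredients; what is missing is combining the two geometric rates multiplicatively inside each bucket rather than splitting the sum into two additive regimes.
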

	
	\begin{claima}\label{claim:avgManyHardPreimages}
		For $i\in [\secParR^d]$ and $g\in \g_{\secParR^d}$, let $\tilde\cS(i,g)= \cS(i,g)$ in case this set in non-empty and $\tilde\cS(i,g) = \set{\perp}$ otherwise, then
		\begin{align*}
			\Pr_{\MyAtop{I\getsr [\secParR^d], G \getsr \g_{\secParR^d}}{X \la\tilde\cS(I,G) }}[X \in \cLv(I)]
			\geq \frac{\rDelta}{10\secParR^d}
		\end{align*}
	\end{claima}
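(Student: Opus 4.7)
The plan is to adapt the proof of \cref{claim:ManyHardSibling} (based on \cref{claim:almost-uniform}) from the UOWHF analysis of \cref{sec:OwfToIAERomp}. For each $y^*\in\YB$, associate the unique integer $i^*=i^*(y^*)$ satisfying $\Hall_{D_\secParR}(y^*)\in[i^*,i^*+1)$, equivalently $D_\secParR(y^*)\in(2^{-i^*-1},2^{-i^*}]$, so that $\size{\Dc^{-1}(1^\secParR,y^*)}\in(2^{\secParR^d-i^*-1},2^{\secParR^d-i^*}]$. Since $\YB\subseteq\Lang=\RelR_\Lang$, whenever $G(y^*)_{1\ldots i^*}=0^{i^*}$ all of $\Dc^{-1}(1^\secParR,y^*)$ lies in $\cS(i^*,G)$, so $\cS(i^*,G)$ is nonempty and coincides with $\tilde\cS(i^*,G)$, avoiding the $\perp$ case entirely.

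The main step is an analog of \cref{claim:almost-uniform}: for every $y^*\in\YB$,
\begin{align*}
\Pr_{G,\,X\la\tilde\cS(i^*,G)}\bigl[\Dc(1^\secParR,X)=y^*\bigr]\geq c\cdot 2^{-i^*(y^*)}
\end{align*}
for an absolute constant $c>0$. To establish this, condition on the event $G(y^*)_{1\ldots i^*}=0^{i^*}$, which holds with probability exactly $2^{-i^*}$. Pairwise independence of $\g_{\secParR^d}$ yields
\begin{align*}
\Exp\bigl[\,\size{\cS(i^*,G)\setminus\Dc^{-1}(1^\secParR,y^*)}\,\mid\,G(y^*)=0^{i^*}\,\bigr]\leq\sum_{y'\neq y^*}\size{\Dc^{-1}(1^\secParR,y')}\cdot 2^{-i^*}\leq 2^{\secParR^d-i^*},
\end{align*}
so by Markov, with constant probability over $G$ one has $\size{\cS(i^*,G)}=O(2^{\secParR^d-i^*})$. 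Combined with $\size{\Dc^{-1}(1^\secParR,y^*)}\geq 2^{\secParR^d-i^*-1}$, a uniform $X\in\cS(i^*,G)$ lands in $\Dc^{-1}(1^\secParR,y^*)$ with constant probability, yielding the desired bound.

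To finish, I use $\Pr[I=i^*(y^*)]=1/\secParR^d$ together with the fact that the events $\{\Dc(1^\secParR,X)=y^*\}$ for distinct $y^*$ are mutually exclusive, and sum the main bound over $y^*\in\YB$:
\begin{align*}
\Pr[X\in\cLv(I)]\geq\frac{c}{\secParR^d}\sum_{y^*\in\YB}2^{-i^*(y^*)}\geq\frac{c}{\secParR^d}\sum_{y^*\in\YB}D_\secParR(y^*)\geq\frac{c\cdot\rDelta}{\secParR^d},
\end{align*}
where the middle inequality uses $2^{-i^*(y^*)}\geq D_\secParR(y^*)$ by the definition of $i^*$, and the last uses \cref{eq:calYisLarge}. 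The main obstacle is sharpening the constant $c$ to match the stated $1/10$: a naive optimization of the Markov threshold in the main step gives only $c\approx 1/12$, so reaching $1/10$ requires either a slightly tighter concentration argument or exploiting the factor-of-two slack in $2^{-i^*}<2D_\secParR(y^*)$ during the summation step. Beyond this constant tightening, the argument is essentially a direct translation of the UOWHF proof of \cref{claim:ManyHardSibling} into the average-case setting.
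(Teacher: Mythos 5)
Your proposal is essentially the paper's own proof: the paper establishes the same pairwise-independence estimate per \emph{preimage} $x\in\cLv(i)$ (\cref{claim:two-universal-picking}, giving $\Pr[X=x]\geq 2^{-\secParR^d}/10$) and then sums over the more than $\rDelta\cdot 2^{\secParR^d}$ such preimages, whereas you aggregate per \emph{image} $y^*\in\YB$ --- the same conditioning on $G(y^*)_{1\ldots i^*}=0^{i^*}$, the same Markov bound on $\size{\cS(i^*,G)}$, and the same summation against the mass $\rDelta$ of $\YB$, just with different bookkeeping. The constant issue you flag is resolved exactly as you suggest: keeping $\size{\Dc^{-1}(1^\secParR,y^*)}=2^{\secParR^d}D_\secParR(y^*)$ exact in the numerator of the hitting probability (rather than rounding it down to $2^{\secParR^d-i^*-1}$ and only later converting $2^{-i^*}$ back to $D_\secParR(y^*)$) yields a per-image bound of $D_\secParR(y^*)/6$, hence $\rDelta/(6\secParR^d)\geq\rDelta/(10\secParR^d)$ overall.
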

	
	Before proving the above claims, we first use them to conclude the proof of the lemma.

	\cref{claim:avgHardPreimages} implies that for large enough $\beta$
	\begin{align}\label{eq:avgHardPreimages}
		\Pr_{\MyAtop{I\getsr [\secParR^d], G \getsr \g_{\secParR^d}}{\Random \getsr \zl}}[\Gamma\bigl((I,G),\Ac((I,G),1^\secParQ,\qDelta;\Random)\bigr) \in \left(\cL_\secParQ(I,G) \cup \set{\perp}\right)] &\geq 1 - 2\secParR^d \cdot \Bigl(\frac{\qDelta}{\secParR}\Bigr)^\beta\geq 1 - (\frac{\qDelta}{\secParQ})^c
	\end{align}
	yielding that \cref{eq:Avg:2} holds for $\set{\cL_\secParQ(i,g)}$ and $\rho$.

	Applying Markov's inequality on \cref{claim:avgManyHardPreimages}, yields that
	\begin{align}
		\Pr_{X \getsr \cS(i,g)}[\Dc(1^\secParR,X) \in \cLv(i)]
		\geq \frac{\rDelta}{20\secParR^d}
	\end{align}
	for at least $\frac{\rDelta}{20\secParR^d}$ fraction of the pairs $(i,g) \in [\secParR^d]\times \g_{\secParR^d}$. Where for any such pair it holds that
	\begin{align}
		\log(\size{\cL_\secParQ(i,g)})&\leq \log((1-\frac{\rDelta}{20\secParR^d})\cdot \size{\cS(i,g)})\\
		&\leq \log(\size{\cS(i,g)}) - \frac{\rDelta}{20\secParR^d}.\nonumber
	\end{align}
	It follows that for large enough $\beta$
	\begin{align*}
		\Hreal_{\RelQ,\cU,f}(\secParQ) - \Exp[\log(\size{\cL_\secParQ(I,G)})] &=\Exp_{(I,G) \getsr [\secParR^d]\times \g_{\secParR^d}}[\log(\size{\cS(I,G)})-\log(\size{\cL(\secParQ,I,G)})]\\
		&\geq \frac{\rDelta^2}{400\secParR^{2d}} = \frac{(\sqrt[\beta]{\qDelta}\cdot \secParQ^\beta)^2}{400\secParR^{2d}}\\
		& \geq \sqrt[c]{\qDelta}\cdot \secParQ^c
	\end{align*}
	Hence, \cref{eq:Avg:3} holds for $\set{\cL_\secParQ(i,g)}$ and $k$, and the proof of the lemma follows.
\end{proof}

\begin{proof}[Proof of \cref{claim:avgHardPreimages}]
	Compute
	\begin{align}
		\secParR^d \cdot \Bigl(\frac{\qDelta}{\secParR}\Bigr)^\beta &\geq \sum_{y \in \YB} D_\secParR(y)\cdot \secParR^d \cdot \Bigl(\frac{\qDelta}{\secParR}\Bigr)^\beta\\
		&\geq \sum_{i=1}^{\secParR^d}\sum_{y\in \Dc(1^\secParR,\cLv(i))} D_\secParR(y)\cdot \Pr_{G\getsr \g_{\secParR^d},\Random \getsr \zl}[\Ac_1((i,G),1^\secParQ,\qDelta;\Random) \in \Dc^{-1}(1^\secParR,y) \mid G(y)_{1 \ldots i} = 0^i],\nonumber
	\end{align}
	letting $\Dc(1^\secParR,\cLv(i)) \eqdef \set{\Dc(1^\secParR,x)\colon x\in \cLv(i))}$. In addition, for any $(i,\random)$ it holds that
	\begin{align*}
		\sum_{y\in \Dc(1^\secParR,\cLv(i))}& D_\secParR(y)\cdot \Pr_{G\getsr \g_{\secParR^d}}[\Ac_1((i,G),1^\secParQ,\qDelta;\random) \in \Dc^{-1}(1^\secParR,y) \mid G(y)_{1\ldots i} = 0^i]\\
		&=\sum_{y\in \Dc(1^\secParR,\cLv(i))} D_\secParR(y)\cdot 2^i\cdot \Pr_{G\getsr \g_{\secParR^d}}[\Ac_1((i,G),1^\secParQ,\qDelta;\random) \in \Dc^{-1}(1^\secParR,y) \land G(y)_{1\ldots i} = 0^i]\\
		&=\sum_{y\in \Dc(1^\secParR,\cLv(i))} D_\secParR(y) \cdot 2^i \cdot \Pr_{G\getsr \g_{\secParR^d}}[\Ac_1((i,G),1^\secParQ,\qDelta;\random)) \in \Dc^{-1}(1^\secParR,y)]\\
		&\geq\frac12\cdot \sum_{y\in \Dc(1^\secParR,\cLv(i))} \Pr_{G\getsr \g_{\secParR^d}}[\Ac_1((i,G),1^\secParQ,\qDelta;\random)) \in \Dc^{-1}(1^\secParR,y)]\\
		&=\frac12\cdot \Pr_{G\getsr \g_{\secParR^d}}[\Ac_1((i,G),1^\secParQ,\qDelta;\random)) \in \cLv(i)].
	\end{align*}
	Collecting the equations yields the claim.
\end{proof}
For the proof of \cref{claim:avgManyHardPreimages}, we need first a pairwise independence analogue of \cref{claim:almost-uniform}. The proof is exactly the same, except a bit simpler as we fix the output $w$ instead of fixing another preimage. We provide it for completeness.
\begin{claima}\label{claim:two-universal-picking}
	Let $i \in [{\secParR^d}]$, $w \in \zo^i$ and $x^\ast \in \zo^{\secParR^d}$ be such that
	$\HSh_{f(X)}(f(x^\ast)) \geq i$. Then,
	\begin{align*}
		\ppr{{g\getsr \g_{n^d}}\atop{x \getsr (g \circ f)^{-1}(w)}}{x = x^\ast} \geq \frac{2^{-{\secParR^d}}}{10},
	\end{align*}
	letting  $(g \circ f)^{-1}(w)$  equals the set
	$\set{x \colon g(f(x))_{1\ldots i} = w}$ in case this
	set is not empty, and $\set{\bot}$ otherwise.
\end{claima}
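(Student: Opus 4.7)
The plan is to mimic the structure of \cref{claim:almost-uniform}, but observe that the setting here is strictly simpler: we are only fixing a single target output $w$ (rather than demanding that two specific elements collide), so pairwise independence of $\g_{n^d}$ will suffice in place of three-wise independence, and the size-of-preimage argument only needs to account for one ``heavy'' contribution rather than two.

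First I would condition on the event $E$ that $g(f(x^\ast))_{1\ldots i} = w$. By pairwise independence (in fact just uniformity of $g(f(x^\ast))_{1\ldots i}$), $\Pr_g[E] = 2^{-i}$. Conditioned on $E$, the set $(g\circ f)^{-1}(w)$ is nonempty and contains $x^\ast$, so a uniform sample from it equals $x^\ast$ with probability $1/\size{(g\circ f)^{-1}(w)}$. Hence it suffices to show that, conditioned on $E$, with probability at least $1/2$ over $g$ we have $\size{(g\circ f)^{-1}(w)} \leq 4\cdot 2^{n^d - i}$; this would give overall probability at least $2^{-i}\cdot \tfrac{1}{2}\cdot (4\cdot 2^{n^d-i})^{-1} = 2^{-n^d}/8 \geq 2^{-n^d}/10$.

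To bound $\size{(g\circ f)^{-1}(w)}$, decompose it into the contribution of $x'$ with $f(x')=f(x^\ast)$, and the contribution of $x'$ with $f(x')\neq f(x^\ast)$. For the first, the hypothesis $\HSh_{f(X)}(f(x^\ast))\geq i$ rewrites as $\size{f^{-1}(f(x^\ast))}\leq 2^{n^d-i}$, and every such $x'$ automatically lies in $(g\circ f)^{-1}(w)$ under $E$. For the second, pairwise independence of $\g_{n^d}$ gives $\Pr_g[g(f(x'))_{1\ldots i} = w \mid E] = 2^{-i}$ for each $x'$ with $f(x')\neq f(x^\ast)$, so the expected number of such $x'$ landing in $(g\circ f)^{-1}(w)$ is at most $2^{n^d}\cdot 2^{-i} = 2^{n^d-i}$. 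Together, $\Exp[\size{(g\circ f)^{-1}(w)}\mid E]\leq 2\cdot 2^{n^d-i}$, and Markov's inequality yields the desired tail bound with probability $\geq 1/2$.

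There is no real obstacle here — the argument is a routine second-moment/Markov calculation, and the only place where care is needed is to remember that the ``trivial'' contribution from $f^{-1}(f(x^\ast))$ must be counted separately from the ``random'' contribution coming from other fibers, and that the hypothesis on the sample-entropy of $f(x^\ast)$ is precisely what controls the trivial part. Once both parts are bounded by $2^{n^d - i}$, the constants work out exactly as in \cref{claim:almost-uniform} and the stated bound $2^{-n^d}/10$ follows.
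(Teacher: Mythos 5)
Your proposal is correct and follows essentially the same route as the paper's proof: condition on $E = \{g(f(x^\ast))_{1\ldots i} = w\}$, use pairwise independence to bound the expected number of other elements landing in $(g\circ f)^{-1}(w)$, apply Markov, and use the sample-entropy hypothesis to control the fiber $f^{-1}(f(x^\ast))$. The only difference is bookkeeping: the paper applies Markov to the non-fiber part alone and then adds the fiber separately (arriving at $5\cdot 2^{\secParR^d-i}$ and the constant $1/10$), while you fold both into one expectation and get $4\cdot 2^{\secParR^d-i}$ and $1/8\geq 1/10$ — both are fine.
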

\begin{proof}
	Let $G$ be uniformly distributed over $\g_{n^d}$, and let $E$ be the event that $G(f(x^\ast)) = w$. Note that $\Pr[E] = 2^{-i}$ and that
	\begin{align}
		\ppr{x \getsr (G \circ f)^{-1}(w)}
		{x = x^\ast \mid E} = \frac{1}{\size{(G \circ f)^{-1}(w)}}
	\end{align}
	The pairwise independence of $\g_{n^d}$ yields that $\Pr[G(f(x)) = w \mid E]= 2^{-i}$  for any $x \in \zo^{\leq\secParR^d} \setminus  f^{-1}(f(x^\ast))$. Hence, $\ex{\size{(G \circ f)^{-1}(w) \setminus f^{-1}(f(x^\ast))} \mid E} \leq 2\cdot 2^{-i+{\secParR^d}}$
	and by Markov's inequality
	\begin{align}
		\pr{\size{(G \circ f)^{-1}(w) \setminus f^{-1}(f(x^\ast))} \leq 4\cdot 2^{-i+{\secParR^d}}  \mid E}\geq \frac12
	\end{align}
	Combining the above inequality and the assumption  $\HSh_{f(X)}(f(x^\ast)) \geq i$, we get
	\begin{align}
		\pr{\size{(G \circ f)^{-1}(w)} + |f^{-1}(f(x^\ast)| \leq 5\cdot 2^{-i+{\secParR^d}}  \mid E}\geq \frac12,
	\end{align}
	and  conclude that
	\begin{align*}
		\ppr{x \getsr (G \circ f)^{-1}(w)}{x = x^\ast} &= \Pr[E] \cdot \ppr{x \getsr (G \circ f)^{-1}(w)}{x = x^\ast \mid E}\\
		& \geq 2^{-i} \cdot \frac12 \cdot \frac{1}{5\cdot 2^{-i+{\secParR^d}}}\\
		& = \frac{2^{-{\secParR^d}}}{10}.
	\end{align*}
\end{proof}

\begin{proof}[Proof of \cref{claim:avgManyHardPreimages}]
	For $i\in \set{0\dots,\secParR^d}$ and $x \in \cLv(i)$, \cref{claim:two-universal-picking} yields that
	\begin{align}
		\Pr_{G\getsr \g_{\secParR^d}, X\getsr \tilde\cS(i,G)}[X = x]\geq \frac{2^{-\secParR^{d}}}{10}
	\end{align}
	By \cref{eq:calYisLarge} it holds that $\Pr_{Y \getsr D_\secParR}[Y \in \YB] > \rDelta$, and therefore $\Pr_{X\getsr \zo^{\secParR^d}}[X \in \bigcup_{i=1}^{\secParR^d} \cLv(i)] = \Pr_{Y \getsr D_\secParR}[Y \in \YB] > \rDelta$. We conclude that
	\begin{align*}
		\Pr_{\MyAtop{I\getsr [\secParR^d], G \getsr \g_{\secParR^d}}{X \la\tilde\cS(I,G) }}[X \in \cLv(I)] &\geq \Ex\left[\size{ \cLv(I)}\right] \cdot \frac{2^{-\secParR^{d}}}{10}\\
		&\geq \frac{\rDelta \cdot 2^{\secParR^d}}{\secParR^d} \cdot \frac{2^{-\secParR^{d}}}{10} = \frac{\rDelta}{10\secParR^d}.
	\end{align*}
\end{proof}

\subsubsection{A difficult problem for the uniform distribution}\label{sec:avgPartII}
In this section we show how to transform a uniform search problem with a gap between its real and accessible entropy, into a uniform search problem for which no heuristic search algorithm exists (\ie the problem is not in $\SearchHeurBPP$). Combining it with \cref{lem:impagliazzolevin-part1N} concludes the proof of \cref{lem:impagliazzo-levin-main}.

The transformation is achieved by adding additional restriction on the witness of the given search problem. Specifically, requiring its ``hash value'' \wrt a randomly chosen pairwise independent hash function to be the all zero string.

We use the following construction:
\begin{constructiona}\label{def:rPrime}
	Let $\RelQ$ be an $\NP$-relation, let $f\colon \zs \mapsto \zs$ be a function, let $\g = \set{\g_m}$ be family of pairwise independent hash functions, where the functions of $\g_k$ map strings of length at most $k$ to string of length $m$ (as in \cref{def:npRelationQ}), and let $d\in \N$ be such that $(y,w) \in \RelQ \implies
	\size{f(w)} \leq \size{y}^d$. For $n\in \N$ let
	\begin{align*}
		\RelV^{(n)} \eqdef \set{
			\bigl((y,j,g),w)\colon y \in \zn, j \in [n^d+2], g \in \g_{n^d+2}, (y,w) \in \RelQ, g(f(w))_{1\ldots j} = 0^j}
	\end{align*}
	and let $\RelV \eqdef \bigcup_{n \in \N} \RelV^{(n)}$.
\end{constructiona}

As in \cref{def:rPrime}, we assume that the tuples $(y,j,g)$'s above can be encoded such that a uniformly random string, of the right length, decodes to a uniformly random tuple in $\zn \times [n^d+2] \times \g_{n^d+2}$.

\begin{lemma}\label{lem:impagliazzolevin-part2}
	Let $\RelQ$, $f$, $d$ and $\RelV$ be as in \cref{def:rPrime}. Suppose that $(\RelQ,\cU)$ has \ioS $\frac{\qDelta^2}{50\secParQ^d}$-accessible average max-entropy at most $\Hreal_{\RelQ,\cU,f}(\secParQ) - 5\qDelta \secParQ^d$ \wrt $f$, then $(\RelV,\cU) \notin \SearchHeurBPP$.
\end{lemma}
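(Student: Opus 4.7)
I will prove the contrapositive: assume $(\RelV,\cU) \in \SearchHeurBPP$ via a heuristic search algorithm $\Bc$, and exhibit an algorithm $\Ac$ for $\RelQ$ such that, for every function $\qDelta(\secParQ) \in (0,1]$ and all but finitely many $\secParQ$, no set family simultaneously covers $\Ac$'s output with probability at least $1-\rho$ and has $\Exp\log$-size at most $\Hreal_{\RelQ,\cU,f}(\secParQ) - 5\qDelta\secParQ^d$. On input $(y, 1^{\secParQ}, \qDelta; r)$, algorithm $\Ac$ samples $j \getsr [\secParQ^d+2]$ and $g \getsr \g_{\secParQ^d+2}$, sets $\rDelta$ to an inverse polynomial in $\secParQ/\qDelta$, and returns (the witness component of) the output of a standard witness-validity amplification of $\Bc$ on $((y,j,g), 1^{\secParR}, \rDelta)$, where $\secParR = |(y,j,g)| = \poly(\secParQ)$. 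By construction $\Ac$ runs in time $\poly(\secParQ, 1/\qDelta)$, and after amplification $p(y) \eqdef \Pr_{j,g,r'}[\Bc \text{ fails on } (y,j,g)]$ satisfies $\Exp[p(Y)] \leq O(\rDelta)$.

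Fix $\secParQ$, set $\rho = \qDelta^2/(50\secParQ^d)$, and fix any family $\set{\cL_\secParQ(y) \subseteq S(y)}_y$, where $S(y) \eqdef \cS_{\RelQ,f}(y)$, that satisfies $\Pr[f(\Ac(Y)) \in \cL_\secParQ(Y) \cup \set{\bot}] \geq 1-\rho$. My aim is to show $\Exp[\log|S(Y)| - \log|\cL_\secParQ(Y)|] < 5\qDelta \secParQ^d$, contradicting the hypothesized bound. The main tool is the Valiant-Vazirani Lemma~\ref{lem:valiantvazirani}: setting $j^*(y) = \lfloor \log|S(y)| \rfloor + 2$, for any $x \in S(y)$ the event $E_x(g)$ that ``$x$ is the unique element of $S(y)$ with $g(x)_{1\ldots j^*(y)} = 0^{j^*(y)}$'' occurs with probability at least $1/(8|S(y)|)$ over $g \getsr \g_{\secParQ^d+2}$. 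Whenever $E_x(g)$ holds and $\Bc$ succeeds on the corresponding input $((y, j^*(y), g), \cdots)$, it must output a witness $w$ with $f(w) = x$, because the $\RelV$-constraints $(y,w) \in \RelQ$ and $g(f(w))_{1\ldots j^*(y)} = 0^{j^*(y)}$ force $f(w) \in S(y) \cap g^{-1}(0^{j^*(y)}) = \set{x}$.

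Since $\set{E_x}_{x \in S(y)}$ are disjoint, summing over $x \in S(y) \setminus \cL_\secParQ(y)$ gives, for every $y$ with $S(y)\neq\emptyset$,
\begin{align*}
\frac{|S(y) \setminus \cL_\secParQ(y)|}{|S(y)|} \leq 8(\secParQ^d+2)\bigl(\Pr[f(\Ac(y)) \notin \cL_\secParQ(y) \cup \set{\bot}] + p(y)\bigr).
\end{align*}
I split the outer expectation into the gentle regime $\alpha(y) \eqdef |S(y) \setminus \cL_\secParQ(y)|/|S(y)| \leq 1/2$, where $\log|S(y)| - \log|\cL_\secParQ(y)| \leq 3\alpha(y)$ by the Taylor bound on $-\log(1-\alpha)$, and the tail regime $\alpha(y) > 1/2$, where the gap is at most $\log|S(y)|+1 = O(\secParQ)$ using the crude bound $|S(y)| \leq 2^{O(\secParQ)}$, but this regime has probability $O(\secParQ^d)(\rho+\rDelta)$ by Markov applied to both the miss and failure probabilities. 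Combining yields $\Exp[\log|S(Y)| - \log|\cL_\secParQ(Y)|] \leq O(\secParQ^{d+1})(\rho + \rDelta)$, which is strictly smaller than $5\qDelta\secParQ^d$ when $\rDelta = \Theta(\qDelta/\secParQ^{d+1})$ and $\qDelta \leq 1$, yielding the contradiction. The hard part will be the bookkeeping: verifying that the parameter dependencies $\secParR = \poly(\secParQ)$, $\rDelta = \poly(1/\secParQ, \qDelta)$, and $\rho = \qDelta^2/\poly(\secParQ)$ simultaneously line up so that $\Ac$ is polynomial-time while the final entropy-gap bound sits below $5\qDelta\secParQ^d$ for all sufficiently large $\secParQ$.
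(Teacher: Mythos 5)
Your proposal is correct and uses the same reduction as the paper --- the algorithm for $(\RelQ,\cU)$ that samples $(j,g)$, feeds $(y,j,g)$ to the assumed heuristic for $(\RelV,\cU)$, and returns its witness --- and the same key tool, the Valiant--Vazirani lemma. Where you genuinely diverge is in how the entropy-gap contradiction is extracted. The paper trims the Valiant--Vazirani events $\cE(y,x)$ down to equal-probability subsets $\cE'(y,x)$ so that, conditioned on $(J,G)\in\cE'(y)$, the forced output is \emph{uniform} on $\cS(y)$; it then isolates a set of $y$'s of density $\qDelta$ on which $\size{\cL(y)} < (1-1.5\qDelta)\size{\cS(y)}$ and the heuristic is reliable, and concludes that the miss probability exceeds $\qDelta^2/(50\secParQ^d)$. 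You instead sum the disjoint events $E_x$ directly over $x\in S(y)\setminus\cL_\secParQ(y)$ to get a per-$y$ bound on the \emph{fraction} $\alpha(y)$ of $S(y)$ missed by $\cL_\secParQ(y)$, and then bound $\Exp[\log\size{S(Y)}-\log\size{\cL_\secParQ(Y)}]$ by a gentle/tail split. This avoids the paper's uniformization trick and its somewhat awkward handling of the $2/3$ success probability (the sets $\cR(j,g)$ of random strings padded to measure exactly $\tfrac23$), which you replace by standard witness-validity amplification; the price is the extra Markov/tail bookkeeping that you yourself flag. Two small corrections to that bookkeeping: since $(y,w)\in\RelQ$ only guarantees $\size{f(w)}\le\size{y}^d$, the tail-regime gap is $\log\size{S(y)}+1=O(\secParQ^d)$, not $O(\secParQ)$, so the combined bound is $O(\secParQ^{2d})(\rho+\rDelta)$ rather than $O(\secParQ^{d+1})(\rho+\rDelta)$; this still sits below $5\qDelta\secParQ^d$ because $\rho=\qDelta^2/(50\secParQ^d)$ contributes only about $\tfrac{16}{50}\qDelta^2\secParQ^d$ to the tail term, but you should take $\rDelta=\Theta(\qDelta/\secParQ^{2d})$ rather than $\Theta(\qDelta/\secParQ^{d+1})$ to keep the $\rDelta$ contribution comfortably dominated. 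Either choice keeps your algorithm polynomial in $(\secParQ,1/\qDelta)$, so the argument goes through.
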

\begin{proof}
	We assume towards a contradiction that $(\RelV,\cU) \in \SearchHeurBPP$, and show that $(\RelQ,\cU)$ has too high accessible average max-entropy.
	
	Let $\Ac$ be a randomized search heuristics for $(\RelV,\cU)$. The following algorithm $\Bc$ contradicts the assumption that $(\RelQ,\cU)$ has \ioS $\frac{\qDelta^2}{50 \secParQ^d}$-accessible average max-entropy at most $\Hreal_{\RelQ,\cU,f}(\secParQ) - 5 \qDelta \secParQ^d$ \wrt $f$.

	Let $\ell = \ell(\secParV,\vDelta)$ be an upper bound on the running time of $\Ac$ on parameters $\secParV$ and $\vDelta$. Let $\secParV (\secParQ)$ be the description length of a triplet in $\zo^\secParQ \times [\secParQ^d+2] \times \g_{\secParQ^d+2}$ and let $\vDelta(\secParQ,\qDelta) = \frac{\qDelta^2}{100\secParQ^d}$.

	\bigskip
	\noindent\framebox{
		\begin{minipage}{16cm}
			\noindent \textbf{Entropy generator $\Bc^{\Ac}$ for $(\RelQ,\cU)$}
			\medskip
			
			\hrule
			
			\medskip
			
			\textbf{Oracle:} $\Ac$\qquad\qquad /\!\!/
			Search heuristics for $(\RelV,\cU)$.
			
			\textbf{Input:} $y\in \zo^\secParQ$, $1^\secParQ$ and $\qDelta\in (0,1]$
			
			\medskip\hrule\medskip
			
			$\vDelta = \vDelta (\secParQ,\qDelta)$; $\secParV = \secParV(\secParQ)$; $\ell = \ell(\secParV,\vDelta)$\\
			$j \getsr \set{2,\ldots,\secParQ^d+2}$\\
			$g \getsr \g_{\secParQ^d+2}$\\
			$r \getsr \zl$\\
			$(b,w) = \Ac(y,j,g,1^\secParV,\vDelta;\random)$\\
			\textbf{if} $b = 1$ \textbf{return} $w$, \textbf{else} \textbf{return} $\bot$
		\end{minipage}
	}
	\bigskip
	
	It is clear that the running time of $\Bc$ is $\poly(m,1/\qDelta)$. We show that $\Bc$ achieves high max-entropy for all (except maybe finitely many) values of $\secParQ$ and any $\qDelta$. Specifically, that for all (except maybe finitely many) $(\secParQ,\qDelta)$, there exists no family $\set{\cL_\secParQ(y)}_{y\in \zo^\secParQ}$ as described in \cref{def:avgCaseAccessibleEntropy}.
	
	Fix $\secParQ$ and $\qDelta$, and let the random variables $Y$, $J$, $G$ and $\Random$ be uniformly chosen from $\zo^\secParQ \times [\secParQ^d+2]\times \g_{\secParQ^{d}+2} \times \zl$. For $y\in \zo^\secParQ$, let $\eta(y)$ be the probability that $\Ac(y,J,G,1^\secParV,\vDelta;\cdot)$ is not $\frac23$-correct. Since $\Exp[\eta(Y)] \leq \vDelta = \frac{\qDelta^2}{100\secParQ^d}$, it holds that
	\begin{align}\label{eq:etaY}
		\Pr\Bigl[\eta(Y) \leq \frac{\qDelta}{8\secParQ^d}\Bigr] \geq 1-\qDelta
	\end{align}
	Fix $y\in \zo^\secParQ$ and let $\cS(y)= \cS_{\RelQ,f}(y) = \set{ f(w) \colon (y,w) \in \RelQ}$. Intuitively, since $f(w)$ is the ``interesting part'' of a $\RelQ$-witness $w$ for $y$, $\cS(y)$ is the set of interesting parts of witnesses of $y$. 
	
	For $x \in \cS(y)$ let $\cE(y,x) \subseteq [\secParQ^d+2]\times \g_{\secParQ^{d}+2}$ be the set of pairs $(j,g)$ for which $x$ is the only
	element in $\cS(y)$ with $g(x)_{1\ldots j} = 0^j$. Going back to our intuition again, if $x$ is the ``interesting part'' of a $\RelQ$-witness for $y$, then $\cE(y,x)$ is the set of pairs $(j,g)$, such that on input $(y,j,g)$, algorithm $A$ \emph{must} output $w$ with $f(w)=x$ to be successful.
	
	By \cref{lem:valiantvazirani}, for every fixed pair $(y,x)$ with $x \in \cS(y)$:
	\begin{align}\label{eq:Eyvv}
		\Pr[(J,G)\in \cE(y,x)] \geq \frac{1}{8|\cS|\secParQ^d}
	\end{align}
	The following argument would become simpler if \cref{eq:Eyvv} would hold with equality, or if at least $\Pr[(J,G) \in \cE(y,x)]$ was the same for every $x$ with $x \in \cS(y)$. This may of course not be the case, but since all pairs $(j,g)$ have the same probability we can simply discard elements from the sets $\cE(y,x)$ until they all have the same size. We call the resulting sets $\cE'(y,x)$. Hence we can assume that $\cE'(y,x) \subseteq \cE(y,x)$ and
	$\Pr[(J,G)\in \cE'(y,x_1)] = \Pr[(J,G)\in \cE'(y,x_2)]
	\geq \frac{1}{8|\cS|\secParQ^d}$ for all $x_1,x_2 \in \cS(y)$.
	
	We next let $\cE'(y) = \bigcup_{x\in \cS} \cE'(y,x)$. Intuitively, $\cE'(y)$ are those pairs $(j,g)$ such that $A$ on input $(y,j,g)$ is forced to answer with some unique $x$. 
	Let $x_y(g,j)$ to be the unique element of $\cS(y)$ with $g(x)_{1\ldots j} = 0^j$ in case it exists, and $\bot$ otherwise, and let $X_y= x_y(G,J)$. Because of the above trickery with $\cE'$, conditioned on $(J,G) \in \cE'(y)$, the random variable $X_y$ is uniformly distributed over $\cS(y)$. Furthermore, since $\Pr[(J,G) \in \cE'(y)] \geq \frac{1}{8\secParQ^d}$, it holds that
	\begin{align}\label{eq:Ey2}
		\lefteqn{\Pr[\text{$\Ac(y,J,G,1^\secParV,\vDelta;\cdot)$ is \emph{not} $\tfrac23$-correct} \mid (J,G) \in \cE'(y)]} \\
		&\leq 8\secParQ^d\cdot \Pr[\text{$\Ac(y,J,G,1^\secParV,\vDelta;\cdot)$ is \emph{not} $\tfrac23$-correct $ \land \ (J,G) \in \cE'(y)$}]\nonumber\\
		&\leq 8\secParQ^d \cdot \eta(y). \nonumber
	\end{align}
	
	By definition, 
	\begin{align}
		\Pr[f(\Ac_2(y,j,g,1^\secParV,\vDelta;R)) =x_y(j,g)] \geq \frac23
	\end{align}
	for every $(j,g) \in \cE'(y)$ such that $\Ac(y,j,g,1^\secParV,\vDelta;\cdot)$ is $\frac23$-correct.
	
	For each $(j,g)\in \cE'(y)$, we now construct a set $\cR(j,g)$ of random
	strings as follows: first pick all random strings $r$
	for which $f(\Ac_2(y,j,g,1^\secParV,\vDelta;r)) =x_y(j,g)$ is satisfied. Afterwards, if necessary, add other random strings or discard some of the picked random strings such that $|\cR(j,g)| = \frac23 2^\ell$.
	
	Note that
	\begin{align}\label{eq:Av:3.5}
		\lefteqn{\Pr[f(\Ac_2(y,J,G,1^\secParV,\vDelta;R)) = X_y \mid (J,G) \in \cE'(y) \land R \in \cR'(J,G)]}\\
		&\geq \Pr[\text{$\Ac(y,J,G,1^\secParV,\vDelta;\cdot)$ is $\tfrac23$-correct.} \mid (J,G) \in \cE'(y)]\nonumber\\
		&\geq 1-8\secParQ^d \eta(y)\nonumber.
	\end{align}
	Hence, \cref{eq:etaY} yields that
	\begin{align}\label{eq:Av:4}
		\Pr[f(\Ac_2(y,J,G,1^n,\vDelta;R)) = X_y \mid (J,G) \in \cE'(y) \land R \in \cR'(J,G)]
		\geq 1- \qDelta
	\end{align}
	for a $(1-\qDelta)$ fraction of the $y$'s in $\zo^\secParQ$.

	It remains to show that no family of sets $\set{\cL(y)}_{y\in \zo^\secParQ}$ can be used to show that $\Ac$ has
	$\frac{\qDelta^2}{50 \secParQ^d}$-accessible max-entropy at most $\Exp[\log(\size{\cS(Y)})] - 5 \qDelta \secParQ^d$. Fix a family $\set{\cL(y)}$ with $\Exp[\log(\size{\cL(Y)})] \leq \Exp[\log(\size{\cS(Y)})] - 5 \qDelta \secParQ^d$. The following claim concludes the proof of the lemma.
	\begin{claima}\label{claim:Avg:2Ins}
		It holds that $\pr{\Gamma(Y, \Ac(Y,1^n,\qDelta;\Random)) \notin \cL(Y) \cup \set{\perp}}\geq \frac{\qDelta^2}{50 \secParQ^d}$.
	\end{claima}
	\begin{proof}
		Note that $\log(\size{\cS(y)}) - \log(\size{\cL(y)}) > 2\qDelta \secParQ^d$ holds for at least $2\qDelta$ fraction of the
		$y$'s in $\zo^\secParQ$ (otherwise, $\Exp\bigl[\log(\size{\cS(Y)}) - \log(\size{\cL(Y)})\bigr]
		\leq (1-2\qDelta) 2\qDelta \secParQ^d + 2\qDelta \cdot (\secParQ^d+1) < 5\qDelta n^d$).
		Since \cref{eq:Av:4} holds for a $(1-\qDelta)$ fraction of the $y$'s in $\zo^{\secParQ}$, there exists a set $\Good\subseteq \zo^\secParQ$ of density $\qDelta$ such that for every  $y \in \Good$, both
		\begin{align}
			\label{eq:setMy}
			\log(\size{\cL(y)}) < \log(\size{\cS(y)}) - 2 \qDelta \secParQ^d, \text{ and}\\
			\Pr[f(\Ac_2(y,J,G,1^n,\qDelta;R)) = X_y \mid (J,G) \in \cE'(y) \land R \in \cR'(J,G)]
			\geq 1-\qDelta\label{eq:fEqualsX}
		\end{align}
		hold.
		It follows that for every $y \in \Good$
		\begin{align}\label{eq:Av:5}
			\size{\cL(y)} < \size{\cS(y)} (1-1.5\qDelta)
		\end{align}
		\cref{eq:Av:5} trivially holds in case $\cL(y) = 0$, where in case $\size{\cL(y)} > 0$, it holds that
		\begin{enumerate}
			\item $\size{\cL(y)} < \size{\cS(y)} \cdot \eee^{-2\qDelta \cdot \log \size{\cS(y)}}$ (since $\size{\cS(y)} > \size{\cL(y)} \geq 1$) and
			
			\item $\eee^{-2\qDelta \cdot \log \size{\cS(y)}} \leq \size{\cS(y)} \cdot (1-1.5\qDelta)$ (since $\eee^{-\epsilon \kappa}\leq \eee^{-\epsilon} < 1-0.75\epsilon$ for $\kappa \geq 1$ and $\epsilon<\frac12$).
		\end{enumerate}
		\cref{eq:Av:5} yields that
		\begin{align}
			\pr{X _y\in \cS(y) \setminus \cL(y) | (J,G) \in \cE'(y) \land R \in \cR'(J,G)}& = \pr{X_y \in \cS(y) \setminus \cL(y) | (J,G) \in \cE'(y)}\\
			&\geq 1.5\qDelta\nonumber
		\end{align}
		for every $y \in \Good$, and therefore \cref{eq:fEqualsX} yields that
		\begin{align}
			\Pr\Bigl[f(\Ac_2(y,J,G,1^\secParV,\vDelta;\Random)) \in \cS(y) \setminus \cL(y)
			\Bigm| (J,G) \in \cE'(y) \land R \in \cR'(J,G)
			\Bigr] \geq \frac{\qDelta}{2}
		\end{align}
		for every $y \in \Good$.
		
		Since $\Pr[Y \in \Good] \geq \qDelta$ and since $\pr{(J,G) \in \cE'(y) \land R \in \cR'(J,G)} \geq \frac{2}{3}(1-\qDelta)\frac{1}{8\secParQ^d}$ for every $y \in \Good$, it follows that
		\begin{align*}
			\Pr[f(\Ac_2(Y,J,G,1^\secParV,\vDelta;\Random)) \notin \cL(y) \cup \set{\perp}]
			&\geq \qDelta\cdot \tfrac{2}{3}\cdot (1-\qDelta)\cdot \frac{1}{8\secParQ^d} \cdot \frac{\qDelta}{2} \geq \frac{\qDelta^2}{50 \secParQ^d},
		\end{align*}
		proving the claim and thus the lemma.
	\end{proof}
\end{proof}

\newcommand{\ent}{v}
\subsubsection{Putting it together}\label{sec:avgPutTOgether}

We now use \cref{lem:impagliazzolevin-part1N,lem:impagliazzolevin-part2} to prove \cref{lem:impagliazzo-levin-main} (and as we
have seen, this implies \cref{thm:impagliazzolevin}).
\begin{proof}[Proof of \cref{lem:impagliazzo-levin-main}]
	\cref{lem:impagliazzolevin-part1N} yields that $(\RelQ,\cU)$ has
	\ioS $(\qDelta/\secParQ)^{2d}$-accessible average max-entropy at most
	$\ent - \sqrt[2d]{\qDelta}\cdot\secParQ^{2d}$ \wrt $f$, for $\ent = \Exp_{I,G}[\log(\size{\cS_{\RelQ,f}(I,G)})]$. It follows that $(\RelQ,\cU)$ has \ioS $\frac{\qDelta}{50 \secParQ^d}$-accessible average max-entropy at most $\ent - 5 \qDelta \secParQ^d$ \wrt $f$, and the proof follows by \cref{lem:impagliazzolevin-part2}.
\end{proof}

\remove{
	\subsubsection{Comparison with the original construction of Impagliazzo and Levin}\label{sec:ComparisionToIL}
	
	To compare our construction more closely with the orignal one given in
	\cite{ImpLev90}, we first try to take a more global view of our proof
	of \cref{lem:impagliazzo-levin-main}. 
	
	We assume that there exists an $\NP$-search problem $(\RelR,\cD)$ with
	$\cD \in \polySamp$ such that $(\RelR,\cD) \notin \SearchHeurBPP$.
	Assuming for simplicity that we can choose $d = 1$ everywhere and
	ignoring dependencies on $n$, combining \cref{def:npRelationQ,def:rPrime}, we have
	set
	\begin{align}\label{eq:eqilfinal}
		\RelV \eqdef \Bigl\{
		\bigl((g_1, g_2, i, j),(x,w)\bigr)
		& \colon g_1,g_2 \in \g,\, i, j \in [n], y := \Dc(x), \\
		&\phantom{\colon}
		g_1(y)_{1\ldots i} = 0^i,g_2(x)_{1\ldots j} = 0^j, \nonumber \\
		&\phantom{\colon}
		(y,w)\in \RelR
		\Bigr\} \nonumber
	\end{align}
	In other words, given two hash functions $g_1$ and $g_2$, and
	integers $i$ and $j$, we are asked to find a preimage $x$ such that $g_1(y= \Dc(x))_{1\ldots
		i} = 0^i$ and $g_2(x)_{1\ldots j} = 0^j$, and  a witness $w$ for $y$ under $\RelR$.
	
	We now explain an improvement of the efficiency of the construction.
	Intuitively, when we use \cref{lem:valiantvazirani} in the proof
	of~\cref{lem:impagliazzolevin-part2} to get \cref{eq:Eyvv}, we can see
	that we are hoping for $j \approx \log_2(|\cS(y)|)$, where $\cS(y)$ is
	the set of elements $x$ in which the sampler produces $y$. Conversely,
	in the proof of \cref{lem:impagliazzolevin-part1N} we essentially hope
	for $H_{f(X)}(f(x)) \approx i$
	(cf. \cref{claim:two-universal-picking}), which means that we hope
	that $y$ has probability roughly $2^{-i}$. Since of course the
	probabililty of $y$ equals $2^n / |\cS(y)|$, it is intuitive that one
	might require $i+j = n$, instead of picking them independently and
	hoping for both events to occur at random (which we do). One would do
	this by not even encoding $j$ into the first part of $\RelV$, and
	instead only encoding $g_1$, $g_2$ and $i$, and replacing $j$ on the
	right hand side of \cref{eq:eqilfinal} with $n-i$.
	
	This indeed works, and the resulting construction is the same as the
	original one given in \cite{ImpLev90}, with the exception of some
	irrelevant choices to some encodings. We chose to split the
	construction into two parts to highlight the connection with
	computational entropy.
}

\section*{Acknowledgments}
We are thankful to Ran Raz and Chiu-Yuen Koo for useful conversations.

\bibliographystyle{abbrvnat}
\bibliography{crypto}

\begin{thebibliography}{31}
\providecommand{\natexlab}[1]{#1}
\providecommand{\url}[1]{\texttt{#1}}
\expandafter\ifx\csname urlstyle\endcsname\relax
  \providecommand{\doi}[1]{doi: #1}\else
  \providecommand{\doi}{doi: \begingroup \urlstyle{rm}\Url}\fi

\bibitem[Ames et~al.(2012)Ames, Gennaro, and Venkitasubramaniam]{AmesRV2012}
S.~Ames, R.~Gennaro, and M.~Venkitasubramaniam.
\newblock The generalized randomized iterate and its application to new
  efficient constructions of uowhfs from regular one-way functions.
\newblock In \emph{Annual International Conference on the Theory and
  Application of Cryptology and Information Security (ASIACRYPT)}, pages
  154--171, 2012.

\bibitem[Barhum and Maurer(2012)]{BarhumM2012}
K.~Barhum and U.~Maurer.
\newblock Uowhfs from owfs: trading regularity for efficiency.
\newblock \emph{Progress in Cryptology--LATINCRYPT 2012}, pages 234--253, 2012.

\bibitem[Bellare and Rogaway(1997)]{BeRo97}
M.~Bellare and P.~Rogaway.
\newblock Collision-resistant hashing: Towards making {UOWHF}s practical.
\newblock In \emph{Annual International Cryptology Conference (CRYPTO)}, pages
  470--484, 1997.

\bibitem[Ben-David et~al.(1992)Ben-David, Chor, Goldreich, and Luby]{BCGL92}
S.~Ben-David, B.~Chor, O.~Goldreich, and M.~Luby.
\newblock On the theory of average case complexity.
\newblock \emph{Journal of Computer and System Sciences}, 44\penalty0
  (2):\penalty0 193--219, 1992.

\bibitem[Bogdanov and Trevisan(2006)]{BogTre08}
A.~Bogdanov and L.~Trevisan.
\newblock Average-case complexity.
\newblock \emph{Foundations and Trends in Theoretical Computer Science}, 2,
  2006.

\bibitem[Canetti et~al.(2007)Canetti, Rivest, Sudan, Trevisan, Vadhan, and
  Wee]{CRSTVW07}
R.~Canetti, R.~L. Rivest, M.~Sudan, L.~Trevisan, S.~P. Vadhan, and H.~Wee.
\newblock Amplifying collision resistance: A complexity-theoretic treatment.
\newblock In \emph{Annual International Cryptology Conference (CRYPTO)}, pages
  264--283, 2007.

\bibitem[Cover and Thomas(2006)]{CoverTh:InfoTheory}
T.~M. Cover and J.~A. Thomas.
\newblock \emph{Elements of information theory}.
\newblock Wiley-Interscience, New York, NY, USA, second edition, 2006.

\bibitem[Cramer and Shoup(2003)]{CramerSh03}
R.~Cramer and V.~Shoup.
\newblock Design and analysis of practical public-key encryption schemes secure
  against adaptive chosen ciphertext attack.
\newblock \emph{SIAM Journal on Computing}, 33\penalty0 (1):\penalty0 167--226
  (electronic), 2003.

\bibitem[Haitner et~al.(2009{\natexlab{a}})Haitner, Nguyen, Ong, Reingold, and
  Vadhan]{HaitnerNgOnReVa09}
I.~Haitner, M.~Nguyen, S.~J. Ong, O.~Reingold, and S.~Vadhan.
\newblock Statistically hiding commitments and statistical zero-knowledge
  arguments from any one-way function.
\newblock \emph{SIAM Journal on Computing}, 39\penalty0 (3):\penalty0
  1153--1218, 2009{\natexlab{a}}.

\bibitem[Haitner et~al.(2009{\natexlab{b}})Haitner, Reingold, Vadhan, and
  Wee]{HaitnerReVaWe09}
I.~Haitner, O.~Reingold, S.~Vadhan, and H.~Wee.
\newblock Inaccessible entropy.
\newblock In \emph{Annual ACM Symposium on Theory of Computing (STOC)}, pages
  611--620, 2009{\natexlab{b}}.

\bibitem[Haitner et~al.(2010{\natexlab{a}})Haitner, Holenstein, Reingold,
  Vadhan, and Wee]{HaitnerHolReVaWe10}
I.~Haitner, T.~Holenstein, O.~Reingold, S.~Vadhan, and H.~Wee.
\newblock Universal one-way hash functions via inaccessible entropy.
\newblock In \emph{Annual International Conference on the Theory and
  Applications of Cryptographic Techniques (EUROCRYPT)}, pages 616--637,
  2010{\natexlab{a}}.

\bibitem[Haitner et~al.(2010{\natexlab{b}})Haitner, Reingold, and
  Vadhan]{HaitnerReVa09}
I.~Haitner, O.~Reingold, and S.~Vadhan.
\newblock Efficiency improvements in constructions of pseudorandom generators.
\newblock In \emph{Annual ACM Symposium on Theory of Computing (STOC)},
  2010{\natexlab{b}}.

\bibitem[Haitner et~al.(2020{\natexlab{a}})Haitner, Holenstein, Reingold,
  Vadhan, and Wee]{HaitnerHolReVaWe20}
I.~Haitner, T.~Holenstein, O.~Reingold, S.~P. Vadhan, and H.~Wee.
\newblock Inaccessible entropy {II:} {IE} functions and universal one-way
  hashing.
\newblock \emph{Theory of Computing}, 16:\penalty0 1--55, 2020{\natexlab{a}}.

\bibitem[Haitner et~al.(2020{\natexlab{b}})Haitner, Reingold, Vadhan, and
  Wee]{HaitnerReVaWe20}
I.~Haitner, O.~Reingold, S.~P. Vadhan, and H.~Wee.
\newblock Inaccessible entropy {I:} inaccessible entropy generators and
  statistically hiding commitments from one-way functions.
\newblock Technical Report 2010.05586, arXiv, 2020{\natexlab{b}}.

\bibitem[H{\aa}stad et~al.(1999)H{\aa}stad, Impagliazzo, Levin, and
  Luby]{HastadImLeLu99}
J.~H{\aa}stad, R.~Impagliazzo, L.~A. Levin, and M.~Luby.
\newblock A pseudorandom generator from any one-way function.
\newblock \emph{SIAM Journal on Computing}, 28\penalty0 (4):\penalty0
  1364--1396, 1999.

\bibitem[Holenstein and Renner(2011)]{HolRen11}
T.~Holenstein and R.~Renner.
\newblock On the randomness of independent experiments.
\newblock \emph{IEEE Transactions on Information Theory}, 57\penalty0
  (4):\penalty0 1865--1871, 2011.

\bibitem[Hub{\'a}cek et~al.(2017)Hub{\'a}cek, Naor, and Yogev]{HubacekNY2017}
P.~Hub{\'a}cek, M.~Naor, and E.~Yogev.
\newblock The journey from np to tfnp hardness.
\newblock In \emph{ACM Conference on Innovations in Theoretical Computer
  Science (ITCS)}. Schloss Dagstuhl-Leibniz-Zentrum fuer Informatik, 2017.

\bibitem[Impagliazzo and Levin(1990)]{ImpLev90}
R.~Impagliazzo and L.~Levin.
\newblock No better ways to generate hard {NP} instances than picking uniformly
  at random.
\newblock In \emph{Annual Symposium on Foundations of Computer Science (FOCS)},
  pages 812--821, 1990.

\bibitem[Impagliazzo and Luby(1989)]{ImpagliazzoLu89}
R.~Impagliazzo and M.~Luby.
\newblock One-way functions are essential for complexity based cryptography.
\newblock In \emph{Annual Symposium on Foundations of Computer Science (FOCS)},
  pages 230--235, 1989.

\bibitem[Katz and Koo(2005)]{KatzKo05}
J.~Katz and C.~Koo.
\newblock On constructing universal one-way hash functions from arbitrary
  one-way functions.
\newblock Technical Report 2005/328, Cryptology ePrint Archive, 2005.

\bibitem[Levin(1986)]{Levin86}
L.~A. Levin.
\newblock Average case complete problems.
\newblock \emph{SIAM Journal on Computing}, 15\penalty0 (1):\penalty0 285--286,
  1986.

\bibitem[Li and Vit{\'a}nyi(1992)]{LiVit92}
M.~Li and P.~M.~B. Vit{\'a}nyi.
\newblock Average case complexity under the universal distribution equals
  worst-case complexity.
\newblock \emph{Information Processing Letters}, 42\penalty0 (3):\penalty0
  145--149, 1992.

\bibitem[Naor and Yung(1989)]{NaorYu89}
M.~Naor and M.~Yung.
\newblock Universal one-way hash functions and their cryptographic
  applications.
\newblock In \emph{Annual ACM Symposium on Theory of Computing (STOC)}, pages
  33--43, 1989.

\bibitem[Nisan and Zuckerman(1996)]{NisanZu96}
N.~Nisan and D.~Zuckerman.
\newblock Randomness is linear in space.
\newblock \emph{Journal of Computer and System Sciences}, 52\penalty0
  (1):\penalty0 43--52, 1996.

\bibitem[Renner and Wolf(2004)]{RenWol04a}
R.~Renner and S.~Wolf.
\newblock Smooth {R}enyi entropy and applications.
\newblock In \emph{IEEE International Symposium on Information Theory --- ISIT
  2004}, page 233, 2004.

\bibitem[Rompel(1990{\natexlab{a}})]{Rompel90}
J.~Rompel.
\newblock One-way functions are necessary and sufficient for secure signatures.
\newblock In \emph{Annual ACM Symposium on Theory of Computing (STOC)}, pages
  387--394, 1990{\natexlab{a}}.

\bibitem[Rompel(1990{\natexlab{b}})]{Rompel90-thesis}
J.~Rompel.
\newblock \emph{Techniques for computing with low-independence randomness}.
\newblock PhD thesis, Massachusetts Institute of Technology,
  1990{\natexlab{b}}.

\bibitem[Santis and Yung(1990)]{SanYun90a}
A.~D. Santis and M.~Yung.
\newblock On the design of provably secure cryptographic hash functions.
\newblock In \emph{Annual International Conference on the Theory and
  Applications of Cryptographic Techniques (EUROCRYPT)}, 1990.

\bibitem[Shoup(2000)]{Shoup00:wuf}
V.~Shoup.
\newblock A composition theorem for universal one-way hash functions.
\newblock In \emph{Annual International Conference on the Theory and
  Applications of Cryptographic Techniques (EUROCRYPT)}, pages 445--452, 2000.

\bibitem[Valiant and Vazirani(1986)]{valiant1986np}
L.~G. Valiant and V.~V. Vazirani.
\newblock {NP} is as easy as detecting unique solutions.
\newblock \emph{Theoretical Computer Science}, 47:\penalty0 85--93, 1986.

\bibitem[Yu et~al.(2015)Yu, Gu, Li, and Weng]{YuLW2015}
Y.~Yu, D.~Gu, X.~Li, and J.~Weng.
\newblock ({A}lmost) optimal constructions of {UOWHF}s from 1-to-1, regular
  one-way functions and beyond.
\newblock In \emph{Annual International Cryptology Conference (CRYPTO)}, pages
  209--229, 2015.

\end{thebibliography}

\end{document}